\documentclass[pdflatex,sn-mathphys-ay]{sn-jnl}

\usepackage{graphicx}%
\usepackage{multirow}%
\usepackage{amssymb}
\usepackage{amsmath}
\usepackage{amsfonts}
\usepackage{mathrsfs}
\usepackage{mathtools}
\usepackage[title]{appendix}%
\usepackage{xcolor}%
\usepackage{textcomp}%
\usepackage{centernot}
\usepackage{manyfoot}%
\usepackage{booktabs}%
\usepackage{algorithm}%
\usepackage{algorithmicx}%
\usepackage{algpseudocode}%
\usepackage{listings}%
\usepackage{enumitem}

\usepackage{tikz}
\usetikzlibrary{backgrounds,arrows.meta,positioning,fit,calc,decorations.pathreplacing,shapes.multipart,shapes.geometric}
\usepackage{cleveref}

\theoremstyle{thmstyleone}
\newtheorem{theorem}{Theorem}[section]
\newtheorem{proposition}[theorem]{Proposition}
\newtheorem{lemma}[theorem]{Lemma}

\theoremstyle{definition}

\newtheorem{assumption}[theorem]{Assumption}
\theoremstyle{remark}
\newtheorem{remark}[theorem]{Remark}

\begin{document}

\title{Beyond Diagonal Noise: A Better Predator-Prey Modeling Framework with Cross-Covariance}

\author[1]{\fnm{Jiguang} \sur{Yu}}\email{jyu678@bu.edu}
\equalcont{These authors contributed equally to this work as co-first authors.}

\author*[2]{\fnm{Louis Shuo} \sur{Wang}}\email{swang116@vols.utk.edu}
\equalcont{These authors contributed equally to this work as co-first authors.}

\affil[1]{\orgdiv{College of Engineering},
  \orgname{Boston University},
  \orgaddress{\city{Boston}, \postcode{02215}, \state{MA}, \country{United States}}}

\affil[2]{\orgdiv{Department of Mathematics},
  \orgname{University of Tennessee},
  \orgaddress{\city{Knoxville}, \postcode{37996}, \state{TN}, \country{United States}}}

\abstract{
The introduction of stochasticity into continuous ecological models frequently relies on phenomenological, diagonal diffusion terms that lack a rigorous microscopic basis. We demonstrate that this standard practice fundamentally misrepresents the geometry of demographic fluctuations. By deriving a stochastic Rosenzweig--MacArthur model directly from an integer-valued, Bernoulli-coupled continuous-time Markov chain, we isolate the exact diffusion covariance structure dictated by event stoichiometry. We mathematically prove that coupled predation--conversion events inherently generate a structurally negative predator--prey cross-covariance, exposing the severe mathematical and biological limitations of standard diagonal-noise approximations. Furthermore, we resolve a persistent ambiguity in stochastic population modeling by explicitly formalizing the bifurcation between open-domain formulations (for survival-conditioned interior dynamics) and absorbed formulations (for extinction-permitting dynamics). To rigorously support this distinction, we develop a tailored two-stage Lyapunov well-posedness architecture that separates non-explosion criteria from boundary-barrier positivity invariance. By bridging microscopic event stoichiometry with macroscopic boundary-degenerate diffusions, this work replaces ad hoc noise constructs with a definitive, mathematically exact template for covariance-consistent and boundary-aware ecological modeling.}

\keywords{mathematical modeling, continuous-time Markov chain, boundary-degenerate diffusion, predator-prey dynamics, cross-covariance structure}

\maketitle

\section{Introduction}
\label{sec:introduction_paperA}

Predator--prey interactions constitute one of the foundational subjects of mathematical ecology. In particular, the Rosenzweig--MacArthur (R--M) model with Holling type~II functional response remains a central benchmark for studying coexistence, oscillatory dynamics, and the paradox of enrichment \citep{rosenzweig_graphical_1963,rosenzweig_paradox_1971,may_stability_2001,beay_stability_2019}. 
In the deterministic setting, the R--M model exhibits a rich bifurcation structure. As the prey carrying capacity increases, the stable coexistence equilibrium loses stability through a Hopf bifurcation, which produces sustained limit cycles and increases the risk of low-density excursions. 
This mechanism, in which enrichment destabilizes coexistence, has attracted sustained attention in both ecological theory and applied population biology \citep{gilpin_enriched_1972,arditi_coupling_1989,roy_stability_2007}.

\begin{figure*}[p]
\centering
\begin{tikzpicture}[
    font=\small,
    scale=0.55,
    transform shape,
    >=Latex,
    node distance=8mm and 10mm,
    line/.style={-Latex, thick},
    dashedline/.style={-Latex, thick, dashed},
    box/.style={
        draw, rounded corners=2.5mm, thick,
        align=left, inner sep=4mm,
        minimum height=10mm,
        text width=4.4cm
    },
    widebox/.style={
        draw, rounded corners=2.5mm, thick,
        align=left, inner sep=4mm,
        minimum height=10mm,
        text width=9.4cm
    },
    pill/.style={
        draw, rounded corners=6mm, thick, align=center,
        inner sep=2.5mm, minimum height=8mm
    },
    decision/.style={
        diamond, draw, thick, aspect=2.0,
        align=center, inner sep=1.8mm, text width=3.2cm
    },
    smallnote/.style={align=left, inner sep=1.5mm, font=\scriptsize},
    titlebox/.style={
        draw, rounded corners=3mm, thick, fill=black!3,
        align=center, inner sep=3mm, text width=14.4cm
    }
]

\node[titlebox] (title) {
\textbf{Mechanistic stochastic R--M framework:}
from event-level CTMC/CME to full-covariance diffusion, boundary-aware formulations, and two-stage well-posedness
};

\node[box, fill=blue!6, below=10mm of title, xshift=-5.0cm] (det) {
\textbf{Deterministic backbone (R--M + Holling II)}\\
\(\dot N = N(1-N/k)-\frac{mNP}{1+N}\),\quad
\(\dot P = P(-c+\frac{mN}{1+N})\)\\
Coexistence \(K_3\), Hopf threshold \(k_H=\frac{m+c}{m-c}\)\\
\(\Lambda_2\): stable coexistence,\quad \(\Lambda_1\): oscillatory regime
};

\node[box, fill=green!6, right=10mm of det, text width=4.8cm] (motivation) {
\textbf{Motivation near Hopf}\\
Deterministic restoring force weakens\\
\(\Rightarrow\) demographic fluctuations amplified\\
\(\Rightarrow\) quasi-cycles / extinction risk elevated
};

\node[widebox, fill=orange!8, below=10mm of det, xshift=2.55cm] (ctmc) {
\textbf{Mechanistic count-level model (CTMC \(\rightarrow\) CME)}\\
State \(X=(X_N,X_P)\in\mathbb N_0^2\), density \(x=X/\Omega\).\\
Bernoulli-coupled predation--conversion channels (integer-valued jumps): prey birth, competition death, predator death, predation with/without conversion.\\
Exact generator/CME implies count-level infinitesimal covariance:
\[
G(X)=\sum_k \lambda_k(X)\,\nu_k\nu_k^\top
= S\,\mathrm{diag}(\lambda(X))\,S^\top.
\]
};

\node[widebox, fill=purple!7, below=9mm of ctmc] (cle) {
\textbf{CME-consistent diffusion (CLE) and covariance identity}\\
Under density-dependent scaling \(\lambda_k(X)=\Omega f_k(X/\Omega)\):
\[
dx_t = b(x_t)\,dt+\Sigma(x_t)\,dW_t,\qquad
b(x)=Sf(x),
\]
\[
a(x)=\Sigma(x)\Sigma(x)^\top
=\frac{1}{\Omega}\,S\,\mathrm{diag}(f(x))\,S^\top.
\]
\textbf{Key principle:} drift equivalence \(\not\Rightarrow\) covariance equivalence (channel decomposition matters at second order).
};

\node[box, fill=red!8, below left=9mm and 0mm of cle, text width=6cm] (coupled) {
\textbf{Coupled predation closure (mechanistic baseline)}\\
Same-event prey loss + predator gain\\
\[
(a_{\mathrm{pred}})_{12}(x)
=
-\frac{1}{\Omega}e\frac{mNP}{1+N}<0
\quad (x\in D)
\]
\textbf{Structural negative cross-covariance}\\
\(\Rightarrow\) tilted local covariance ellipses
};

\node[box, fill=gray!10, below=9mm of cle, text width=4.7cm] (split) {
\textbf{Split-channel comparator (diagonal-noise proxy)}\\
Prey removal and predator reproduction split into independent channels\\
\[
(a_{\mathrm{pred}}^{(S)})_{12}(x)=0
\]
Matches drift, misses fluctuation geometry\\
(axis-aligned ellipses)
};

\node[box, fill=yellow!10, below right=9mm and 0mm of cle, text width=4.7cm] (diag) {
\textbf{Practical diagnostic for diagonalization}\\
Only use diagonal-noise approximation when
\[
\rho(x)=\frac{|a_{12}(x)|}{\sqrt{a_{11}(x)a_{22}(x)}}\ll 1
\]
(e.g. low \(e\), weak predation, restricted region)
};

\node[decision, fill=cyan!8, below=12mm of split, yshift=-1mm] (branch) {
Same local\\coefficients \(b,a\),\\but which global\\formulation?
};

\node[box, fill=cyan!6, below left=10mm and 13mm of branch, text width=5.0cm] (open) {
\textbf{Open-domain formulation on \(D=(0,\infty)^2\)}\\
Interior coexistence dynamics (survival-conditioned)\\
Targets: local fluctuation geometry, linear noise approximation (LNA)/stochastic sensitivity function (SSF), spectral structure, near-\(K_3\) analysis
};

\node[box, fill=teal!8, below right=10mm and 13mm of branch, text width=5.0cm] (absorb) {
\textbf{Absorbed formulation on \(Q\)}\\
Freeze at first boundary hit \(\tau_{\partial D}\)\\
Targets: extinction probabilities, persistence times, quasi-stationary behavior\\
(CTMC extinction irreversibility analogue)
};

\node[widebox, fill=green!8, below=10mm of open, xshift=2.65cm] (wp) {
\textbf{Two-stage well-posedness architecture (for open-domain SDE)}\\
\textbf{Stage 1:} maximal strong solution + no interior blow-up before boundary hitting (Lyapunov anti-explosion)\\
\textbf{Stage 2:} positivity invariance (\(\tau_{\partial D}=\infty\) a.s.) only under an additional boundary-barrier Lyapunov condition\\
Message: positivity is not baked into baseline well-posedness; it is a separate property aligned with modeling purpose.
};

\node[widebox, fill=blue!4, below=10mm of wp] (num) {
\textbf{Numerical diagnostics / visual consequences}\\
stochastic simulation algorithm (SSA) vs ODE fluctuation scale \(\sim O(\Omega^{-1/2})\) \quad $\bullet$ \quad
Covariance ellipses (tilted vs axis-aligned) \quad $\bullet$ \quad
Cross-covariance heatmap \(\rho(K_3)\) \quad $\bullet$ \quad
Absorbed vs open-domain boundary behavior \quad $\bullet$ \quad
Regime organization relative to Hopf threshold (\(\Lambda_2,\Lambda_1\))
};

\node[pill, fill=red!6, below=10mm of num, text width=14.2cm] (takehome) {
\textbf{Take-home message:}
In stochastic predator--prey diffusion models, covariance is a mechanistic output of event stoichiometry (not a stylistic noise choice); 
boundary treatment defines a modeling bifurcation; and well-posedness should be stated in a boundary-aware, two-stage architecture.
};

\draw[line] (det.east) -- (motivation.west);
\draw[line] (det.south) |- ([xshift=-4.9cm]ctmc.north);
\draw[line] (motivation.south) |- ([xshift=4.9cm]ctmc.north);

\draw[line] (ctmc.south) -- (cle.north);

\draw[line] (cle.south west) -- (coupled.north);
\draw[line] (cle.south) -- (split.north);
\draw[line] (cle.south east) -- (diag.north);

\draw[line] (coupled.south) |- (branch.west);
\draw[line] (split.south) -- (branch.north);
\draw[line] (diag.south) |- (branch.east);

\draw[line] (branch.south west) -- (open.north);
\draw[line] (branch.south east) -- (absorb.north);

\draw[line] (open.south) |- ([xshift=-4.6cm]wp.north);
\draw[dashedline, color=teal!70!black] (absorb.south) |- ([xshift=4.6cm]wp.north)
    node[pos=0.28, right, smallnote] {different observables \\ same local coefficients};

\draw[line] (wp.south) -- (num.north);
\draw[line] (num.south) -- (takehome.north);

\node[smallnote, anchor=south west, text=red!70!black] at ($(coupled.north west)+(1mm,1mm)$) {\textbf{C1 + C2}};
\node[smallnote, anchor=south west, text=cyan!60!black] at ($(branch.north west)+(1mm,1mm)$) {\textbf{C3}};
\node[smallnote, anchor=south west, text=green!50!black] at ($(wp.north west)+(1mm,1mm)$) {\textbf{C4}};

\begin{scope}[on background layer]
\node[draw=orange!45, rounded corners=3mm, thick, inner sep=3mm,
      fit=(ctmc)(cle)(coupled)(split)(diag),
      fill=orange!2] {};
\node[draw=cyan!40!black, rounded corners=3mm, thick, inner sep=3mm,
      fit=(branch)(open)(absorb)(wp),
      fill=cyan!2] {};
\end{scope}

\end{tikzpicture}

\caption{\textbf{Mechanistic roadmap and modeling bifurcation for the stochastic R--M framework.}
The diagram summarizes the paper's logic from the deterministic R--M backbone and Hopf organization, through a mechanistic CTMC/CME derivation of a full-covariance CLE, to the structural negative predator--prey cross-covariance induced by coupled predation--conversion. It then highlights the explicit modeling bifurcation between open-domain and absorbed diffusion formulations built from the same local coefficients, and the corresponding two-stage well-posedness architecture for the open-domain SDE.}
\label{fig:mechanism_summary_tikz}
\end{figure*}
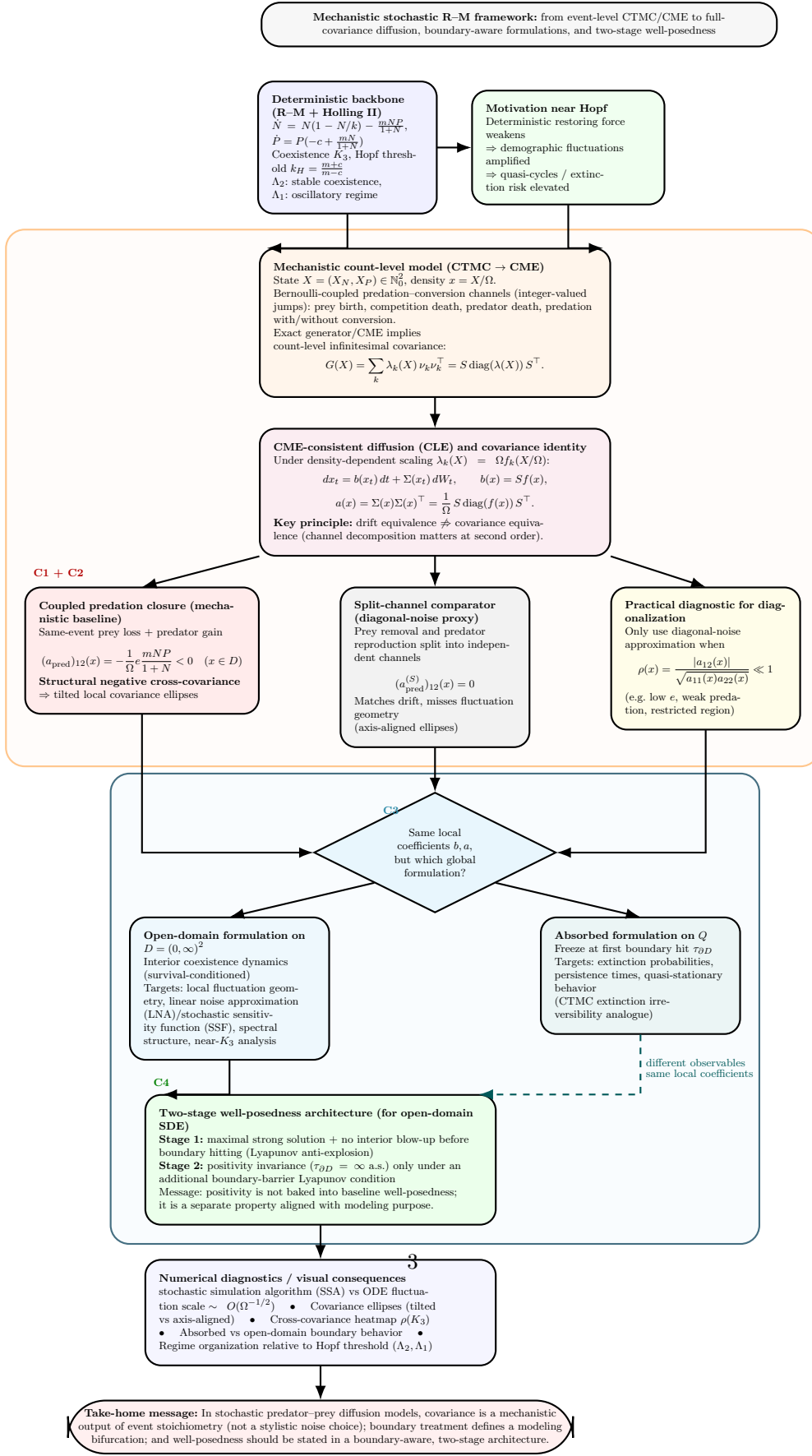

For finite populations, however, the deterministic ordinary differential equation (ODE) is merely a mean-field approximation. 
Demographic stochasticity, which arises from randomness in individual birth, death, and predation events, introduces fluctuations that can fundamentally alter quantitative predictions. 
Even when the deterministic coexistence equilibrium is locally stable, demographic noise can generate quasi-cycles and facilitate noise-induced transitions toward extinction \citep{mckane_predator-prey_2005,pinedakrch_tale_2007,black_stochastic_2012}. 
These phenomena are especially pronounced near the Hopf threshold, where the deterministic restoring force weakens and fluctuation amplification becomes large.
A rigorous analysis of these dynamics requires a stochastic description that is both mechanistically faithful to the underlying individual-level events and mathematically tractable for asymptotic study.

In this paper, we develop precisely such a description. 
We begin with an integer-valued continuous-time Markov chain (CTMC) formulation of the R--M system that includes explicit Bernoulli-coupled predation--conversion events. From this formulation, we derive a chemical master equation (CME) and, subsequently, a demographic diffusion approximation in the form of a chemical Langevin equation (CLE).
Crucially, the drift and covariance of this CLE are strictly dictated by event stoichiometry and channel intensities. 
A central mathematical consequence of this derivation is that the mechanistically correct diffusion covariance is inherently non-diagonal: coupled predation--conversion events induce a structurally negative predator--prey cross-covariance.
This geometric feature is absent from the phenomenological diagonal or multiplicative noise constructions that dominate the literature.
We prove that this cross-covariance is a direct algebraic consequence of the stoichiometric coupling between prey loss and predator gain within the same demographic event, rather than an ad hoc modeling embellishment.

Beyond the mechanistic derivation of the noise structure, a primary aim of this work is to address a persistent source of ambiguity in stochastic predator--prey modeling. This ambiguity arises from the routine conflation of two distinct diffusion formulations built from the same local stochastic differential equation (SDE) coefficients. 
We establish that an open-domain formulation on the interior $(0,\infty)^2$ provides the appropriate framework for studying interior coexistence dynamics. This framework captures fluctuation geometry near equilibrium, survival-conditioned statistics, and local spectral structure. 
Conversely, an absorbed formulation, in which trajectories are frozen upon first contact with the boundary, provides the rigorous framework for extinction-permitting dynamics. 
This formulation addresses questions about extinction probabilities, mean persistence times, and quasi-stationary distributions, reflecting the irreversibility of CTMC extinction. 
These are not competing models for the same quantity; they answer fundamentally different biological questions. We argue that their distinction must be stated explicitly to avoid both modeling ambiguity and overreach in well-posedness claims.

To rigorously support the open-domain formulation, we develop a two-stage well-posedness architecture tailored to the boundary-degenerate structure of demographic diffusions. 
The first stage establishes a maximal strong solution together with the absence of interior blow-up before boundary hitting, strictly without assuming positivity invariance. 
The second stage demonstrates that positivity invariance (i.e., the boundary is not hit in finite time almost surely) only follows under an additional, verifiable boundary-barrier Lyapunov condition. 
Separating these two stages is both mathematically precise and biologically meaningful. 
It prevents the tacit assumption of absolute survival from being embedded into a baseline well-posedness statement and clarifies that the absorbed and open-domain formulations differ not in their local SDE specification but in their global boundary treatment.

Ultimately, this paper provides a definitive template for boundary-aware, covariance-consistent stochastic modeling.
The current work is structured to build this framework systematically, and the main contributions of this paper are as follows:
\begin{enumerate}[label=(C\arabic*)]
\item Mechanistic derivation of a full-covariance demographic diffusion.
Starting from an integer-valued CTMC with Bernoulli-coupled predation--conversion events, we rigorously derive the CME and its associated CLE. 
This establishes that the diffusion covariance,
\[
a(x)=\frac{1}{\Omega}\,S\,\mathrm{diag}(f(x))\,S^\top,
\]
is strictly dictated by event stoichiometry and channel intensities, replacing phenomenological noise constructs with a purely mechanistic foundation.
\item Identification of a structurally negative predator--prey cross-covariance.
We mathematically isolate the effect of coupled predation--conversion events, proving they inherently generate a strictly negative cross-covariance, $a_{12}(x) < 0$, on the interior domain $D = (0,\infty)^2$. 
We demonstrate that drift-matched split-channel closures fail to capture this geometric feature. This result formalizes the limitations of diagonal-noise approximations and identifies regimes where they do not accurately represent the local fluctuation structure.
\item Formalization of the open-domain versus absorbed modeling bifurcation.
We address a persistent ambiguity in stochastic ecological modeling by explicitly separating two global diffusion formulations built from the same local CME-consistent coefficients. 
We distinguish an open-domain formulation tailored for interior, survival-conditioned coexistence dynamics from an absorbed formulation designed for extinction-permitting dynamics, ensuring the mathematical framework aligns with the targeted biological observables.
\item A two-stage well-posedness architecture for boundary-degenerate diffusions.
We develop a tailored analytical framework for the open-domain formulation that separates non-explosion from boundary avoidance. 
First, we establish maximal strong solvability without interior blow-up before boundary contact via a Lyapunov anti-explosion condition. 
Second, we establish positivity invariance as a distinct property dependent on an additional boundary-barrier condition, preventing the tacit assumption of survival from being embedded in baseline well-posedness claims.
\end{enumerate}

Figure~\ref{fig:mechanism_summary_tikz} presents the mechanistic roadmap and modeling bifurcation for the R--M model. The structure of this work is as follows.
Section~\ref{sec:literature_paperA} reviews the relevant literature and positions the present work relative to existing stochastic predator--prey frameworks, CME/CLE derivations, and demographic diffusion well-posedness results. 
Section~\ref{sec:deterministic_backbone} records the deterministic R--M backbone. Sections~\ref{sec:ctmc_cme_paperA}--\ref{sec:diffusion_covariance_paperA} develop the CTMC/CME formulation and the full-covariance diffusion derivation. 
Section~\ref{sec:formulations_paperA} formalizes the open-domain and absorbed formulation distinction. Section~\ref{sec:wellposed_paperA} develops the two-stage well-posedness framework. 
Section~\ref{sec:numerical_simulation} presents the numerical simulations that verify the boundary-aware, covariance-consistent modeling structure.
Section~\ref{sec:discussion_paperA} discusses implications and scope. Appendices provide full derivations of the CME-to-diffusion reduction and complete proofs of the well-posedness theorems.

\section{Literature review and positioning}
\label{sec:literature_paperA}

The present work connects several partially overlapping bodies of literature: 
(i) stochastic predator--prey modeling and its mechanistic foundations, 
(ii) demographic-noise amplified population oscillation,
(iii) CME and chemical Langevin methods in population biology,
and (iv) well-posedness and boundary analysis for diffusion processes arising in ecology and mathematical biology. 
We review each part in turn and identify the specific gaps that motivate the contributions of this paper.

\subsection{Stochastic predator--prey models: noise structures and their origins}

The introduction of noise into predator--prey systems has a long history. 
Early works by May \citep{may_stability_2001} and Gard \citep{gard_persistence_1984} studied stochastic Lotka--Volterra-type equations with environmental noise. 
They establish that random fluctuations can qualitatively alter or preserve persistence and coexistence depending on noise intensity. 
Subsequently, a large literature developed stochastic R--M models and related systems, using both It\^o and Stratonovich SDE formulations with various noise structures \citep{yuan_bifurcation_2023,barraquand_no_2023,braumann_variable_2002,braumann_ito_2007,liu_survival_2011,zhao_survival_2015}.

A common modeling approach in this literature is to introduce multiplicative noise directly at the SDE level, typically by adding separate, independent Wiener processes to the prey and predator equations:
\begin{equation*}
dN = f(N,P)\,dt + \sigma_1 N\,dW_1,\qquad
dP = g(N,P)\,dt + \sigma_2 P\,dW_2,
\end{equation*}
or variants thereof \citep{huang_stochastic_2021,du_conditions_2016,qi_threshold_2021,yasin_spatio-temporal_2023,liu_dynamics_2018}. 
Such constructions are mathematically tractable and allow for clear persistence and extinction analyses.
However, they remain fundamentally phenomenological rather than mechanistic because the noise intensities $\sigma_1$ and $\sigma_2$ are free parameters not derivable from event stoichiometry, and the independence of the two noise sources is imposed rather than justified. 
In particular, a predation event that simultaneously removes a prey individual and possibly adds a predator individual counts as a single demographic event. 
Its natural diffusion-level representation must involve a coupled, non-diagonal fluctuation increment rather than two independent ones. 
This mechanistic structure is inherently absent from diagonal noise models, and quantifying the severe consequences of this omission is a primary motivation for the present work.

A different strand of stochastic ecological modeling introduces noise through environmental (extrinsic) stochasticity rather than demographic (intrinsic) stochasticity \citep{wu_stochastic_2019,chatterjee_predatorprey_2024,das_modelling_2021,belabbas_rich_2021}. 
Environmental noise models capture variation in parameters (e.g., carrying capacity or predation rate) driven by external fluctuations, and they also typically produce diagonal or block-diagonal noise structures. 
This paper focuses exclusively on demographic noise, arising from the discreteness of individual events in finite populations, and does not incorporate environmental noise. 
This distinction is standard in the theory of stochastic population models \citep{newman_extinction_2004,ovaskainen_stochastic_2010,wang_analysis_2025,dobramysl_environmental_2013} and is strictly maintained throughout.

Several authors have emphasized the importance of the mechanistic derivation of noise structure in population models. 
Renshaw \citep{renshaw_modelling_1995} and Allen \citep{allen_introduction_2010} discuss CTMC formulations of birth--death--immigration processes and their diffusion limits. 
Naghnaeian and Del Vecchio \citep{naghnaeian_robust_2017} and Wang \citep{wang_analysis_2025-1} develop moment approximations for interacting population models starting from the CME. 
Recently, Black and McKane \citep{black_stochastic_2012} provide an accessible account of CME-based stochastic modeling in ecology, while Rogers et al.\ \citep{rogers_demographic_2012} use CME methods to study demographic fluctuations in multi-species systems. 
In these works, the importance of deriving the diffusion covariance from event stoichiometry rather than imposing it phenomenologically is recognized.
However, the specific structural consequences for predator--prey covariance geometry and their implications for boundary behavior have not been systematically unraveled.

\subsection{Quasi-cycles, stochastic amplification, and resonance near Hopf bifurcations}

The phenomenon of quasi-cycles in predator--prey and other ecological systems, which are stochastic oscillations in the stable regime driven by the amplification of demographic fluctuations, was clarified theoretically by McKane and Newman \citep{mckane_predator-prey_2005}.
They used an individual-level model for a predator--prey system to demonstrate this behavior. 
Their analysis showed that demographic noise can produce power spectral peaks at a characteristic frequency, even when the deterministic equilibrium is stable.
This result provides a mechanistic explanation for observed population oscillations without invoking nonlinear limit cycles. 
This result was extended to R--M-type systems by Barraquand \citep{barraquand_no_2023}, to spatial models by Butler and Goldenfeld \citep{butler_robust_2009,butler_fluctuation-driven_2011}, and to systems with multiple interacting species by Biancalani et al.\ \citep{biancalani_noise-induced_2012}. Theoretical tools designed to capture these phenomena, such as the linear noise approximation (LNA), rely explicitly on the evaluation of the diffusion covariance matrix at equilibrium, $D_* = a(K_3)$, to quantify these spectral peaks. However, without a bottom-up derivation, $D_*$ is often misspecified as diagonal.

On the other hand, stochastic sensitivity analysis for nonlinear stochastic systems has been developed by Bashkirtseva and colleagues \citep{bashkirtseva_stochastic_2004,bashkirtseva_stochastic_2014,bashkirtseva_stochastic_2018}, including applications to mathematical ecology models. 
Their stochastic sensitivity function (SSF) formalism provides a geometric description of fluctuation amplification via Lyapunov-equation-based covariance analysis, and quantifies the local propensity for noise-induced transitions.
Consequently, noisy precursors and early-warning signals near tipping points and bifurcations have attracted substantial interest as indicators of regime shifts in ecological systems \citep{scheffer_early-warning_2009,dakos_slowing_2008,boettiger_quantifying_2012,carpenter_early_2011,lenton_early_2011,boettiger_no_2013}. 
Our rigorous identification of the negative cross-covariance provides the exact geometric input needed for these sensitivity and early-warning frameworks.
This examines whether risk assessments of noise-induced extinction is fundamentally biased by diagonal-noise assumptions.

\subsection{CME and Langevin methods in population biology}

The CME and the associated CLE derived via van Kampen system-size expansions are classical \citep{van_kampen_stochastic_2007,gardiner_stochastic_2009,gillespie_chemical_2000}. 
Gillespie \citep{gillespie_chemical_2000} gives a rigorous derivation of the CLE from the CME under a ``many firings'' condition, yielding a diffusion approximation whose covariance is the density-rescaled version of the exact CTMC infinitesimal covariance. 
The key formula
\[
a(x)=\frac{1}{\Omega}\,S\,\mathrm{diag}(f(x))\,S^\top
\]
appears in this context as a direct mathematical consequence of the event stoichiometry and propensities; see also Kampen \citep{van_kampen_stochastic_2007} and Elf and Ehrenberg \citep{elf_fast_2003}.

Applications of the CME/CLE framework to ecological population models have been developed by Allen \citep{allen_introduction_2010}, Black and McKane \citep{black_stochastic_2012}, Ovaskainen and Meerson \citep{ovaskainen_stochastic_2010}, and others. 
In predator--prey contexts specifically, Pineda-Krch et al.\ \citep{pineda-krch_gillespiessa_2009} apply Gillespie SSA to predator--prey systems; Alonso et al.\ \citep{alonso_stochastic_2007} study demographic stochasticity and critical transitions using CME-based approximations. 
While the present paper builds upon this robust CLE/CME framework, it pivots from general moment closures to focus specifically on: 
(i) the mechanistic isolation of the non-diagonal covariance structure, 
(ii) the explicit comparison between coupled and split-channel CTMC closures, 
and (iii) the rigorous treatment of boundary behavior in the resulting diffusion.

\subsection{Well-posedness and boundary analysis for population diffusions}

The well-posedness of SDEs arising in mathematical biology has been studied extensively. This is especially true for models with boundary-degenerate or state-constrained diffusion.  
For one-dimensional diffusions, Feller's boundary classification \citep{feller_parabolic_1952,feller_diffusion_1954} provides explicit conditions for the accessibility and attainability of boundaries in terms of drift--diffusion balance. 
Applications to population genetics (Wright--Fisher diffusion) are classical \citep{ethier_markov_1986,hofrichter_diffusion_2014,ewens_mathematical_2004}. 
For square-root diffusions (CIR processes) \citep{cox_theory_1985}, the Feller condition $2\kappa\theta\ge\sigma^2$ governs whether the boundary is accessible, and is widely used in financial mathematics and population modeling.

In two or higher dimensions, direct analogues of the Feller criterion are generally unavailable because boundary attainability depends on the full nonlinear drift--diffusion interaction and may differ across distinct boundary faces. 
Lyapunov function methods provide the standard replacement: non-explosion and positivity are established by finding a Lyapunov function satisfying a linear growth bound of the generator \citep{khasminskii_stochastic_2012,meyn_stability_1993,mao_stochastic_2008}. 
These methods have been used for stochastic Lotka--Volterra systems and related models \citep{cui_long-term_2026,zhao_stochastic_2024,zhan_dynamical_2024,liu_asymptotic_2020}, typically in the context of global well-posedness and long-time behavior under various noise conditions. A relevant model is the mathematical Leslie–Gower system, where a stochastic extension driven by demographic noise is established \citep{wang_algebraicspectral_2026}, but the well-posedness analysis is missing.  

Our work adopts this Lyapunov perspective for the boundary-degenerate demographic diffusion arising from the R--M CTMC, but introduces a critical structural departure from the existing literature: a deliberate, two-stage separation of well-posedness claims.
We first establish a maximal strong solution with no interior blow-up before boundary hitting (Theorem~\ref{thm:maximal_no_interior_explosion_paperA}), and only then establish positivity invariance under an additional barrier condition (Proposition~\ref{prop:barrier_invariance_paperA}). 
This separation is explicitly motivated by the modeling bifurcation between open-domain and absorbed formulations.
It also prevents positivity invariance, which is biologically relevant only in the open-domain formulation, from being inadvertently embedded as a baseline mathematical assumption in extinction-permitting modeling.

\subsection{Gaps and positioning of this paper}

In summary, the present paper addresses three specific gaps at the intersection of the literature reviewed above.

\paragraph{Gap 1: Mechanistic non-diagonal covariance in stochastic predator--prey diffusions.}
The CLE/CME framework is well established in the theoretical literature, and the covariance identity $a(x) = \frac{1}{\Omega}S\,\mathrm{diag}(f(x))S^\top$ is mathematically known. However, the specific structural implications for predator--prey models have rarely been isolated and analyzed as a primary modeling object.
In particular, the structural negativity of the predator--prey cross-covariance induced by coupled predation--conversion has seldom been examined in detail.
In most applied stochastic R--M works, the noise structure is specified directly at the SDE level without reference to event stoichiometry, obscuring the distinction between coupled and split-channel closures. The present paper systematically closes this gap.

\paragraph{Gap 2: Formulation distinction between open-domain and absorbed diffusions.}
The distinction between interior, survival-conditioned dynamics and extinction-permitting dynamics is heavily implicit in the ecological modeling literature, but is rarely stated as an explicit modeling bifurcation at the fundamental SDE level. 
Well-posedness results for stochastic population diffusions typically either prove positivity invariance, which implicitly enforces an open-domain formulation \citep{cui_long-term_2026,zhao_stochastic_2024,zhan_dynamical_2024,liu_asymptotic_2020},
or study the absorbed/exit-time problem directly \citep{nasell_extinction_2001,cattiaux_quasi-stationary_2009}.
These studies generally do not connect the two approaches as alternative global formulations built from the same local coefficients. 
We formalize this distinction, demonstrating its profound implications for rigorous mathematical analysis and proper ecological interpretation.

\paragraph{Gap 3: Two-stage well-posedness architecture separating non-explosion from positivity invariance.}
Standard well-posedness results for stochastic ecological diffusions typically either prove global well-posedness (including positivity invariance) in a single step  \citep{cui_long-term_2026,zhao_stochastic_2024,zhan_dynamical_2024,liu_asymptotic_2020}, or study local solutions without explicitly discussing positivity. 
The deliberate two-stage separation is a novel architectural choice.
First, non-explosion is established (Theorem~\ref{thm:maximal_no_interior_explosion_paperA}), 
and then positivity invariance is isolated under a distinct barrier condition (Proposition~\ref{prop:barrier_invariance_paperA}).
This approach is motivated by the need to maintain the open-domain/absorbed modeling bifurcation and prevent mathematical overclaiming. 
We elevate this separation into a primary organizing principle for multidimensional demographic diffusions.

\section{Deterministic R--M backbone}
\label{sec:deterministic_backbone}

This section records the deterministic R--M structure used as the background for the stochastic construction developed later. 

We consider the nondimensional R--M predator--prey system with Holling type~II predation \citep{grunert_evolutionarily_2021}:
\begin{equation}
\label{eq:RM_ODE_paperA}
\begin{cases}
\dfrac{dN}{dt} = N\!\left(1-\dfrac{N}{k}\right) - \dfrac{mNP}{1+N}, \\[8pt]
\dfrac{dP}{dt} = P\!\left(-c + \dfrac{mN}{1+N}\right),
\end{cases}
\qquad (N(t),P(t))\in Q:=\{(N,P):N\ge 0,\ P\ge 0\},
\end{equation}
where \(N\) and \(P\) denote prey and predator densities, respectively, and \(k>0\), \(m>0\), \(c>0\) are the (scaled) carrying-capacity, predation/assimilation, and predator-mortality parameters. The biologically relevant closed state space is the positive quadrant \(Q\), while the open quadrant
\begin{equation}
\label{eq:positive_quadrant_paperA}
D:=(0,\infty)^2
\end{equation}
will be the ambient domain for the open-domain diffusion formulation in Sections~\ref{sec:formulations_paperA}--\ref{sec:wellposed_paperA}. The vector field in \eqref{eq:RM_ODE_paperA} is smooth on \(Q\) (indeed on \(\{N>-1,\ P\in \mathbb R\}\)).

We work with the standard reduced nondimensional R--M form in which the predator conversion factor is not written explicitly in the $e\frac{mNP}{1+N}$ term in the $P$ equation in \eqref{eq:RM_ODE_paperA}. In the mechanistic stochastic derivation (Section~\ref{sec:ctmc_cme_paperA}), this corresponds to the normalization \(e=1\) when matching the reduced deterministic parameterization used throughout the analysis.

System~\eqref{eq:RM_ODE_paperA} has three equilibria:
\begin{align}
K_1 &= (0,0), \label{eq:K1_paperA}\\
K_2 &= (k,0), \label{eq:K2_paperA}\\
K_3 &= (N^*,P^*) 
= \left(\frac{c}{m-c},\ \frac{k(m-c)-c}{k(m-c)^2}\right). \label{eq:K3_paperA}
\end{align}
The coexistence equilibrium \(K_3\) lies in \(D\) if and only if
\begin{equation}
\label{eq:coexistence_feasibility_paperA}
m>c
\qquad \text{and} \qquad
k(m-c)>c.
\end{equation}
The first condition ensures that the predator per-capita growth term
\[
-c+\frac{mN}{1+N}
\]
can become nonnegative for sufficiently large prey density, while the second condition is exactly the positivity condition \(P^*>0\). In the present paper, \(K_3\) serves as the deterministic coexistence reference state for the mechanistic stochastic modeling and the boundary-aware diffusion formulations.

Let
\[
f(N,P)=N\!\left(1-\frac{N}{k}\right)-\frac{mNP}{1+N},
\qquad
g(N,P)=P\!\left(-c+\frac{mN}{1+N}\right).
\]
The Jacobian matrix is
\begin{equation}
\label{eq:Jacobian_paperA}
J(N,P)=
\begin{pmatrix}
1-\dfrac{2N}{k}-\dfrac{mP}{(1+N)^2} & -\dfrac{mN}{1+N}\\[10pt]
\dfrac{mP}{(1+N)^2} & -c+\dfrac{mN}{1+N}
\end{pmatrix}.
\end{equation}
At the coexistence equilibrium \(K_3\), one has \(\det J(K_3)>0\) whenever \(K_3\in D\), so local stability is determined by the trace. A direct calculation gives the classical enrichment-driven Hopf threshold \citep{grunert_evolutionarily_2021}
\begin{equation}
\label{eq:Hopf_threshold_paperA}
k_H=\frac{m+c}{m-c}.
\end{equation}
Hence \(K_3\) is locally asymptotically stable for \(k<k_H\) and unstable for \(k>k_H\), with a Hopf bifurcation at \(k=k_H\). We primarily use \eqref{eq:Hopf_threshold_paperA} to organize parameter regimes
\begin{equation}
\label{eq:Lambda_regions_paperA}
\Lambda_2 \coloneqq \{(m,c,k): k<k_H\},
\qquad
\Lambda_1 \coloneqq \{(m,c,k): k>k_H\}.
\end{equation}
Figure~\ref{fig:deterministic_backbone} illustrates the bifurcation diagram in the carrying capacity, and the phase portraits for the two regimes $\Lambda_2$ (stable coexistence) and $\Lambda_1$ (unstable oscillation), showing the Hopf bifurcation.

\begin{figure}[htbp]
    \centering
    \includegraphics[width=\textwidth]{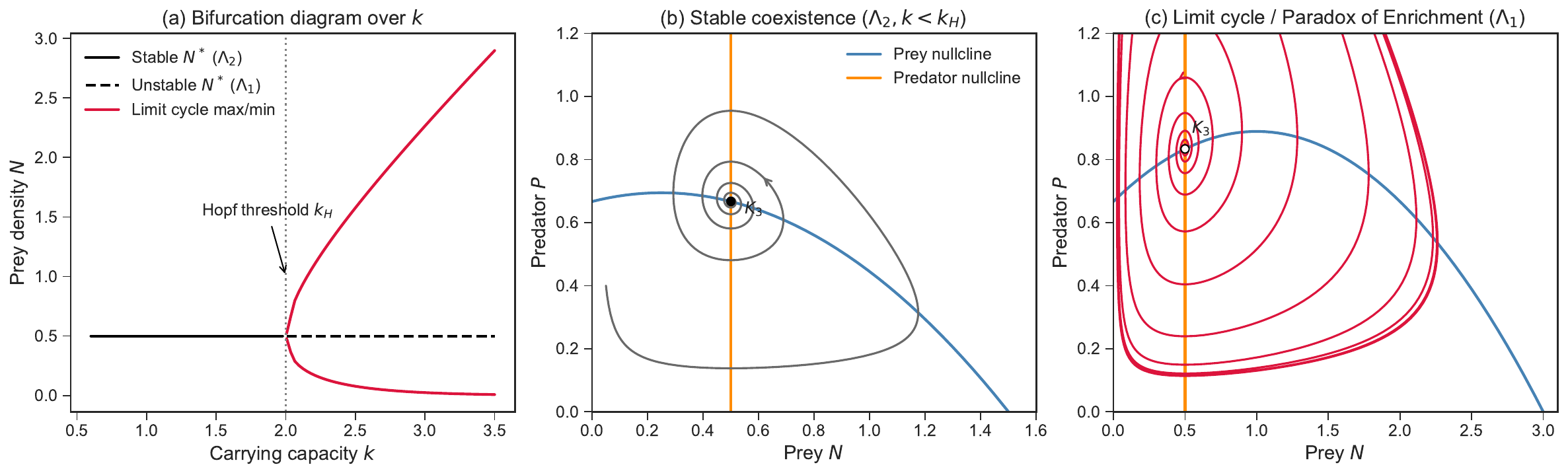}
    \caption{\textbf{Deterministic R--M backbone and the Hopf bifurcation.} 
    (a) Bifurcation diagram over the carrying capacity $k$. The coexistence equilibrium $K_3$ loses stability at the Hopf threshold $k_H$, partitioning the parameter space into the stable regime $\Lambda_2$ and the oscillatory regime $\Lambda_1$. Solid curves denote the extrema of the stable limit cycle, and dashed lines denote the unstable equilibrium. 
    (b) Phase portrait in the stable regime $\Lambda_2$ ($k < k_H$). The predator nullcline (vertical) intersects the prey nullcline (humped) to the right of its peak, resulting in trajectories spiraling inward to a stable coexistence state $K_3$. 
    (c) Phase portrait in the oscillatory regime $\Lambda_1$ ($k > k_H$). Enrichment shifts the intersection to the left of the peak, destabilizing $K_3$ and generating a stable limit cycle that drives large-amplitude, low-density excursions.}
    \label{fig:deterministic_backbone}
\end{figure}

\section{Mechanistic CTMC and CME}
\label{sec:ctmc_cme_paperA}

We now formulate an integer-valued CTMC for the R--M predator--prey system and record the associated CME. The purpose of this section is to make the event mechanism explicit at the count level and to identify the exact infinitesimal covariance in count variables. This count-scale covariance is the object that will determine the diffusion covariance in Section~\ref{sec:diffusion_covariance_paperA} after density rescaling.

Let
\[
X(t)=\bigl(X_N(t),X_P(t)\bigr)^\top \in \mathbb N_0^2
\]
denote prey and predator counts in a well-mixed population of system size \(\Omega\) (e.g.,\ habitat volume or scaling parameter), and define the corresponding density process
\[
x(t)=\frac{X(t)}{\Omega}=\bigl(N(t),P(t)\bigr)^\top\in\mathbb R_+^2.
\]
Because the exact CTMC is integer-valued, all jump increments in \(X\) must lie in \(\mathbb Z^2\). In particular, a compact predation--conversion increment of the form \(({-1},e)^\top\) with \(e\in(0,1]\) is not an admissible exact CTMC jump unless \(e\in\mathbb N\). We therefore formulate the CTMC using integer-valued channels (with Bernoulli conversion), and only later use an effective \(({-1},e)^\top\) increment as a density-level diffusion closure in Section~\ref{sec:diffusion_covariance_paperA}. This distinction is essential for mechanistic consistency.

We use the following five integer-valued event channels:
\begin{align*}
\mathcal R_1:\ & X_N \longrightarrow X_N+1 
&&\text{(prey birth)},\\
\mathcal R_2:\ & X_N \longrightarrow X_N-1 
&&\text{(prey competition death)},\\
\mathcal R_3:\ & X_P \longrightarrow X_P-1 
&&\text{(predator death)},\\
\mathcal R_4^{(B)}:\ & (X_N,X_P) \longrightarrow (X_N-1,\;X_P+1)
&&\text{(predation with successful conversion)},\\
\mathcal R_5^{(B)}:\ & (X_N,X_P) \longrightarrow (X_N-1,\;X_P)
&&\text{(predation without conversion)}.
\end{align*}
This is equivalent to a predation encounter process followed by Bernoulli conversion with success probability \(e\in(0,1]\). The stoichiometric increments are
\begin{equation}
\label{eq:nu_Bernoulli_paperA}
\nu_1=\binom{1}{0},\qquad
\nu_2=\binom{-1}{0},\qquad
\nu_3=\binom{0}{-1},\qquad
\nu_4^{(B)}=\binom{-1}{1},\qquad
\nu_5^{(B)}=\binom{-1}{0},
\end{equation}
and the corresponding stoichiometric matrix is
\begin{equation}
\label{eq:Stoichiometry_B_paperA}
S_B:=\big[\nu_1\ \nu_2\ \nu_3\ \nu_4^{(B)}\ \nu_5^{(B)}\big]\in\mathbb R^{2\times 5}.
\end{equation}

We adopt density-dependent (Kurtz-type) propensities
\begin{equation}
\label{eq:propensity_density_dependent_paperA}
\lambda_k(X)=\Omega\,f_k(x),\qquad x=\frac{X}{\Omega},
\end{equation}
with intensity functions
\begin{align}
f_1(x)&=N, &
f_2(x)&=\frac{N^2}{k}, &
f_3(x)&=cP, \label{eq:fk_basic_paperA}\\
f_4^{(B)}(x)&=e\,\frac{mNP}{1+N}, &
f_5^{(B)}(x)&=(1-e)\,\frac{mNP}{1+N}. \label{eq:fk_Bernoulli_pred_paperA}
\end{align}
Let \(f_B=(f_1,f_2,f_3,f_4^{(B)},f_5^{(B)})^\top\). Then the deterministic drift induced by the CTMC is
\begin{equation}
\label{eq:drift_Bernoulli_CTMC_paperA}
b_B(x)=S_B f_B(x)=
\begin{pmatrix}
N\!\left(1-\dfrac{N}{k}\right)-\dfrac{mNP}{1+N}\\[6pt]
-cP+e\,\dfrac{mNP}{1+N}
\end{pmatrix}.
\end{equation}
This is the R--M drift with explicit conversion efficiency \(e\). 

In the current mechanistic CTMC/CME formulation, \(e\in(0,1]\) enters the predator gain term but not the prey loss term. 
To maintain exact consistency with the deterministic backbone~\eqref{eq:RM_ODE_paperA} used in Section~\ref{sec:deterministic_backbone}, we perform the main stochastic analysis under the normalization
\[
e=1.
\]
The mechanistic derivations in this section are written for general \(e\) where useful (to expose the origin of covariance structure), and the specialization \(e=1\) is imposed when matching the reduced R--M parameterization used in Sections~\ref{sec:formulations_paperA}--\ref{sec:numerical_simulation}.

For later comparison, it is useful to note that one can construct an alternative integer-valued CTMC with separate channels for prey removal and predator reproduction, using increments \(({-1},0)^\top\) and \((0,1)^\top\) with rates proportional to \(\frac{mNP}{1+N}\) and \(e\,\frac{mNP}{1+N}\), respectively. This split-channel construction reproduces the same deterministic drift \eqref{eq:drift_Bernoulli_CTMC_paperA} but changes the count-level and diffusion-level covariance structure. In particular, it removes the same-event predator--prey coupling in the predation contribution and will therefore serve as a diagonal-noise comparator in Section~\ref{sec:diffusion_covariance_paperA}, not as the primary mechanistic baseline.

We formulate the CME.
Let \(p(X,t)=\mathbb P(X(t)=X)\), \(X\in\mathbb N_0^2\). For a finite family of channels with increments \(\nu_k\) and propensities \(\lambda_k(X)\), the CME is
\begin{equation}
\label{eq:CME_general_paperA}
\frac{\partial}{\partial t}p(X,t)
=
\sum_{k=1}^{K}
\Big[
\lambda_k(X-\nu_k)\,p(X-\nu_k,t)
-
\lambda_k(X)\,p(X,t)
\Big].
\end{equation}
In the present Bernoulli-coupled model, \(K=5\), the increments are those in \eqref{eq:nu_Bernoulli_paperA}, and the propensities are given by \eqref{eq:propensity_density_dependent_paperA}--\eqref{eq:fk_Bernoulli_pred_paperA}. We use \eqref{eq:CME_general_paperA} primarily as the exact count-level probabilistic description from which moment identities and the diffusion covariance inherit their structure.

The associated CTMC generator acting on test functions \(\varphi:\mathbb N_0^2\to\mathbb R\) is
\begin{equation}
\label{eq:CTMC_generator_paperA}
(\mathcal A\varphi)(X)
=
\sum_{k=1}^{K}\lambda_k(X)\,\big(\varphi(X+\nu_k)-\varphi(X)\big).
\end{equation}
As recalled in Appendix~\ref{app:CME_diffusion}, this yields both the count-scale drift increment
\[
\sum_{k=1}^K \lambda_k(X)\,\nu_k
\]
and the exact infinitesimal covariance (quadratic variation density) in count variables:
\begin{equation}
\label{eq:inf_cov_counts_paperA}
G(X)=\sum_{k=1}^{K}\lambda_k(X)\,\nu_k\nu_k^\top.
\end{equation}
Equivalently, in stoichiometric form,
\begin{equation}
\label{eq:inf_cov_counts_stoich_paperA}
G(X)=S\,\mathrm{diag}\!\big(\lambda(X)\big)\,S^\top,
\end{equation}
for the chosen channel representation \((S,\lambda)\). This identity is exact at the CTMC level and already shows that second-order structure depends on event stoichiometry, not only on the drift.

The key implication of \eqref{eq:inf_cov_counts_paperA}--\eqref{eq:inf_cov_counts_stoich_paperA} is that, under density scaling \(x=X/\Omega\) with \(\lambda_k(X)=\Omega f_k(x)\), the diffusion covariance in the demographic-noise approximation is inherited directly from the count-level event structure. In Section~\ref{sec:diffusion_covariance_paperA}, we make this explicit through the unified formula
\[
a(x)=\frac{1}{\Omega}\,S\,\mathrm{diag}(f(x))\,S^\top,
\]
and use it to compare Bernoulli-coupled, effective coupled \(({-1},e)^\top\), and split-channel closures. This is where the structural negative predator--prey cross-covariance appears transparently.

\section{CME-consistent diffusion approximation and covariance structure}
\label{sec:diffusion_covariance_paperA}

This section derives the demographic diffusion approximation from the CTMC/CME formulation in Section~\ref{sec:ctmc_cme_paperA} and isolates the covariance structure induced by event stoichiometry. The central point is that the diffusion covariance is not a free modeling choice once the event channels are specified: it is inherited from the exact count-scale infinitesimal covariance. In particular, coupled predation--conversion generates a structurally negative predator--prey cross-covariance term, whereas a drift-matched split-channel construction does not.

Let \(S\) denote a chosen stoichiometric matrix and \(f(x)\) the associated vector of density-level intensities. Under the density-dependent scaling \(\lambda_k(X)=\Omega f_k(X/\Omega)\), the chemical Langevin diffusion approximation for the density process \(x_t=X_t/\Omega\) takes the It\^o form
\begin{equation}
\label{eq:CLE_density_paperA}
dx_t = b(x_t)\,dt + \Sigma(x_t)\,dW_t,
\end{equation}
with drift
\begin{equation}
\label{eq:drift_unified_paperA}
b(x)=S f(x),
\end{equation}
and diffusion covariance
\begin{equation}
\label{eq:covariance_unified_paperA}
a(x)\coloneqq \Sigma(x)\Sigma(x)^\top
=
\frac{1}{\Omega}\,S\,\mathrm{diag}\!\bigl(f(x)\bigr)\,S^\top.
\end{equation}
Equation~\eqref{eq:covariance_unified_paperA} is the density-scale counterpart of the exact CTMC infinitesimal covariance identity in counts (Section~\ref{sec:ctmc_cme_paperA}); see Appendix~\ref{app:CME_diffusion} for the derivation.

To isolate the predation contribution across different closures, we write
\begin{equation}
\label{eq:fpred_paperA}
f_{\mathrm{pred}}(x):=\frac{mNP}{1+N},
\qquad x=(N,P)\in D=(0,\infty)^2.
\end{equation}
Below we compare three drift-compatible predation closures (up to parameterization): (i) the exact Bernoulli-coupled CTMC mechanism, (ii) an effective diffusion-level coupled channel \(({-1},e)^\top\), and (iii) a split-channel comparator. Their difference lies entirely in the induced second-order structure.

For the Bernoulli-coupled predation channels \(\mathcal R_4^{(B)}\) and \(\mathcal R_5^{(B)}\) from Section~\ref{sec:ctmc_cme_paperA}, the density-level predation covariance contribution is
\begin{align}
a_{\mathrm{pred}}^{(B)}(x)
&=
\frac{1}{\Omega}\Big[
f_4^{(B)}(x)\,\nu_4^{(B)}(\nu_4^{(B)})^\top
+
f_5^{(B)}(x)\,\nu_5^{(B)}(\nu_5^{(B)})^\top
\Big] \notag\\
&=
\frac{1}{\Omega}\,f_{\mathrm{pred}}(x)
\begin{pmatrix}
1 & -e\\
-e & e
\end{pmatrix}.
\label{eq:apred_B_paperA}
\end{align}
Hence, the off-diagonal predation term is
\begin{equation}
\label{eq:cross_B_negative_paperA}
\bigl(a_{\mathrm{pred}}^{(B)}\bigr)_{12}(x)
=
-\frac{1}{\Omega}\,e\,f_{\mathrm{pred}}(x)
=
-\frac{1}{\Omega}\,e\,\frac{mNP}{1+N}.
\end{equation}
In particular, for \(x\in D\) and \(e>0\), this term is strictly negative. Biologically, it records the same-event negative correlation created by prey removal and predator gain in a predation encounter with successful conversion.

For compact diffusion notation, one may use a single effective predation channel at the density level:
\begin{equation}
\label{eq:nu_eff_paperA}
\nu_{\mathrm{eff}}=\binom{-1}{e},
\qquad
f_{\mathrm{eff}}(x)=f_{\mathrm{pred}}(x).
\end{equation}
This is not an exact integer-valued CTMC jump when \(e\notin\mathbb N\), but it is a valid diffusion-level closure consistent with the drift. The corresponding predation covariance contribution is
\begin{equation}
\label{eq:apred_eff_paperA}
a_{\mathrm{pred}}^{(\mathrm{eff})}(x)
=
\frac{1}{\Omega}\,f_{\mathrm{pred}}(x)\,\nu_{\mathrm{eff}}\nu_{\mathrm{eff}}^\top
=
\frac{1}{\Omega}\,f_{\mathrm{pred}}(x)
\begin{pmatrix}
1 & -e\\
-e & e^2
\end{pmatrix}.
\end{equation}
Therefore,
\begin{equation}
\label{eq:cross_eff_negative_paperA}
\bigl(a_{\mathrm{pred}}^{(\mathrm{eff})}\bigr)_{12}(x)
=
-\frac{1}{\Omega}\,e\,f_{\mathrm{pred}}(x)
=
-\frac{1}{\Omega}\,e\,\frac{mNP}{1+N}<0
\qquad (x\in D,\ e>0).
\end{equation}
Thus, the effective coupled closure preserves the same predation-induced cross-covariance sign and magnitude as the exact Bernoulli-coupled mechanism.

Consider the split-channel predation representation with
\[
\nu_4^{(S)}=\binom{-1}{0},
\qquad
\nu_5^{(S)}=\binom{0}{1},
\qquad
f_4^{(S)}(x)=f_{\mathrm{pred}}(x),
\qquad
f_5^{(S)}(x)=e\,f_{\mathrm{pred}}(x).
\]
Then the predation-related covariance contribution is
\begin{equation}
\label{eq:apred_split_paperA}
a_{\mathrm{pred}}^{(S)}(x)
=
\frac{1}{\Omega}\Big[
f_4^{(S)}(x)\,\nu_4^{(S)}(\nu_4^{(S)})^\top
+
f_5^{(S)}(x)\,\nu_5^{(S)}(\nu_5^{(S)})^\top
\Big]
=
\frac{1}{\Omega}\,f_{\mathrm{pred}}(x)
\begin{pmatrix}
1 & 0\\
0 & e
\end{pmatrix}.
\end{equation}
Hence
\begin{equation}
\label{eq:cross_split_zero_paperA}
\bigl(a_{\mathrm{pred}}^{(S)}\bigr)_{12}(x)=0.
\end{equation}
This representation is drift-compatible with the coupled mechanisms but removes the same-event predator--prey fluctuation coupling. For that reason, we treat it as a diagonal-noise comparator, not as the default mechanistic closure.

\begin{proposition}[Structural predation cross-covariance under coupled versus split closures]
\label{prop:structural_cross_cov_paperA}
Let \(x=(N,P)\in D=(0,\infty)^2\) and \(e\in(0,1]\). For the predation intensity \(f_{\mathrm{pred}}(x)=\frac{mNP}{1+N}\), the predation contribution to the diffusion covariance satisfies:
\begin{enumerate}[label=(\roman*)]
\item under the exact Bernoulli-coupled closure,
\[
\bigl(a_{\mathrm{pred}}^{(B)}\bigr)_{12}(x)
=
-\frac{1}{\Omega}\,e\,f_{\mathrm{pred}}(x)<0;
\]
\item under the effective coupled \(({-1},e)\) closure,
\[
\bigl(a_{\mathrm{pred}}^{(\mathrm{eff})}\bigr)_{12}(x)
=
-\frac{1}{\Omega}\,e\,f_{\mathrm{pred}}(x)<0;
\]
\item under the split-channel closure,
\[
\bigl(a_{\mathrm{pred}}^{(S)}\bigr)_{12}(x)=0.
\]
\end{enumerate}
\end{proposition}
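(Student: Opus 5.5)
The plan is a direct computation from the unified covariance formula \eqref{eq:covariance_unified_paperA}, $a(x)=\frac{1}{\Omega}S\,\mathrm{diag}(f(x))S^\top=\frac{1}{\Omega}\sum_k f_k(x)\,\nu_k\nu_k^\top$. This formula is additive over channels, so the predation contribution under any closure is the sum of $\frac1\Omega f_k\nu_k\nu_k^\top$ over the predation channels alone. The off-diagonal entry of each rank-one term is $\frac1\Omega f_k(x)\,(\nu_k)_1(\nu_k)_2$. Each of (i)--(iii) therefore reduces to summing $f_k\,(\nu_k)_1(\nu_k)_2$ over the relevant channels.

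\textbf{Step 1 (Bernoulli-coupled closure).} Take $\nu_4^{(B)}=(-1,1)^\top$ with $f_4^{(B)}=e f_{\mathrm{pred}}$, and $\nu_5^{(B)}=(-1,0)^\top$ with $f_5^{(B)}=(1-e)f_{\mathrm{pred}}$. The first channel gives cross term $(-1)(1)\,e f_{\mathrm{pred}}$. The second gives $(-1)(0)=0$. The sum is $-\frac1\Omega e f_{\mathrm{pred}}(x)$, in agreement with \eqref{eq:apred_B_paperA}.

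\textbf{Step 2 (effective coupled closure).} Take the single channel $\nu_{\mathrm{eff}}=(-1,e)^\top$ with rate $f_{\mathrm{pred}}$. The cross term is $(-1)(e)f_{\mathrm{pred}}$, so the entry is again $-\frac1\Omega e f_{\mathrm{pred}}(x)$, as in \eqref{eq:apred_eff_paperA}.

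\textbf{Step 3 (split-channel closure).} Both increments $(-1,0)^\top$ and $(0,1)^\top$ have a zero component. Each product $(\nu_k)_1(\nu_k)_2$ therefore vanishes, and the entry is $0$ identically.

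\textbf{Step 4 (strict sign).} For $x=(N,P)\in D$ we have $N>0$, $P>0$ and $1+N>0$. With $m>0$ this gives $f_{\mathrm{pred}}(x)=mNP/(1+N)>0$. Since $e\in(0,1]$ gives $e>0$, and $\Omega>0$, the entries in (i) and (ii) are strictly negative.

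No step is a genuine obstacle, because the argument is pure algebra once \eqref{eq:covariance_unified_paperA} is granted. The only point needing care is justifying that the predation contribution can be isolated: this relies on the sum-over-channels form of $S\,\mathrm{diag}(f)S^\top$, which makes the contributions of the non-predation channels separable. Those channels do not enter the predation contribution. They are also in fact diagonal, since $\nu_1,\nu_2,\nu_3$ each have a zero component, so the full $a_{12}$ coincides with the predation term.
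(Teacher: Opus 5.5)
Your proposal is correct and follows the paper's own argument: the paper also computes each predation contribution as \(\frac{1}{\Omega}\sum_k f_k\,\nu_k\nu_k^\top\) over the predation channels of the given closure. It then reads off the off-diagonal entries \(-\frac{1}{\Omega}e\,f_{\mathrm{pred}}\), \(-\frac{1}{\Omega}e\,f_{\mathrm{pred}}\), and \(0\), and gets strict negativity from \(f_{\mathrm{pred}}>0\) on \(D\) together with \(e>0\), noting as you do that the base channels contribute only diagonal terms.
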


Proposition~\ref{prop:structural_cross_cov_paperA} implies that the drift equivalence does not imply covariance equivalence.
Different channel representations can produce the same deterministic drift while yielding different diffusion covariances. Indeed, the drift depends on the first stoichiometric moments
\[
b(x)=\sum_k f_k(x)\,\nu_k,
\]
whereas the covariance depends on the second stoichiometric moments
\[
a(x)=\frac{1}{\Omega}\sum_k f_k(x)\,\nu_k\nu_k^\top.
\]
Thus, channel decompositions that are indistinguishable at the ODE level may differ at the diffusion level. This is the precise mechanism behind the distinction between coupled predation closures (which produce predator--prey cross-covariance) and split-channel closures (which can suppress it).

Within the class of drift-compatible predation closures considered above, the sign and presence of predator--prey cross-covariance at the diffusion level is determined by whether prey loss and predator gain occur in the same event channel. In particular, the negative off-diagonal term is not an ad hoc modeling embellishment; it is a stoichiometric consequence of coupled predation--conversion. Figure~\ref{fig:covariance_geometry} shows the microscopic state transitions with arrows indicating the stoichiometric increments, and the macroscopic covariance ellipses with negative correlation tilting the fluctuation geometry.

\begin{figure}[htbp]
        \centering
        \includegraphics[width=\textwidth]{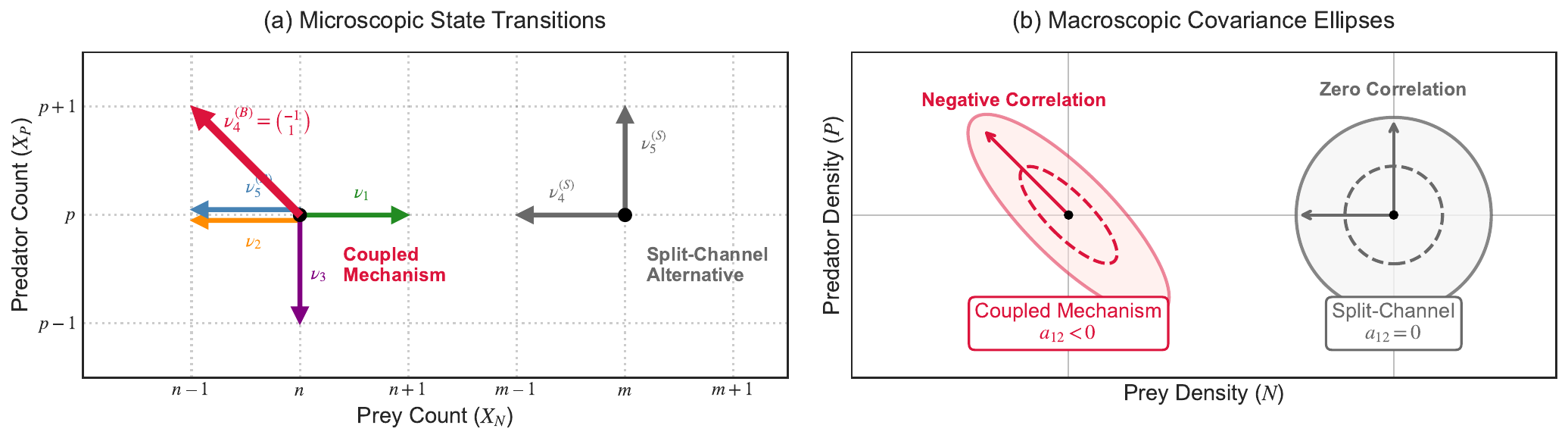}
        \caption{\textbf{Mechanistic origin of the demographic diffusion covariance.}
        (a) Microscopic state transitions on the count lattice. The exact Bernoulli-coupled mechanism strictly requires a diagonal jump vector $\nu_4^{(B)}$ (crimson) to represent a single predation--conversion event. The split-channel comparator falsely separates this into two independent, orthogonal jumps.
        (b) Macroscopic covariance geometry at the density scale. The diagonal jump in the coupled mechanism inherently tilts the local fluctuation geometry, generating a strictly negative predator--prey cross-covariance ($a_{12} < 0$). In contrast, the phenomenological split-channel closure produces an axis-aligned covariance ellipse ($a_{12} = 0$), failing to capture the true geometric structure of the demographic noise despite maintaining mean-field drift equivalence.}
        \label{fig:covariance_geometry}
\end{figure}

\begin{remark}[Bernoulli-coupled exact CTMC vs.\ effective \(({-1},e)^\top\) closure]
\label{rmk:Bernoulli_vs_effective_paperA}
The exact Bernoulli-coupled CTMC closure and the effective \(({-1},e)^\top\) diffusion closure agree in deterministic drift and in the predation-induced cross-covariance magnitude
\[
-\frac{1}{\Omega}\,e\,f_{\mathrm{pred}}(x),
\]
but they differ in the predator-variance contribution:
\[
\bigl(a_{\mathrm{pred}}^{(B)}\bigr)_{22}(x)=\frac{1}{\Omega}\,e\,f_{\mathrm{pred}}(x),
\qquad
\bigl(a_{\mathrm{pred}}^{(\mathrm{eff})}\bigr)_{22}(x)=\frac{1}{\Omega}\,e^2\,f_{\mathrm{pred}}(x).
\]
In this paper, the effective closure is used as a compact full-covariance diffusion representation, while the Bernoulli-coupled CTMC remains the exact integer-valued mechanistic foundation.
\end{remark}

For later reference, it is sometimes convenient to decompose the full covariance into non-predation and predation contributions:
\[
a(x)=a_{\mathrm{base}}(x)+a_{\mathrm{pred}}(x),
\]
where \(a_{\mathrm{base}}(x)\) collects prey birth, prey competition death, and predator death channels, and \(a_{\mathrm{pred}}(x)\) is one of
\[
a_{\mathrm{pred}}^{(B)}(x),\qquad
a_{\mathrm{pred}}^{(\mathrm{eff})}(x),\qquad
a_{\mathrm{pred}}^{(S)}(x).
\]
Since the base channels in the present construction contribute only diagonal terms, the sign and magnitude of the full off-diagonal covariance \(a_{12}(x)\) are determined entirely by the predation component. Therefore, Proposition~\ref{prop:structural_cross_cov_paperA} directly controls the sign of the full predator--prey cross-covariance in the coupled closures.

Since \(a(x)\) is symmetric positive semidefinite on \(\mathbb R_+^2\), it admits (non-unique) factorizations
\begin{equation}
\label{eq:factorization_paperA}
a(x)=\Sigma(x)\Sigma(x)^\top.
\end{equation}
Two useful choices are:
\begin{enumerate}[label=(\roman*)]
\item Event-based factorization that uses \(K\) independent Brownian drivers (one per channel):
\begin{equation*}
\Sigma_K(x)
=
\frac{1}{\sqrt{\Omega}}\,
S\,\sqrt{\mathrm{diag}\big(f(x)\big)}
\in\mathbb R^{d\times K},
\end{equation*}
which preserves channel semantics and is convenient for interpreting noise sources;
\item Minimal-dimensional square root (e.g.\ Cholesky):
choose a measurable \(2\times2\) square root \(\Sigma_2(x)\) satisfying \(\Sigma_2(x)\Sigma_2(x)^\top=a(x)\), which is convenient for numerical simulation and for the formulation of the open-domain SDE.
\end{enumerate}
These choices are equivalent in law at the level of the diffusion process defined by \eqref{eq:CLE_density_paperA}, provided they generate the same covariance matrix \(a(x)\).

\begin{assumption}[Diagonal-noise as a controlled approximation: small relative cross-covariance]
\label{ass:diag_condition_paperA}
Let \(\mathcal R\subset D\) be a state region of interest. A diagonal-noise approximation is considered only in regimes where
\begin{equation}
\label{eq:diag_condition_paperA}
\rho(x):=\frac{|a_{12}(x)|}{\sqrt{a_{11}(x)\,a_{22}(x)}}\ll 1,
\qquad x\in\mathcal R.
\end{equation}
That is, diagonalization is treated as a diagnostic approximation rather than a default mechanistic model.
\end{assumption}

\begin{remark}[Sufficient regimes for small cross-covariance]
\label{rmk:sufficient_diag_paperA}
Under the coupled closures, \(|a_{12}(x)|\) scales like \(\frac{1}{\Omega}e\,f_{\mathrm{pred}}(x)\). Thus \eqref{eq:diag_condition_paperA} may hold, for example, in any of the following regimes (or combinations thereof):
\begin{enumerate}[label=\textnormal{(\roman*)}]
\item Low conversion efficiency: \(e\ll 1\);
\item Predation-subdominant region: \(f_{\mathrm{pred}}(x)\ll f_1(x)+f_2(x)+f_3(x)\) on \(\mathcal R\);
\item State restriction away from strong interaction: \(N\) or \(P\) remains uniformly small on \(\mathcal R\), so \(f_{\mathrm{pred}}(x)=\frac{mNP}{1+N}\) is uniformly small.
\end{enumerate}
When these conditions fail, diagonal-noise closures are expected to misrepresent second-order statistics, in particular \(\mathrm{Cov}(N,P)\) and the geometry of local fluctuations.
\end{remark}

In the current work, the CTMC/CME formulation provides a microscopic description, whereas the diffusion (CLE) approximation yields a macroscopic limit. Thus, our framework establishes a principled bridge between scales, aligning conceptually with hybrid multiscale modeling approaches that couple discrete events with continuum dynamics \citep{liu_bidirectional_2025}.
With the local diffusion coefficients now fixed, the next section clarifies a distinct issue: the same local SDE coefficients can be embedded into two different global diffusion formulations on the positive quadrant, namely an open-domain formulation for interior dynamics and an absorbed formulation for extinction-permitting dynamics.

\section{Open-domain and absorbed diffusion formulations}
\label{sec:formulations_paperA}

The mechanistic construction in Sections~\ref{sec:ctmc_cme_paperA}--\ref{sec:diffusion_covariance_paperA} determines the local diffusion coefficients (drift and covariance) from event stoichiometry and channel intensities. A separate modeling choice concerns the global interpretation of the diffusion on the positive quadrant: whether one studies interior dynamics on the open coexistence domain, or extinction-permitting dynamics with boundary absorption. This distinction is central in the present paper and should be made explicit before any well-posedness statement is formulated.

Throughout, we write the density process as \(x_t=(N_t,P_t)^\top\), and use the closed positive quadrant $Q$ together with its interior $D$.
The local diffusion coefficients are taken from the CME-consistent construction:
\begin{equation}
\label{eq:common_local_coeffs_paperA}
b(x)=S f(x),\qquad
a(x)=\Sigma(x)\Sigma(x)^\top=\frac{1}{\Omega}\,S\,\mathrm{diag}(f(x))\,S^\top,
\end{equation}
with \((S,f)\) chosen consistently with the closure under study (in particular, the full-covariance coupled closure as the default mechanistic representation, and the split-channel model as a comparator). The distinction introduced in this section concerns how these same local coefficients are embedded into a global stochastic process on \(D\) or \(Q\).

The open-domain formulation is the diffusion model posed on the interior \(D\), with initial condition \(x_0\in D\):
\begin{equation}
\label{eq:open_domain_SDE_paperA}
dx_t=b(x_t)\,dt+\Sigma(x_t)\,dW_t,
\qquad x_0\in D.
\end{equation}
Under the full-covariance coupled closure, the predation mechanism induces a negative off-diagonal term \(a_{12}(x)<0\) for \(x\in D\), as shown in Section~\ref{sec:diffusion_covariance_paperA}. At this stage, \eqref{eq:open_domain_SDE_paperA} is a local SDE statement on an open set; the issue of boundary attainability is deferred to Section~\ref{sec:wellposed_paperA}.

The open-domain formulation is intended to describe interior stochastic dynamics on the coexistence state space \(D\), rather than extinction events themselves. It is the natural framework for analyzing observables such as:
(i) local fluctuation geometry and covariance structure near coexistence,
(ii) survival-conditioned dynamics over finite or moderate horizons,
and (iii) local approximations (e.g., linearization-based analyses) that presuppose the process remains in the interior. In this interpretation, one studies the diffusion as a model of demographic variability conditional on coexistence, and the main mathematical questions concern local strong solvability, non-explosion before boundary hitting, and (under additional conditions) positivity invariance.

To model extinction-permitting dynamics, one uses an absorbed formulation based on the same local coefficients. Let
\begin{equation}
\label{eq:tau_boundary_paperA}
\tau_{\partial D}:=\inf\{t\ge 0:\ x_t\notin D\}
\end{equation}
denote the first exit (boundary hitting) time from the interior for a local solution of \eqref{eq:open_domain_SDE_paperA}. The absorbed process is then defined by freezing at the first boundary hit:
\begin{equation}
\label{eq:absorbed_process_definition_paperA}
x_t^{\mathrm{abs}}:=
\begin{cases}
x_t, & t<\tau_{\partial D},\\[4pt]
x_{\tau_{\partial D}}, & t\ge \tau_{\partial D}.
\end{cases}
\end{equation}
Biologically, the boundary \(\partial D\) corresponds to loss of strict coexistence:
\(N=0\) (prey extinction), \(P=0\) (predator extinction), or \((0,0)\) (total extinction). The freezing rule in \eqref{eq:absorbed_process_definition_paperA} is the diffusion-level analogue of the irreversibility of extinction in the underlying finite-population CTMC.

The absorbed formulation is the appropriate diffusion model when the scientific question concerns extinction-related observables, such as:
extinction probabilities over a given time horizon, extinction-time distributions, or mean persistence times. In this setting, boundary hitting is not a pathology to be excluded; it is part of the model output. Thus, the absorbed formulation and the open-domain formulation are not competing descriptions of the same quantity, but rather two diffusion-level models tailored to different observables and time-horizon interpretations.

\begin{remark}[Why these are different questions (and why the distinction matters)]
The open-domain and absorbed formulations differ in at least three ways. First, they target different observables: interior fluctuation structure versus extinction/permanence statistics. Second, they differ in time-horizon emphasis: open-domain analyses are often local or survival-conditioned, whereas absorbed analyses directly encode rare-event accumulation and eventual boundary contact. Third, they correspond to different analogies with the exact CTMC: the absorbed formulation mirrors CTMC extinction irreversibility, while the open-domain formulation serves as a mathematically convenient interior approximation for coexistence dynamics.
\end{remark}

Figure~\ref{fig:modeling_bifurcation} contrasts the representative trajectories in the open domain $D=(0,\infty)^2$ under the open-domain formulation, and those in the first quadrant $Q=[0,\infty)^2$ under the absorbed diffusion formulation. 

\begin{figure}[htbp]
    \centering
        \includegraphics[width=\textwidth]{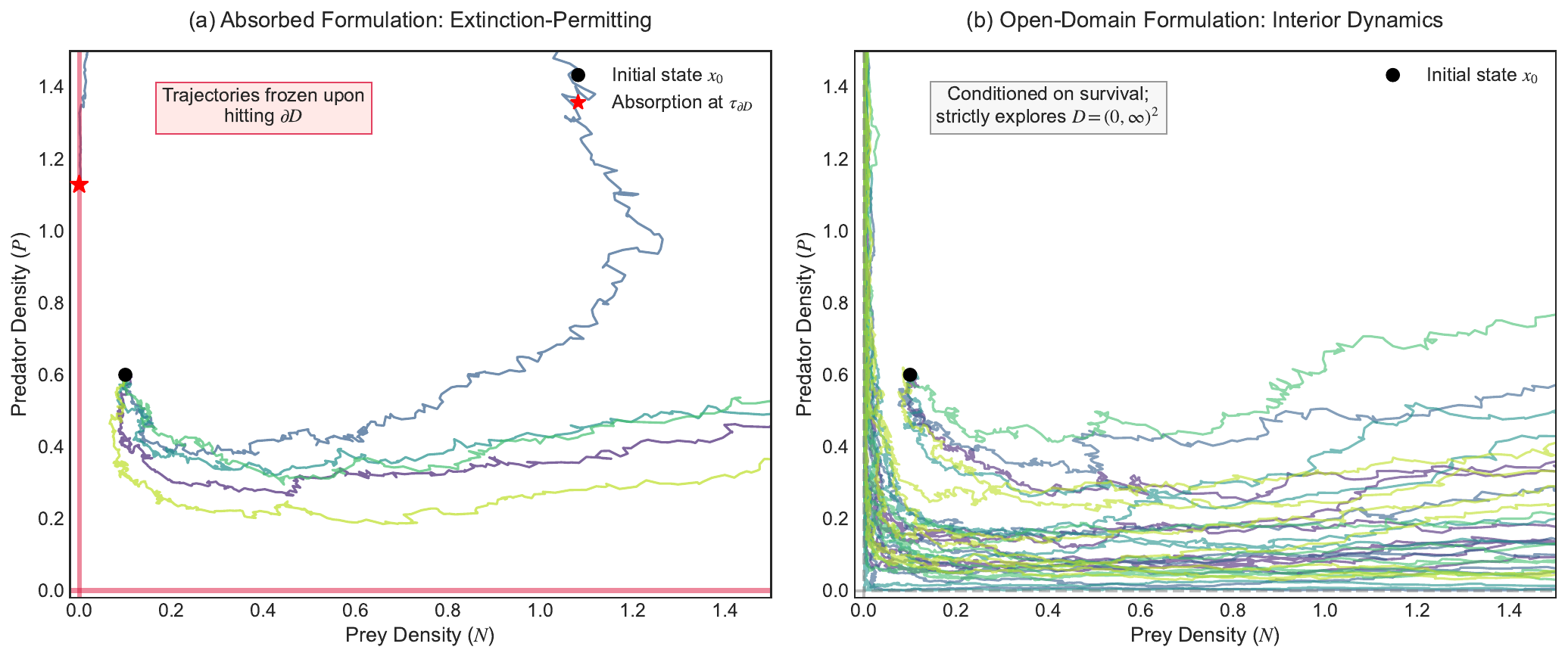}
        \caption{\textbf{Visualizing the modeling bifurcation: Absorbed versus Open-Domain formulations.}
        Both panels simulate the exact same local chemical Langevin drift and covariance parameters ($\Omega=150, k=5.5$, a fluctuation-amplified regime) originating from the identical initial state $x_0$.
        (a) The absorbed formulation, where trajectories are permanently frozen upon first boundary contact ($\tau_{\partial D}$), marked by red stars. This is the mathematically appropriate framework for evaluating extinction probabilities and mean persistence times.
        (b) A numerical proxy for the open-domain formulation, where trajectories are conditioned to survive and continuously explore the interior domain $D=(0,\infty)^2$. This framework captures the geometry of quasi-stationary interior fluctuations but must not be conflated with the extinction-permitting dynamics strictly captured in (a).}
        \label{fig:modeling_bifurcation}
\end{figure}

The next section focuses on the open-domain formulation and develops the corresponding well-posedness architecture on \(D=(0,\infty)^2\): first, maximal strong solvability and non-explosion before boundary hitting; second, positivity invariance under an additional boundary-barrier Lyapunov condition. This two-stage structure is precisely what allows one to be mathematically explicit about the difference between interior-dynamics analysis and extinction-permitting modeling.

\section{Well-posedness on \(D=(0,\infty)^2\): maximal strong solutions, non-explosion, and positivity under barrier conditions}
\label{sec:wellposed_paperA}

This section develops the well-posedness framework for the open-domain diffusion formulation introduced in Section~\ref{sec:formulations_paperA}. The main point is a deliberate separation of statements: we first establish maximal strong solvability together with non-explosion in the interior (up to boundary hitting), and only then give a separate positivity-invariance result under an additional boundary-barrier Lyapunov condition.

We consider the It\^o SDE \eqref{eq:open_domain_SDE_paperA} on the open domain $D$ with the coefficients in \eqref{eq:common_local_coeffs_paperA}.
For the R--M demographic diffusion considered here, the drift \(b\) and covariance \(a\) are smooth on \(D\), while a chosen factorization \(\Sigma\) may fail to be globally Lipschitz near \(\partial D\) because demographic noise intensities vanish as \(N\downarrow 0\) and/or \(P\downarrow 0\). Nevertheless, on every precompact set \(K\Subset D\), the coefficients are locally bounded and (for the factorizations used in this paper) locally Lipschitz. This is the natural setting for localization and Lyapunov arguments.

Define the first exit (boundary hitting) time from the interior by \eqref{eq:tau_boundary_paperA} and freeze $x_t$ according to the rule in \eqref{eq:absorbed_process_definition_paperA}.
The key structural distinction in this section is:
\begin{enumerate}[label=(\roman*)]
\item a statement about maximal strong solvability and no loss of well-posedness in the interior before \(\tau_{\partial D}\), versus
\item a stronger statement about positivity invariance (\(\tau_{\partial D}=\infty\) a.s.), which requires additional assumptions.
\end{enumerate}
This separation is mathematically appropriate for boundary-degenerate demographic diffusions and aligns with the modeling split in Section~\ref{sec:formulations_paperA}: in the absorbed formulation, boundary hitting is allowed. We put the detailed proof for Theorem~\ref{thm:maximal_no_interior_explosion_paperA} and Proposition~\ref{prop:barrier_invariance_paperA} in Appendix~\ref{app:wellposedness_proofs}.

\begin{theorem}[Maximal strong solution and no interior explosion before boundary hitting]
\label{thm:maximal_no_interior_explosion_paperA}
Assume:
\begin{enumerate}[label=(A\arabic*)]
\item \textbf{Local regularity on compacts.} For every precompact set \(K\Subset D\), the coefficients \(b\) and \(\Sigma\) are Lipschitz and bounded on \(K\).
\item \textbf{Lyapunov control at infinity.} There exist \(V_\infty\in C^2(D;[0,\infty))\) and a constant \(C>0\) such that
\begin{equation}
\label{eq:Lyap_infty_cond_paperA}
\mathcal L V_\infty(x)\le C\bigl(1+V_\infty(x)\bigr),\qquad x\in D,
\end{equation}
where
\[
\mathcal L V(x)=\nabla V(x)\cdot b(x)+\frac12\mathrm{tr}\!\bigl(a(x)\nabla^2V(x)\bigr),
\qquad a(x)=\Sigma(x)\Sigma(x)^\top,
\]
and \(V_\infty(x)\to\infty\) whenever \(\|x\|\to\infty\) with \(x\in D\).
\end{enumerate}
Then for every \(x_0\in D\), the SDE~\eqref{eq:open_domain_SDE_paperA} admits a unique maximal strong solution
\[
(x_t)_{0\le t<\tau_*}
\]
with lifetime \(\tau_*\in(0,\infty]\), and
\begin{equation}
\label{eq:no_interior_blowup_paperA}
\mathbb P\bigl(\tau_*<\infty,\ \tau_*<\tau_{\partial D}\bigr)=0.
\end{equation}
Equivalently, loss of well-posedness cannot occur in the interior of \(D\); if \(\tau_*<\infty\), the only possible obstruction is boundary hitting (or boundary approach), i.e.
\begin{equation}
\label{eq:tau_star_equals_boundary_paperA}
\tau_*=\tau_{\partial D}\qquad\text{a.s.}
\end{equation}
\end{theorem}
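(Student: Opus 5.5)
The proof is a standard localization argument. First build strong solutions on an exhaustion of $D$ by compacts, then use the Lyapunov function $V_\infty$ to rule out escape to infinity in finite time.

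\emph{Step 1: exhaustion and local solutions.} For $n\ge 1$ let
\[
K_n:=\{x=(N,P)\in D:\ 1/n< N<n,\ 1/n<P<n\},
\]
so that $\overline{K_n}\Subset D$, $K_n\uparrow D$, and $x_0\in K_n$ for all large $n$. By (A1), $b$ and $\Sigma$ are Lipschitz and bounded on $\overline{K_n}$. We extend them to globally Lipschitz, bounded coefficients $b_n,\Sigma_n$ on $\mathbb R^2$, for instance by composing with a Lipschitz retraction onto $\overline{K_n}$ (a McShane-type extension also works). The classical It\^o theory then gives a pathwise unique global strong solution $x^{(n)}$ of the SDE with coefficients $(b_n,\Sigma_n)$. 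Let
\[
\sigma_n:=\inf\{t\ge 0: x^{(n)}_t\notin K_n\}.
\]
Pathwise uniqueness implies that $x^{(n)}$ and $x^{(n+1)}$ agree on $[0,\sigma_n]$ a.s. Hence $\sigma_n$ is nondecreasing, and we may define $x_t:=x^{(n)}_t$ for $t\le\sigma_n$ and $\tau_*:=\lim_n\sigma_n$. This yields a strong solution on $[0,\tau_*)$. Any other strong solution coincides with it up to each $\sigma_n$, which gives uniqueness and maximality. Moreover $\tau_*>0$ a.s. because $x_0$ lies in the open set $K_n$ and paths are continuous.

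\emph{Step 2: identify the two ways of leaving $K_n$.} On the event $\{\tau_*<\infty\}$, the path eventually leaves every compact subset of $D$. Thus, along $t\uparrow\tau_*$, either $\|x_t\|$ becomes unbounded or $\mathrm{dist}(x_t,\partial D)\to 0$ along a subsequence. To separate these, introduce the second exhaustion
\[
\theta_R:=\inf\{t<\tau_*:\ \|x_t\|\ge R\}.
\]
We will show that $\theta_R\to\infty$ a.s. as $R\to\infty$, i.e.\ the solution is not unbounded on any finite interval. Granting this, on $\{\tau_*<\infty\}$ the path stays in a bounded set $\{\|x\|\le R\}$ up to $\tau_*$. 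Leaving every $K_n$ therefore forces $\liminf_{t\uparrow\tau_*}\min(N_t,P_t)=0$, which is boundary approach. Interpreting $\tau_{\partial D}$ for the maximal solution as this first boundary approach, as the paper's freezing convention \eqref{eq:absorbed_process_definition_paperA} suggests, we get \eqref{eq:no_interior_blowup_paperA} and \eqref{eq:tau_star_equals_boundary_paperA}.

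\emph{Step 3: the Lyapunov estimate.} This is the main step. Fix $T>0$ and use the stopping times $\rho_{n}:=\sigma_n\wedge T$. Since $V_\infty\in C^2$ and the coefficients are bounded on $\overline{K_n}$, It\^o's formula gives
\[
\mathbb E\bigl[1+V_\infty(x_{\rho_n})\bigr]\le 1+V_\infty(x_0)+C\int_0^T \mathbb E\bigl[1+V_\infty(x_{s\wedge\rho_n})\bigr]\,ds,
\]
because the stochastic integral is a true martingale up to $\rho_n$. Gronwall's lemma then yields
\[
\mathbb E\bigl[V_\infty(x_{\rho_n})\bigr]\le (1+V_\infty(x_0))e^{CT},
\]
uniformly in $n$. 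The delicate point is that $x_{\sigma_n}$ may exit $K_n$ near the boundary, where nothing forces $V_\infty$ to be large. So we cannot bound $\mathbb P(\sigma_n\le T)$ directly. Instead, set $h(R):=\inf\{V_\infty(x): x\in D,\ \|x\|\ge R\}$, which tends to $\infty$ by the coercivity in (A2). On the event $\{\theta_R\le \sigma_n\wedge T\}$ we have $V_\infty(x_{\rho_n\wedge\theta_R})\ge h(R)$. Repeating the It\^o and Gronwall estimate at the stopping time $\rho_n\wedge\theta_R$ gives
\[
\mathbb P(\theta_R\le\sigma_n\wedge T)\le (1+V_\infty(x_0))e^{CT}/h(R).
\]
Letting $n\to\infty$ (monotone in $n$) and then $R\to\infty$ gives $\mathbb P(\theta_\infty\le T,\ \tau_*\le T)=0$. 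Since $T$ is arbitrary, explosion to infinity in finite time is ruled out. This completes Step 2 and the proof.
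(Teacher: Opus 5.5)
Your proof is correct and follows essentially the same route as the paper's: Lipschitz extensions on a compact exhaustion with pathwise-uniqueness patching, a Lyapunov--Gr\"onwall bound on $V_\infty$ to exclude escape to infinity, and then boundedness plus leaving every $K_n$ forcing boundary approach, with $\tau_{\partial D}$ read as boundary approach just as the paper does implicitly in its Step~5. Your localization at $\sigma_n\wedge T\wedge\theta_R$ is more careful than the paper's stopping at $\inf\{t:\|x_t\|\ge R\}\wedge\tau_{\partial D}$, where a true martingale is asserted on $D\cap\overline{B(0,R)}$, a set that is not precompact in $D$; your stopping keeps the path in $\overline{K_n}$, where $b$, $\Sigma$ and $\nabla V_\infty$ are bounded.
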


\begin{proposition}[Positivity invariance under a boundary-barrier Lyapunov condition]
\label{prop:barrier_invariance_paperA}
Assume the hypotheses of Theorem~\ref{thm:maximal_no_interior_explosion_paperA}. In addition, suppose there exists a function
\[
V_b\in C^2(D;[0,\infty))
\]
such that:
\begin{enumerate}[label=\textbf{(B\arabic*)},leftmargin=2.2em]
\item \textbf{Boundary blow-up (barrier property).}
\begin{equation}
\label{eq:barrier_blowup_paperA}
V_b(x)\to\infty\qquad \text{whenever }x\to\partial D\ \text{within }D.
\end{equation}
\item \textbf{Localized generator bound.} There exists \(C_b>0\) such that for every \(n\ge1\) and all \(x\in K_n\),
\begin{equation}
\label{eq:barrier_generator_bound_paperA}
\mathcal L V_b(x)\le C_b\bigl(1+V_b(x)\bigr).
\end{equation}
\end{enumerate}
Then
\begin{equation}
\label{eq:no_boundary_hitting_paperA}
\mathbb P\bigl(\tau_{\partial D}<\infty\bigr)=0.
\end{equation}
Consequently, the maximal strong solution is global and remains in \(D\) for all times:
\begin{equation}
\label{eq:D_invariance_corrected_paperA}
\mathbb P\bigl(x_t\in D\ \text{for all }t\ge0\bigr)=1.
\end{equation}
\end{proposition}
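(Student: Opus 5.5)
We prove the proposition with the standard Khasminskii localization argument, applied to the combined Lyapunov function $V:=V_\infty+V_b$. Here $K_n$ is the exhausting sequence of precompact sets implicit in (B2), which we take to be
\[
K_n=\{x\in D:\ 1/n\le N,P\le n\}.
\]
We set $\tau_n:=\inf\{t\ge0: x_t\notin K_n\}$.

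By Theorem~\ref{thm:maximal_no_interior_explosion_paperA} the maximal strong solution exists, and $\tau_*=\tau_{\partial D}$ a.s. Since the $K_n$ exhaust $D$, we also have $\tau_n\uparrow\tau_*$ a.s. It therefore suffices to show $\tau_*=\infty$ a.s., i.e.\ $\mathbb P(\tau_n\le T)\to0$ for every fixed $T>0$.

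First, $V\ge0$ is $C^2$ on $D$ and satisfies $\mathcal LV\le C'(1+V)$ on each $K_n$ with $C'=\max(C,C_b)$, because both generator bounds hold there. Moreover, $V(x)\to\infty$ as $x$ leaves every $K_n$: either $\|x\|\to\infty$, where $V_\infty$ blows up by (A2), or $x\to\partial D$, where $V_b$ blows up by (B1). Consequently
\[
h_n:=\inf_{x\in D\setminus K_n^\circ}V(x)\longrightarrow\infty .
\]
Note that $\partial K_n$ consists of points near the boundary or far away. To make this rigorous we only need $\min_{\partial K_n}V\to\infty$, which follows from the two blow-up properties and the fact that $V$ is continuous and nonnegative.

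Second, we apply Itô's formula to $e^{-C't}(1+V(x_t))$ stopped at $\tau_n\wedge T$. On $[0,\tau_n]$ the path stays in the compact set $K_n$, where $b$, $\Sigma$ and $\nabla V$ are bounded. Hence the stochastic integral is a true martingale with zero expectation. The drift term $e^{-C't}(\mathcal LV-C'(1+V))$ is $\le0$, which gives
\[
\mathbb E\bigl[1+V(x_{\tau_n\wedge T})\bigr]\le e^{C'T}\bigl(1+V(x_0)\bigr).
\]
On the event $\{\tau_n\le T\}$ we have $x_{\tau_n}\in\partial K_n$ by path continuity, so $V(x_{\tau_n})\ge h_n$. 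Chebyshev's inequality then yields
\[
\mathbb P(\tau_n\le T)\le \frac{e^{C'T}(1+V(x_0))}{1+h_n}\to0 .
\]
Therefore $\mathbb P(\tau_*\le T)=0$ for every $T$, so $\tau_*=\infty$ a.s. Combined with $\tau_*=\tau_{\partial D}$ this gives \eqref{eq:no_boundary_hitting_paperA}. The invariance statement \eqref{eq:D_invariance_corrected_paperA} is then immediate: the solution is global and, since $\tau_{\partial D}=\infty$, it never leaves $D$.

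The main obstacle is the bookkeeping in the localization rather than the Itô computation. The bound (B2) is stated only on each $K_n$, and $x_{\tau_n}$ may lie on either type of face of $\partial K_n$ (near $\partial D$, or near infinity). This is exactly why we combine $V_\infty$ and $V_b$ instead of using $V_b$ alone: $V_b$ need not blow up at infinity, and $V_\infty$ need not blow up at the boundary. If we want to avoid relying on the identity $\tau_*=\tau_{\partial D}$, we can instead run the same estimate with $V_b$ alone, stopped at the exit time from the near-boundary layers $\{N\wedge P\ge 1/n\}$ intersected with $\tau_*$, and handle escape to infinity using the theorem. The combined-$V$ route is cleaner, so we follow it.
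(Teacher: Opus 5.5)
Your proof is correct, but it takes a different route from the paper's.

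\textbf{The paper's route.} The paper works with $V_b$ alone. It first proves the uniform-in-$n$ bound $\mathbb E[V_b(x_{t\wedge\tau_n})]\le e^{C_b t}(V_b(x_0)+C_b t)$. It then argues by contradiction on $\{\tau_{\partial D}\le T\}$. There it invokes Theorem~\ref{thm:maximal_no_interior_explosion_paperA}, so that $\tau_*\ge\tau_{\partial D}$ and $x_t\to x_{\tau_{\partial D}}\in\partial D$. From (B1) it gets $V_b(x_{T\wedge\tau_n})\to\infty$ pathwise, and Fatou's lemma contradicts the uniform bound. Escape to infinity is handled entirely by the theorem, which is exactly the two-stage separation the paper wants to display.

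\textbf{Your route.} You merge $V_\infty$ and $V_b$ into one function that blows up on every face of $\partial K_n$. You then run the classical Khasminskii estimate, which gives the explicit tail bound $\mathbb P(\tau_n\le T)\le e^{C'T}(1+V(x_0))/(1+h_n)$.

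\textbf{What your approach buys.}
\begin{itemize}
\item You use only the existence of the maximal solution, not the theorem's dichotomy. Indeed, $\tau_*=\infty$ already forces $x_t\in\bigcup_n K_n=D$ for all $t$, so your final appeal to $\tau_*=\tau_{\partial D}$ is unnecessary.
\item You never need the path to converge to a finite point of $\partial D$. Mixed escapes such as $N\downarrow 0$ with $P\uparrow\infty$ are covered automatically, because $V_\infty$ takes over there.
\item You get a quantitative tail bound rather than a contradiction.
\end{itemize}
The cost is that the barrier and anti-explosion functions no longer play separate roles.

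\textbf{Minor points.}
\begin{itemize}
\item The constant is slightly off. You have $\mathcal LV\le C(1+V_\infty)+C_b(1+V_b)\le\max(C,C_b)(2+V)$, so take $C'=2\max(C,C_b)$. The conclusion is unaffected.
\item Because the constant in (B2) is uniform over an exhaustion of $D$, (B2) is in fact a global bound on $D$. Your choice $K_n=[1/n,n]^2$ instead of the paper's exhaustion is therefore harmless.
\end{itemize}
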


Proposition~\ref{prop:barrier_invariance_paperA} is intentionally separated from Theorem~\ref{thm:maximal_no_interior_explosion_paperA}. The theorem provides the baseline open-domain well-posedness statement: local strong solvability and no interior blow-up before boundary contact. The proposition adds positivity invariance only under an additional barrier condition. This matters conceptually: in the absorbed formulation of Section~\ref{sec:formulations_paperA}, one does not seek to prove \(\tau_{\partial D}=\infty\) a.s.; instead, boundary hitting is allowed, and trajectories are frozen there. Thus, positivity invariance is a property of a particular open-domain analytical regime, not an intrinsic feature of all diffusion interpretations. Figure~\ref{fig:lyapunov_architecture} visualizes the mathematical mechanisms for the prevention of interior blow-up and boundary hitting.

\begin{figure}[htbp]
    \centering
    \includegraphics[width= \textwidth]{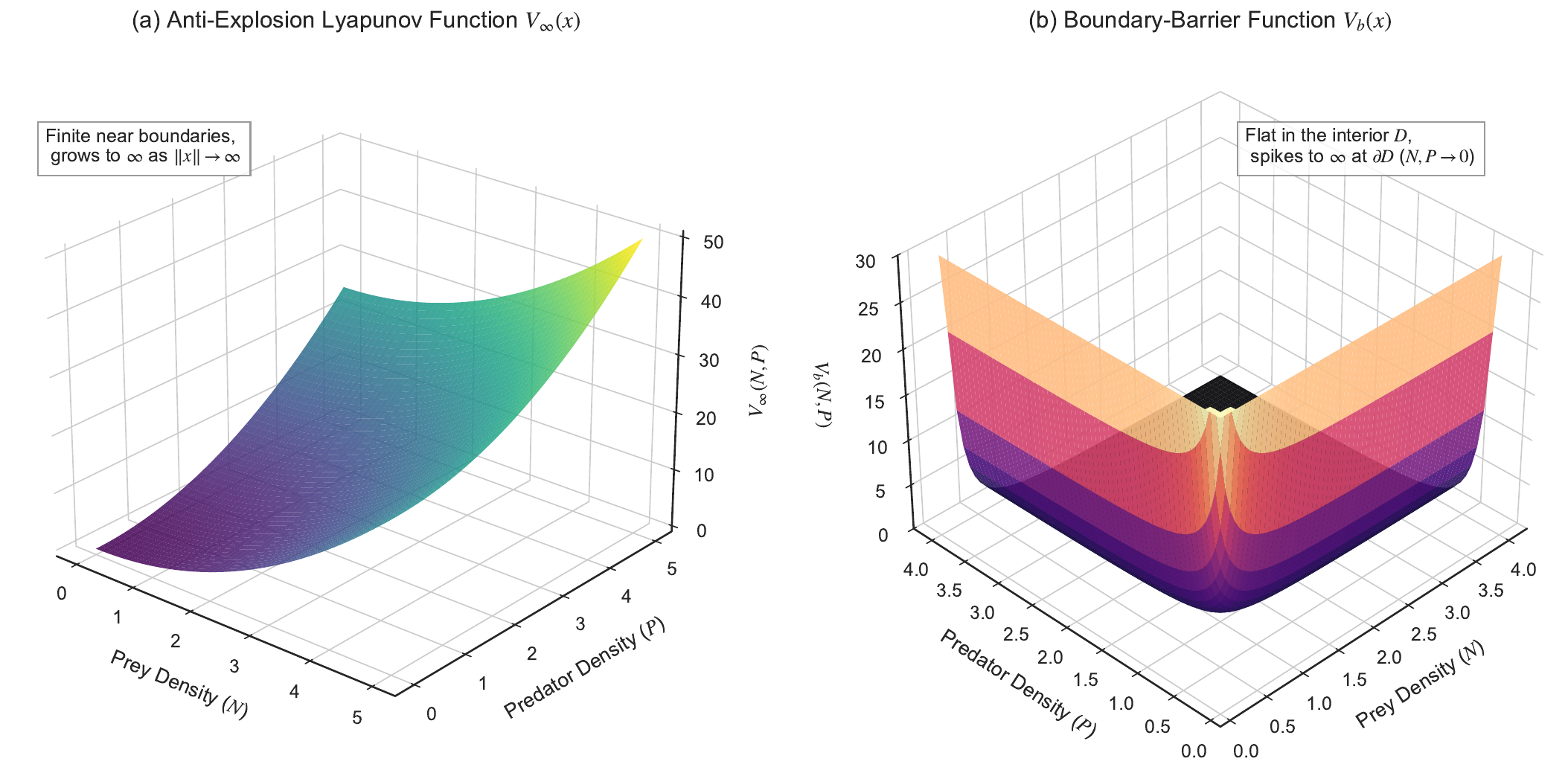}
    \caption{\textbf{Geometric visualization of the two-stage Lyapunov well-posedness architecture.} 
    (a) The anti-explosion Lyapunov function $V_\infty(x)$ (Theorem~\ref{thm:maximal_no_interior_explosion_paperA}) acts as a geometric bowl that grows radially to infinity, strictly confining the stochastic trajectory from escaping to infinite states without restricting its boundary approach. 
    (b) The boundary-barrier Lyapunov function $V_b(x)$ (Proposition~\ref{prop:barrier_invariance_paperA}) acts as a steep geometric cliff. The surface is extremely flat within the interior domain $D=(0,\infty)^2$, permitting natural quasi-stationary fluctuations, but spikes asymptotically to infinity precisely at the boundary faces $N=0$ and $P=0$, theoretically repelling the trajectory and ensuring positivity invariance almost surely.}
    \label{fig:lyapunov_architecture}
\end{figure}

Demographic diffusions often invite comparison with one-dimensional square-root diffusions, where Feller boundary classification \citep{feller_diffusion_1954,peskir_boundary_2015} gives explicit criteria for boundary attainability in terms of drift--diffusion balance. That intuition is useful but incomplete here. In the present two-dimensional predator--prey diffusion:
\begin{enumerate}[label=(\roman*)]
\item drift and diffusion depend on both coordinates via nonlinear interaction terms;
\item the covariance need not be diagonal (indeed, the mechanistic full-covariance closure produces \(a_{12}(x)<0\) in \(D\));
\item the boundary has multiple faces (\(N=0\), \(P=0\)) with distinct biological meanings.
\end{enumerate}
Accordingly, a literal one-dimensional Feller criterion does not transfer directly to the coupled two-dimensional setting. The barrier-based Lyapunov framework above is designed precisely to handle this multivariate, boundary-degenerate structure without forcing an oversimplified 1D analogy.

There is no contradiction between proving positivity invariance under additional assumptions in the open-domain formulation and allowing boundary absorption in the absorbed formulation. These are different mathematical models built from the same local coefficients for different scientific questions. The open-domain formulation supports interior-dynamics analysis (and may, under suitable barriers, remain in \(D\) for all time), while the absorbed formulation supports extinction-permitting analysis by construction. The apparent boundary tension is therefore not a defect but a modeling bifurcation that should be stated explicitly.

\section{Numerical simulation}
\label{sec:numerical_simulation}

This section provides illustrative numerical experiments supporting the structural results established in Sections~\ref{sec:ctmc_cme_paperA}--\ref{sec:wellposed_paperA}. The emphasis is on visualizing the consequences of (i) the mechanistically induced diffusion covariance structure, in particular the predator--prey cross-covariance term, and (ii) the distinction between open-domain and absorbed formulations built from the same local coefficients. 

All simulations use the nondimensional R--M system~\eqref{eq:RM_ODE_paperA} with the Bernoulli-coupled predation--conversion mechanism of Section~\ref{sec:ctmc_cme_paperA}.  Unless otherwise stated, figures are generated from the exact CTMC (Gillespie SSA) or from Euler--Maruyama discretizations of the density-level diffusion under the covariance closures introduced in Section~\ref{sec:diffusion_covariance_paperA}.
Throughout, we fix the baseline parameters
\begin{equation}
\label{eq:sim_baseline_params}
m=1.5,\qquad c=0.4,\qquad e=1,
\end{equation}
and vary the carrying capacity $k$ and system size $\Omega$ as indicated. Under~\eqref{eq:sim_baseline_params}, the Hopf threshold~\eqref{eq:Hopf_threshold_paperA} is
\[
k_H=\frac{m+c}{m-c}=\frac{1.9}{1.1}\approx 1.727.
\]

\subsection{Covariance ellipses: coupled versus split-channel closures}
\label{subsec:sim_ellipses}

Section~\ref{sec:diffusion_covariance_paperA} established that drift-compatible closures can produce different diffusion covariances, and in particular that coupled predation--conversion induces a negative off-diagonal entry $a_{12}(x)<0$, while the split-channel comparator yields $a_{12}^{(S)}(x)=0$ (Proposition~\ref{prop:structural_cross_cov_paperA}). 
We visualize this local geometric difference at the coexistence equilibrium $K_3$ through covariance ellipses.

Figure~\ref{fig:covariance_ellipses} displays $1\sigma$ and $2\sigma$ covariance ellipses at $K_3$ under three closures: (i) the effective coupled $(-1,e)^\top$ closure~\eqref{eq:apred_eff_paperA}, (ii) the exact Bernoulli-coupled closure~\eqref{eq:apred_B_paperA}, and (iii) the drift-compatible split-channel comparator~\eqref{eq:apred_split_paperA}. The two coupled closures produce tilted ellipses, reflecting the negative predator--prey cross-covariance, whereas the split-channel ellipse is axis-aligned. This provides a direct geometric signature of the structural sign result.

The panels also annotate the normalized cross-covariance ratio
\[
\rho(K_3):=\frac{a_{12}(K_3)}{\sqrt{a_{11}(K_3)\,a_{22}(K_3)}}.
\]
As predicted by Section~\ref{sec:diffusion_covariance_paperA}, the Bernoulli-coupled and effective closures share the same off-diagonal term (hence the same sign and similar tilt direction), while differing in the predator variance entry $a_{22}$; see Remark~\ref{rmk:Bernoulli_vs_effective_paperA}.

\begin{figure}[htbp]
\centering
\includegraphics[width=\textwidth]{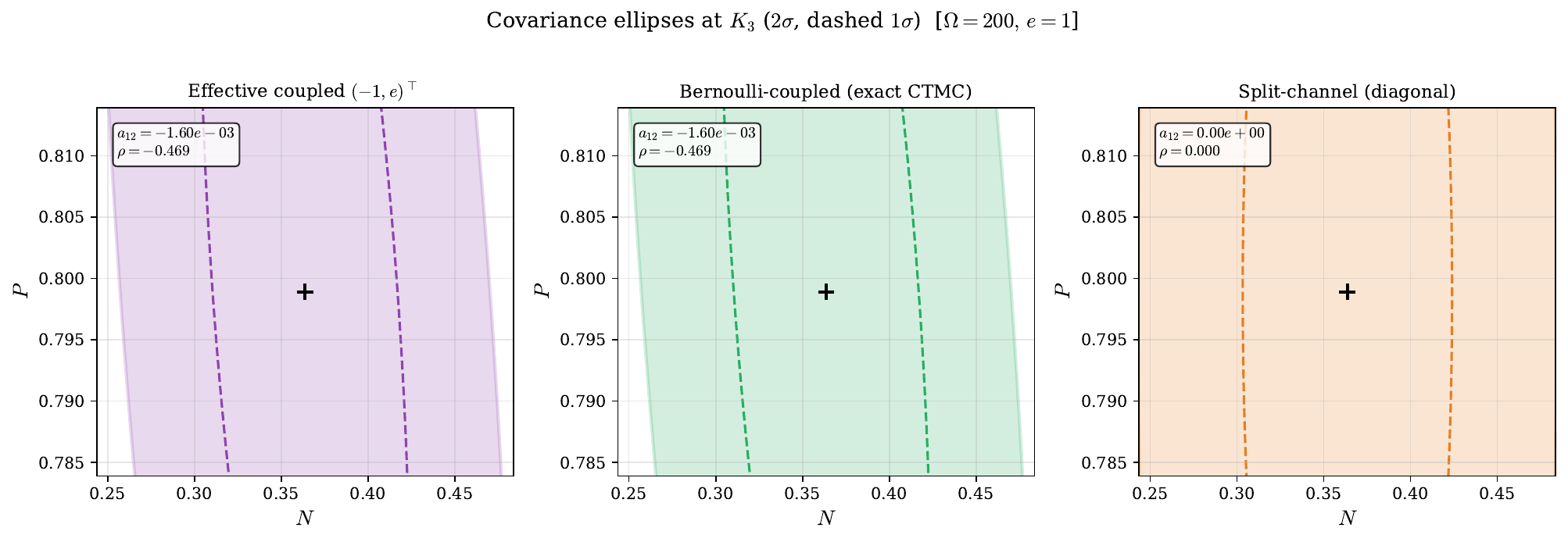}
\caption{Covariance ellipses at $K_3$ under three drift-compatible closures ($\Omega=200$, $e=1$). Left: effective coupled $(-1,e)^\top$. Center: exact Bernoulli-coupled. Right: split-channel (diagonal comparator). Solid and dashed ellipses show $2\sigma$ and $1\sigma$ contours, respectively. The negative cross-covariance tilts the coupled-closure ellipses, while the split-channel ellipse is axis-aligned.}
\label{fig:covariance_ellipses}
\end{figure}

\subsection{Cross-covariance ratio across parameter space}
\label{subsec:sim_heatmap}

To assess when a diagonal-noise approximation may be acceptable in practice (cf.\ Assumption~\ref{ass:diag_condition_paperA} and Remark~\ref{rmk:sufficient_diag_paperA}), we evaluate the normalized cross-covariance ratio $\rho(K_3)$ across the $(m,k)$ plane with $c=0.4$ and $e=1$ fixed.

Figure~\ref{fig:crosscov_heatmap} shows $\rho(K_3)$ under the effective coupled closure. The ratio is uniformly non-positive across the parameter region, in agreement with Proposition~\ref{prop:structural_cross_cov_paperA}, and becomes more negative when $m,k$ increase. The Hopf threshold is overlaid to indicate the partition of $(m,k)$ parameter space into $\Lambda_2$ and $\Lambda_1$.

\begin{figure}[htbp]
\centering
\includegraphics[width=0.7\textwidth]{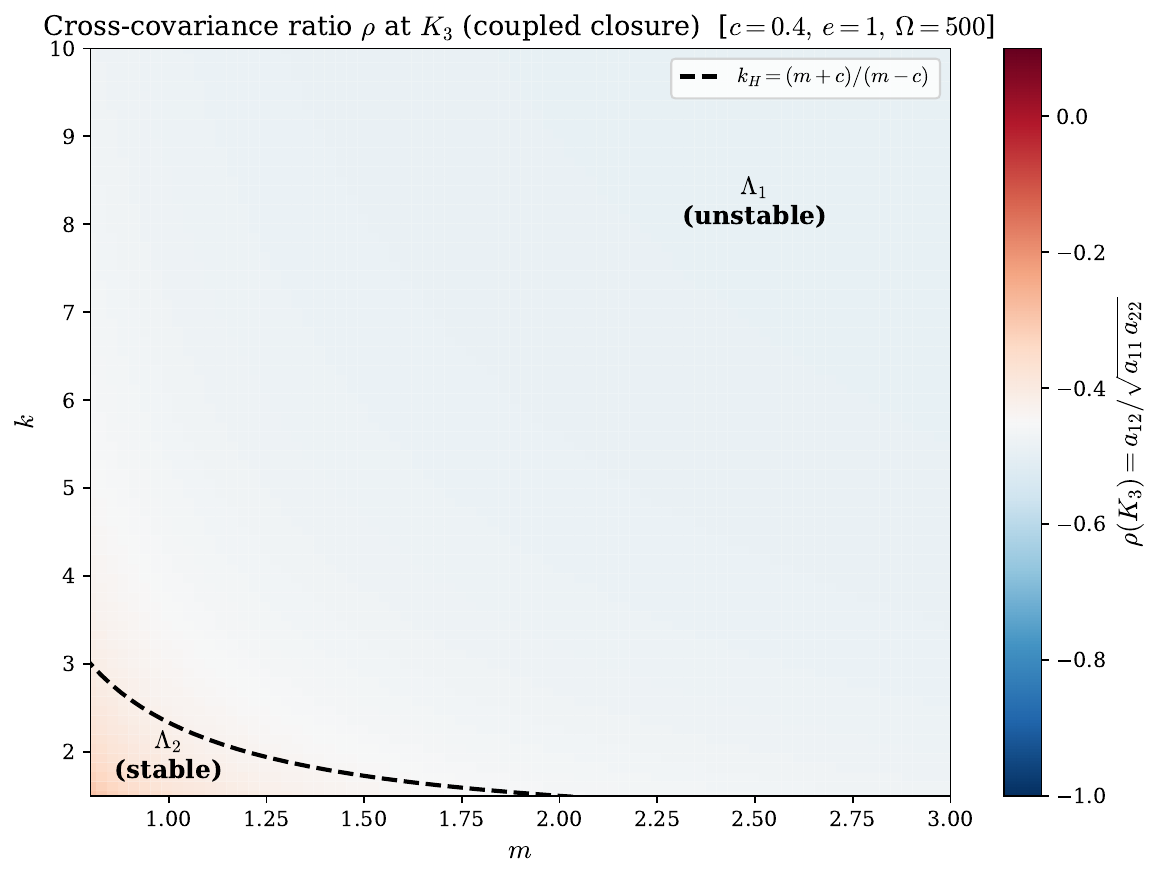}
\caption{Heatmap of $\rho(K_3)=a_{12}/\sqrt{a_{11}a_{22}}$ at coexistence under the effective coupled closure, scanned over $(m,k)$ with $c=0.4$, $e=1$, $\Omega=500$. The dashed curve is the Hopf threshold $k_H=(m+c)/(m-c)$. White regions indicate parameter values for which coexistence is infeasible.}
\label{fig:crosscov_heatmap}
\end{figure}

\subsection{Absorbed versus open-domain boundary behavior}
\label{subsec:sim_boundary}

Section~\ref{sec:formulations_paperA} introduced two distinct global interpretations built from the same local coefficients: an absorbed process for extinction-permitting dynamics and an open-domain formulation for interior dynamics. To visualize the modeling consequences of this distinction, we consider a parameter regime with amplified fluctuations ($\Omega=150$, $k=5.5$) and simulate representative sample paths over a finite horizon.

Figure~\ref{fig:boundary_analysis}(a,d) uses the absorbed formulation, in which trajectories are frozen at first boundary contact, yielding an empirical extinction-time distribution. Figure~\ref{fig:boundary_analysis}(b) shows a positivity-corrected Euler--Maruyama proxy intended to visualize the interior-dynamics perspective associated with the open-domain formulation: when a discrete-time step produces a small sign violation, a numerical clipping near zero is applied to suppress step-size-induced boundary crossing behaviors. This numerical device is used only for visualization and should not be conflated with the continuous-time well-posedness statements proved in Section~\ref{sec:wellposed_paperA}.

The contrast is nevertheless instructive. The absorbed formulation produces a nontrivial distribution of extinction times, while the open-domain proxy emphasizes continued interior fluctuation dynamics. Thus, although the two formulations share the same local drift and covariance in the interior, they answer different modeling questions and produce different observables.

\begin{figure}[htbp]
\centering
\includegraphics[width=0.9\textwidth]{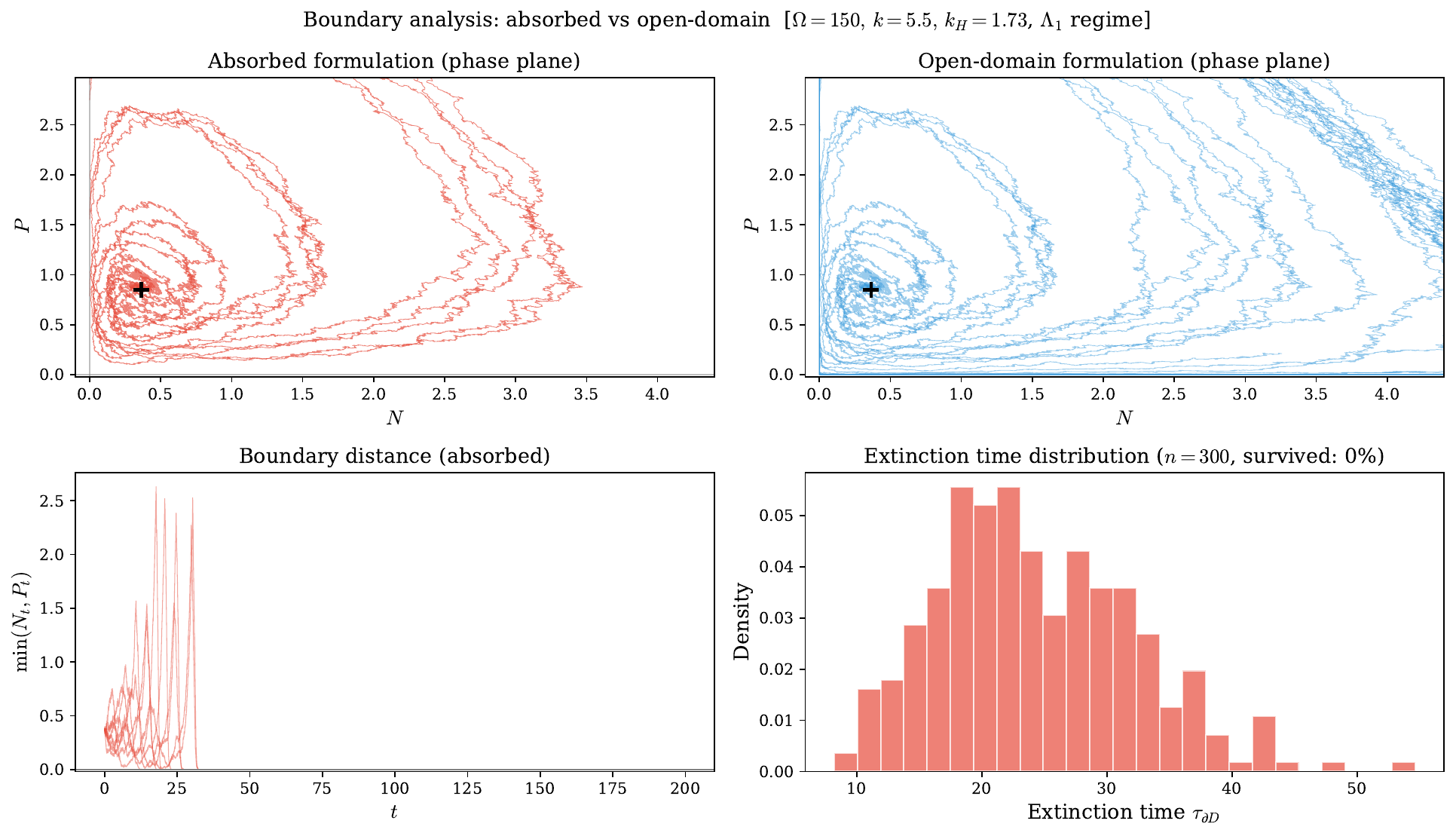}
\caption{Boundary behavior at $\Omega=150$, $k=5.5$ (a fluctuation-amplified regime). (a) Absorbed formulation phase-plane paths. (b) Positivity-corrected Euler--Maruyama proxy for the open-domain interior-dynamics perspective. (c) Boundary-distance diagnostic $\min(N_t,P_t)$ for absorbed paths. (d) Empirical extinction-time distribution from $n=300$ absorbed simulations.}
\label{fig:boundary_analysis}
\end{figure}

\subsection{Parameter-regime visualization relative to the Hopf threshold}
\label{subsec:sim_hopf}

Finally, we provide a brief visual comparison of stochastic trajectories across the parameter partition induced by the Hopf threshold~\eqref{eq:Hopf_threshold_paperA}. Figure~\ref{fig:hopf_regimes} compares full-covariance diffusion simulations (with deterministic trajectories overlaid) for $k=0.7\,k_H$, $k=k_H$, and $k=1.5\,k_H$.

The purpose of this figure is primarily organizational: it illustrates that the partition into $\Lambda_2$ and $\Lambda_1$ corresponds to visibly different fluctuation regimes (stable-coexistence fluctuations, threshold-amplified oscillatory fluctuations, and noise-perturbed cycling, respectively).

\begin{figure}[htbp]
\centering
\includegraphics[width=0.9\textwidth]{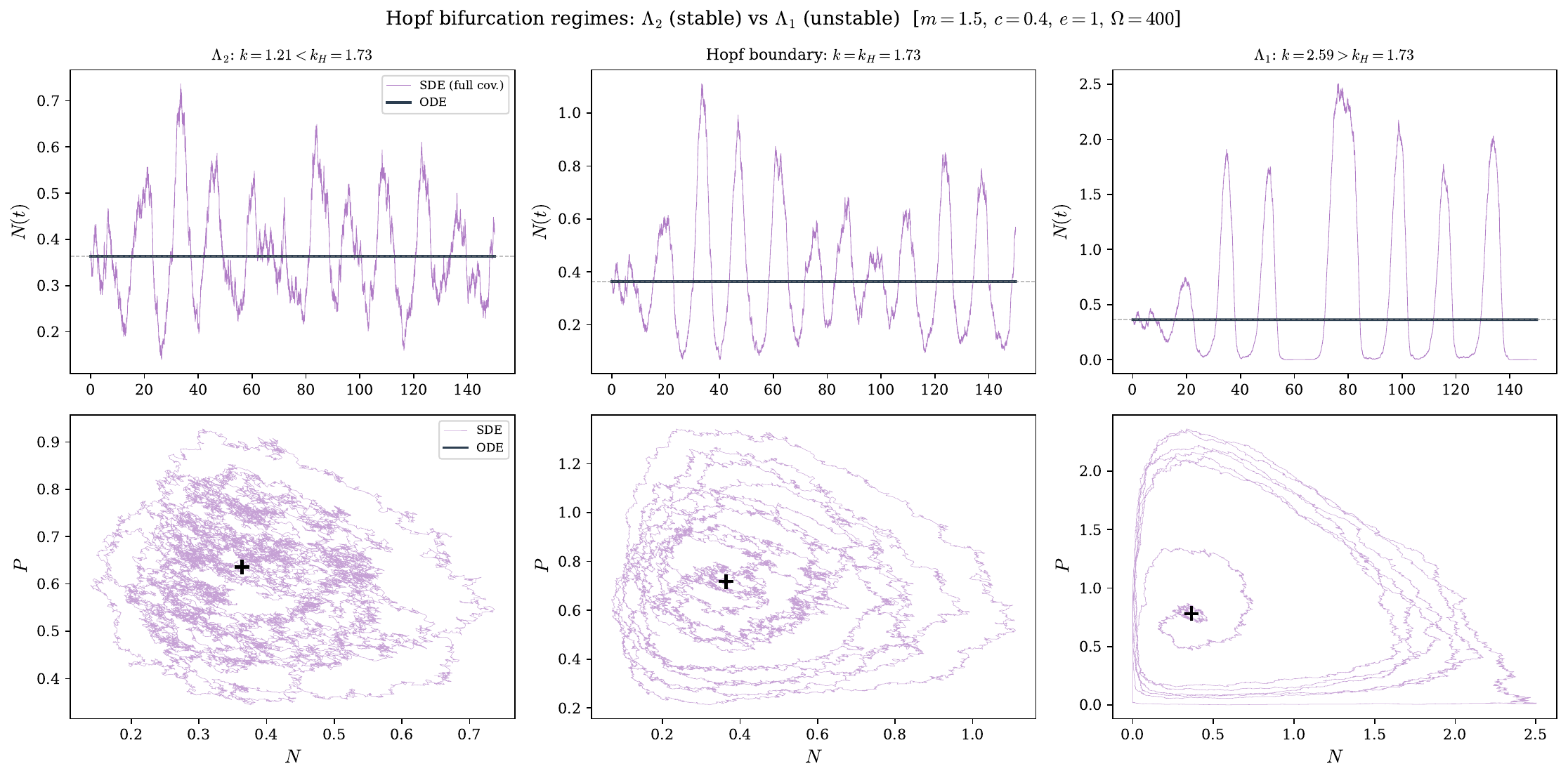}
\caption{Trajectory regimes relative to the Hopf threshold at $\Omega=400$. Left: $k=0.7\,k_H$ (inside $\Lambda_2$). Center: $k=k_H$ (threshold). Right: $k=1.5\,k_H$ (inside $\Lambda_1$). Top row: prey time series (diffusion vs.\ ODE). Bottom row: phase-plane trajectories.}
\label{fig:hopf_regimes}
\end{figure}

Taken together, these numerical experiments reinforce the structural conclusions developed in the analytical sections. The covariance-ellipse visualizations provide a direct geometric manifestation of the mechanistically induced negative cross-covariance; the parameter-space scan shows that this structural sign persists across the parameter sweeps; and the boundary comparisons clarify that absorbed and open-domain formulations, though locally identical, encode distinct modeling objectives. The simulations, therefore, serve not as independent evidence but as visual diagnostics of the theoretical architecture: they illustrate how event-channel design, covariance structure, and boundary interpretation propagate from microscopic assumptions to macroscopic stochastic behavior.

\section{Discussion}
\label{sec:discussion_paperA}

This paper develops a mechanistically consistent stochastic framework for the R--M predator--prey model.
It demonstrates that diffusion-level drift and covariance must be rigorously derived from an integer-valued CTMC/CME description rather than imposed directly at the SDE level. 
Ultimately, the value of this work lies not in proposing yet another stochastic variant of the R--M model, but in establishing a mathematically rigorous bridge linking:
(i) microscopic event stoichiometry to macroscopic demographic covariance,
(ii) open-domain survival to absorbed-boundary extinction, and 
(iii) boundary-degenerate local coefficients to a globally tailored well-posedness architecture.

\paragraph{Mechanistic covariance structure as a primary result (Main result 1).}
Our first principal contribution isolates the macroscopic consequences of microscopic event coupling.
Starting from the Bernoulli-coupled predation--conversion CTMC, the diffusion covariance is strictly inherited through the CME-consistent identity
\[
a(x)=\frac{1}{\Omega}\,S\,\mathrm{diag}(f(x))\,S^\top.
\]
Within this framework, the predator--prey cross-covariance is not a tunable modeling embellishment: it is an unavoidable structural output of the stoichiometric coupling between prey loss and predator gain during predation--conversion events. 
In particular, the coupled closures systematically studied in Section~\ref{sec:diffusion_covariance_paperA} strictly produce a negative predation-induced off-diagonal term.
This rigorously exposes the fallacy that drift equivalence implies covariance equivalence, 
demonstrating that diagonal-noise closures must be relegated to regime-specific approximations rather than treated as default mechanistic baselines.

\paragraph{Boundary-aware formulation distinction as a modeling result (Main result 2).}
The second principal result is the formal bifurcation of two global diffusion formulations built from the exact same local coefficients.
One is an open-domain formulation on \(D=(0,\infty)^2\), used strictly for interior, survival-conditioned dynamics.
The other is an absorbed formulation, which allows for extinction-permitting dynamics. 
While routinely blurred or entirely ignored in the stochastic ecological literature, this distinction is both mathematically profound and biologically consequential. 
The open-domain formulation provides the precise analytical arena for coexistence fluctuations and local spectral geometry.
Conversely, the absorbed formulation is the exclusive framework for evaluating extinction probabilities and mean persistence times, mirroring the irreversibility of CTMC extinction.
By stating this bifurcation explicitly, our framework eradicates a pervasive source of modeling ambiguity.

\paragraph{Two-stage well-posedness architecture as an analytical result (Main result 3).}
The third principal result is the customized two-stage well-posedness architecture for the open-domain SDE. Section~\ref{sec:wellposed_paperA} deliberately separates:
\begin{enumerate}[label=(\roman*)]
\item maximal strong solvability together with non-explosion in the interior before boundary hitting, and
\item positivity invariance under an additional boundary-barrier Lyapunov condition.
\end{enumerate}
This separation is not a mere technical bookkeeping device; it is a vital architectural safeguard.
It explicitly aligns the mathematical theorems with the biological modeling distinction formalized in Section~\ref{sec:formulations_paperA}.
Positivity invariance is a property that may hold in a specific open-domain analytical regime, 
but it is not a default mathematical right one can claim when the intended model is extinction-permitting. 
This formulation thereby avoids analytical overreach and provides a pristine mathematical template for other boundary-degenerate population diffusions.

\paragraph{Scope and limitations.}
We contextualize these contributions within several recognized limitations.
First, the diffusion model remains a second-order approximation to the exact CTMC/CME. 
It is therefore most robust in regimes where the system-size parameter \(\Omega\) is sufficiently large and event counts over observational windows are strictly non-sparse. 
For severely small populations, integer-valued effects and immediate boundary absorption dominate, 
rendering exact CTMC/SSA descriptions as the mandatory baseline.
Second, the present paper emphasizes exact structural derivation and well-posedness architecture rather than quantitative diffusion error bounds; 
no uniform-in-time approximation bound in \(\Omega\) is pursued here. 
Third, the open-domain analysis is formulated at the abstract level of local coefficients and Lyapunov or barrier criteria.
Verifying these conditions in highly specific or extreme parameter regimes may require bespoke analytical constructions beyond the generic bounding techniques presented.

\paragraph{Generalizations and portability of the framework.}
Crucially, the analytical framework developed here is highly portable and extends well beyond the R--M model.
The core covariance identity
\[
a(x)=\frac{1}{\Omega}\,S\,\mathrm{diag}(f(x))\,S^\top
\]
depends universally on event stoichiometry and intensity functions, 
completely agnostic to the specific Holling type~II form.
Thus, this exact blueprint can be deployed for alternative predator--prey mechanisms (e.g., ratio-dependent interactions) and seamlessly scaled to high-dimensional multi-species food-web models.
More broadly, the integration of biological mechanisms, rigorous mathematical modeling, and threshold-driven organization reflects a portable analytical paradigm that extends to diverse biological systems beyond ecological interactions \citep{liang_global_2025,wang_damage-structured_2026}.
In every such extension, the same three modeling questions reappear.
First, which event couplings mandate nontrivial cross-covariances? 
Second, which specific boundary faces represent biologically distinct extinction cascades?
Third, which open-domain versus absorbed interpretations align with the target observables?
This paper provides the foundational, reusable template for answering those questions explicitly.

\paragraph{Conclusions.}
The central thesis of this paper is that demographic covariance in predator--prey diffusion models must be treated as a mechanistic output, not a stylistic choice.
For the R--M system, predation--conversion coupling structurally mandates a negative predator--prey cross-covariance.
This feature interacts fundamentally with boundary interpretation and stochastic well-posedness. 
We derive the diffusion directly from an integer-valued CTMC model, explicitly separate open-domain from absorbed formulations, and formalize a bespoke two-stage well-posedness architecture.
Together, these steps establish a mathematically rigorous foundation for covariance-consistent, boundary-aware stochastic ecological modeling.

\begin{appendices}

\section{From the CME to the diffusion approximation}
\label{app:CME_diffusion}

This appendix records a standard but explicit route from a mechanistic CTMC, via the CME, to the diffusion approximation used in the main text. It also clarifies the origin of the covariance identity
\begin{equation}\label{eq:appA_key}
a(x)=\Sigma(x)\Sigma(x)^\top
=\frac{1}{\Omega}\,S\,\mathrm{diag}\big(f(x)\big)\,S^\top,
\end{equation}
which underlies the full-covariance chemical Langevin formulation. At the count level, the corresponding infinitesimal covariance is
\begin{equation}\label{eq:appA_key_counts}
G(X)=S\,\mathrm{diag}\big(\lambda(X)\big)\,S^\top.
\end{equation}

For the predator--prey application in Section~\ref{sec:ctmc_cme_paperA}, the key message is that the diffusion matrix is not an ad hoc modeling choice: it is inherited directly from the stoichiometry and propensities of the underlying CTMC/CME description.

\subsection{CTMC formulation and the CME}
\label{app:CME_setup}

Let \(X(t)\in\mathbb N_0^d\) denote the vector of population counts (in the main text, \(d=2\) for prey and predator). Consider \(K\) reaction/event channels indexed by \(k=1,\dots,K\), with stoichiometric increments \(\nu_k\in\mathbb Z^d\). Collect them in the stoichiometric matrix
\[
S=[\nu_1\ \nu_2\ \cdots\ \nu_K]\in\mathbb Z^{d\times K}.
\]
When channel \(k\) fires at state \(X\), the process jumps according to
\[
X \longmapsto X+\nu_k.
\]

Let \(\lambda_k(X)\ge 0\) be the propensity of channel \(k\) at state \(X\). By definition, for a short interval \([t,t+dt)\),
\begin{equation}\label{eq:appA_propensity}
\mathbb P(\text{channel \(k\) fires in }[t,t+dt)\mid X(t)=X)
=
\lambda_k(X)\,dt+o(dt),
\end{equation}
and, up to \(o(dt)\), at most one event occurs.

Let \(p(X,t):=\mathbb P(X(t)=X)\). Conditioning on what occurs during \([t,t+dt)\), there are (up to \(o(dt)\)) two contributions to \(p(X,t+dt)\):
\begin{enumerate}[label=(\roman*)]
\item no event occurs and the system was already in state \(X\), contributing
\begin{equation}\label{eq:appA_CME_contrib1}
p(X,t)\Big(1-\sum_{k=1}^K\lambda_k(X)\,dt\Big)+o(dt);
\end{equation}
\item exactly one event \(k\) occurs and moves the system from \(X-\nu_k\) into \(X\), contributing
\begin{equation}\label{eq:appA_CME_contrib2}
\sum_{k=1}^K p(X-\nu_k,t)\,\lambda_k(X-\nu_k)\,dt+o(dt).
\end{equation}
\end{enumerate}
Summing these terms, subtracting \(p(X,t)\), dividing by \(dt\), and letting \(dt\downarrow 0\) yields the CME:
\begin{equation}\label{eq:appA_CME_general}
\frac{\partial}{\partial t}p(X,t)
=
\sum_{k=1}^K
\Big[
\lambda_k(X-\nu_k)\,p(X-\nu_k,t)
-
\lambda_k(X)\,p(X,t)
\Big].
\end{equation}

\subsection{Generator and moment identities}
\label{app:CME_moments}

Let \(\varphi:\mathbb N_0^d\to\mathbb R\) be a test function of suitable growth. Multiplying~\eqref{eq:appA_CME_general} by \(\varphi(X)\), summing over states, and using index shifts gives
\begin{equation}\label{eq:appA_generator_identity}
\frac{d}{dt}\,\mathbb E[\varphi(X(t))]
=
\mathbb E\!\left[(\mathcal A\varphi)(X(t))\right],
\end{equation}
where the CTMC generator is
\begin{equation}\label{eq:appA_CTMC_generator}
(\mathcal A\varphi)(X)
=
\sum_{k=1}^K \lambda_k(X)\,\big(\varphi(X+\nu_k)-\varphi(X)\big).
\end{equation}

We compute the first moment (drift).
Taking \(\varphi(X)=X_i\) gives
\[
\varphi(X+\nu_k)-\varphi(X)=(\nu_k)_i,
\]
hence
\begin{equation}\label{eq:appA_first_moment}
\frac{d}{dt}\,\mathbb E[X_i(t)]
=
\mathbb E\!\left[\sum_{k=1}^K \lambda_k(X(t))\,(\nu_k)_i\right].
\end{equation}
In vector form, this is the exact CTMC drift identity
\begin{equation}\label{eq:appA_first_moment_vector}
\frac{d}{dt}\,\mathbb E[X(t)]
=
\mathbb E\!\left[S\,\lambda(X(t))\right],
\end{equation}
where \(\lambda(X)=(\lambda_1(X),\dots,\lambda_K(X))^\top\).

We compute the second moment and quadratic variation structure.
Taking \(\varphi(X)=X_iX_j\) and using
\[
(X+\nu_k)_i(X+\nu_k)_j-X_iX_j
=
X_i(\nu_k)_j+X_j(\nu_k)_i+(\nu_k)_i(\nu_k)_j,
\]
we obtain
\begin{equation}\label{eq:appA_second_moment_raw}
\frac{d}{dt}\,\mathbb E[X_i(t)X_j(t)]
=
\mathbb E\!\left[
\sum_{k=1}^K \lambda_k(X(t))
\Big(
(X(t))_i(\nu_k)_j+(X(t))_j(\nu_k)_i+(\nu_k)_i(\nu_k)_j
\Big)
\right].
\end{equation}
The term \((\nu_k)_i(\nu_k)_j\) is the key source of infinitesimal covariance; in matrix form it appears as \(\nu_k\nu_k^\top\).

\subsection{Infinitesimal covariance at the CTMC level}
\label{app:CME_inf_cov}

Let \(\Delta X := X(t+dt)-X(t)\). Conditioning on \(X(t)=X\), equation~\eqref{eq:appA_propensity} implies
\[
\mathbb E[\Delta X\mid X(t)=X]
=
\sum_{k=1}^K \lambda_k(X)\,\nu_k\,dt + o(dt),
\]
and
\[
\mathbb E[\Delta X\Delta X^\top\mid X(t)=X]
=
\sum_{k=1}^K \lambda_k(X)\,\nu_k\nu_k^\top\,dt + o(dt).
\]
Therefore,
\begin{align}
\mathrm{Cov}(\Delta X\mid X(t)=X)
&=
\mathbb E[\Delta X\Delta X^\top\mid X]
-
\mathbb E[\Delta X\mid X]\mathbb E[\Delta X\mid X]^\top \notag\\
&=
\sum_{k=1}^K \lambda_k(X)\,\nu_k\nu_k^\top\,dt + o(dt), \label{eq:appA_cov_short_time}
\end{align}
since the outer product of the conditional mean is \(O(dt^2)\).

Hence, the exact infinitesimal covariance per unit time at the count scale is
\begin{equation}\label{eq:appA_inf_cov_counts}
G(X)
:=
\lim_{dt\downarrow 0}\frac{1}{dt}\,\mathrm{Cov}(\Delta X\mid X(t)=X)
=
\sum_{k=1}^K \lambda_k(X)\,\nu_k\nu_k^\top
=
S\,\mathrm{diag}(\lambda(X))\,S^\top.
\end{equation}
This identity is exact for the CTMC and is the source of the diffusion covariance used in the chemical Langevin approximation.

\subsection{Density-dependent scaling and the diffusion matrix}
\label{app:KM_FP}

Introduce a system-size parameter \(\Omega\gg 1\) and define density variables
\[
x:=\frac{X}{\Omega}\in\mathbb R_+^d.
\]
Assume density-dependent propensities of Kurtz type:
\begin{equation}\label{eq:appA_density_dependent_prop}
\lambda_k(X)
=
\Omega\,f_k\!\left(\frac{X}{\Omega}\right)
=
\Omega\,f_k(x),
\qquad k=1,\dots,K,
\end{equation}
for smooth rate functions \(f_k:\mathbb R_+^d\to\mathbb R_+\). Let \(f(x)=(f_1(x),\dots,f_K(x))^\top\).

Under the standard chemical Langevin scaling, the density process is approximated by a diffusion with drift and covariance
\begin{equation}\label{eq:appA_drift_diff_density}
b(x)=S\,f(x),
\qquad
a(x)=\frac{1}{\Omega}\,S\,\mathrm{diag}\big(f(x)\big)\,S^\top.
\end{equation}
The same formula follows directly by rescaling the CTMC short-time covariance~\eqref{eq:appA_cov_short_time}: since \(\Delta x=\Delta X/\Omega\) and $x = X/\Omega$, 
\begin{align}
\mathrm{Cov}(\Delta x\mid x)
&=
\frac{1}{\Omega^2}\,\mathrm{Cov}(\Delta X\mid X)\notag\\
&=
\frac{1}{\Omega^2}
\left(
\sum_{k=1}^K \lambda_k(X)\,\nu_k\nu_k^\top\,dt
\right)
+o\!\left(\frac{dt}{\Omega}\right)\notag\\
&=
\frac{1}{\Omega}
\left(
\sum_{k=1}^K f_k(x)\,\nu_k\nu_k^\top
\right)dt
+o\!\left(\frac{dt}{\Omega}\right)\notag\\
&=
a(x)\,dt+o\!\left(\frac{dt}{\Omega}\right). \label{eq:appA_cov_density_scale}
\end{align}
Thus, the diffusion matrix \(a(x)\) is precisely the density-scale counterpart of the exact count-scale infinitesimal covariance \(G(X)\).

\section{Proofs for the open-domain well-posedness results in Section~\ref{sec:wellposed_paperA}}
\label{app:wellposedness_proofs}

This appendix provides detailed proofs of the main well-posedness statements in Section~\ref{sec:wellposed_paperA}. We work on the open state space
\[
D=(0,\infty)^2\subset\mathbb R^2,
\]
and consider the It\^o SDE
\begin{equation}
\label{eq:appC_sde}
dx_t=b(x_t)\,dt+\Sigma(x_t)\,dW_t,\qquad x_0\in D,
\end{equation}
where \(W\) is an \(r\)-dimensional Brownian motion and \(a(x)=\Sigma(x)\Sigma(x)^\top\).

For convenience, we restate the generator acting on \(V\in C^2(D)\):
\begin{equation}
\label{eq:appC_generator}
\mathcal LV(x)=\nabla V(x)\cdot b(x)+\frac12\mathrm{tr}\!\big(a(x)\nabla^2V(x)\big).
\end{equation}
We also recall the first boundary-hitting time
\begin{equation}
\label{eq:appC_tau_boundary}
\tau_{\partial D}:=\inf\{t\ge 0:\ x_t\notin D\}.
\end{equation}

\subsection{Preliminaries: exhaustion, localization, and a geometric lemma}
\label{appC:preliminaries}

We begin with standard localization objects and a geometric fact used in the proof of Theorem~\ref{thm:maximal_no_interior_explosion_paperA}.

We detail exhaustion of \(D\) by precompact sets.
Fix an increasing exhaustion \((K_n)_{n\ge 1}\) of \(D\) by compact sets with nonempty interior such that
\begin{equation}
\label{eq:appC_exhaustion}
K_n\subset \operatorname{int}(K_{n+1}),\qquad K_n\Subset D,\qquad \bigcup_{n=1}^\infty K_n=D.
\end{equation}
A concrete choice is
\[
K_n:=\Big[\frac1n,n\Big]^2\cap \overline{B(0,n)}.
\]
For a continuous adapted process \(x\), define the exit times
\begin{equation}
\label{eq:appC_tau_n}
\tau_n:=\inf\{t\ge 0:\ x_t\notin K_n\}.
\end{equation}
Then \(\tau_n\) is increasing in \(n\), and we set
\begin{equation}
\label{eq:appC_tau_star_def}
\tau_*:=\lim_{n\to\infty}\tau_n\in(0,\infty].
\end{equation}

\begin{lemma}[Leaving every compact subset of \(D\) while norm-bounded implies boundary approach]
\label{lem:appC_boundary_approach}
Let \(D=(0,\infty)^2\). Suppose \((x_n)_{n\ge1}\subset D\) satisfies
\[
\sup_{n\ge1}\|x_n\|<\infty.
\]
Furthermore, for every compact $K\Subset D$, only finitely many $x_n$ lie in $K$. 
Then
\[
\operatorname{dist}(x_n,\partial D)\to 0
\quad\text{along a subsequence}.
\]
Equivalently, there exists a subsequence \((x_{n_j})\) with \(x_{n_j}\to x_\infty\in\partial D\) after passing to a further subsequence.
\end{lemma}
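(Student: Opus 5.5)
The proof is a compactness argument in $\mathbb R^2$. Since $\sup_n\|x_n\|=:R<\infty$, the sequence lies in the closed ball $\overline{B(0,R)}$, which is compact in $\mathbb R^2$. By Bolzano--Weierstrass there is a subsequence $(x_{n_j})$ with $x_{n_j}\to x_\infty\in\mathbb R^2$. Because every $x_{n_j}\in D\subset Q$ and $Q=[0,\infty)^2$ is closed, we get $x_\infty\in Q=D\cup\partial D$. It therefore remains to exclude $x_\infty\in D$.

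Suppose, for contradiction, that $x_\infty\in D$. Since $D$ is open, there is $r>0$ with $\overline{B(x_\infty,r)}\subset D$. Concretely, take $r=\tfrac12\min(N_\infty,P_\infty)$, where $x_\infty=(N_\infty,P_\infty)$. The set $K:=\overline{B(x_\infty,r)}$ is then a compact subset of $D$, that is, $K\Subset D$. Convergence gives $x_{n_j}\in K$ for all $j\ge j_0$. This places infinitely many distinct indices $n_j$ with $x_{n_j}\in K$, contradicting the hypothesis that only finitely many $x_n$ lie in any compact $K\Subset D$. Hence $x_\infty\in\partial D$.

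Finally, $\operatorname{dist}(\cdot,\partial D)$ is $1$-Lipschitz, so
\[
\operatorname{dist}(x_{n_j},\partial D)\le\|x_{n_j}-x_\infty\|+\operatorname{dist}(x_\infty,\partial D)=\|x_{n_j}-x_\infty\|\to0 .
\]
This gives both formulations of the conclusion. On $D$ one can also read this off explicitly as $\operatorname{dist}(x,\partial D)=\min(N,P)$.

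There is no serious obstacle. The only point needing care is the phrase \emph{infinitely many $x_n$}: it should be read as infinitely many indices $n$, not infinitely many distinct points. With that reading the subsequence argument is immediate. If the intended reading were distinct points, a repeated point lying in $D$ would trivially sit in a compact subset $\{x\}\Subset D$, so the hypothesis would still exclude it.
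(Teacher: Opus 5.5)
Your proof is correct and takes essentially the same route as the paper: Bolzano--Weierstrass gives a convergent subsequence, and if the limit were interior, a closed ball around it would be a compact subset of $D$ containing a whole tail of that subsequence, a contradiction. Your $1$-Lipschitz bound on $\operatorname{dist}(\cdot,\partial D)$ makes the final step explicit.

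One correction to your closing aside. Under the ``distinct points'' reading the lemma would actually be false. A constant sequence $x_n\equiv x=(N,P)\in D$ satisfies that hypothesis, since every compact set contains at most one distinct value, yet $\operatorname{dist}(x_n,\partial D)=\min(N,P)>0$ for all $n$. So the index reading you adopt is necessary, not merely convenient, and it is the reading used in Step~5 of the proof of Theorem~\ref{thm:maximal_no_interior_explosion_paperA}.
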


\begin{proof}
Since \((x_n)\) is norm-bounded in \(\mathbb R^2\), Bolzano--Weierstrass theorem yields a convergent subsequence \(x_{n_j}\to x_\infty\in\overline D=[0,\infty)^2\). 
Suppose, for contradiction, that $x_{\infty}\in D$. Then there exists $\varepsilon>0$ such that 
\(\overline{B(x_\infty,\varepsilon)}\subset D\). 
By convergence, for sufficiently large $j$ we have \(x_{n_j}\in \overline{B(x_\infty,\varepsilon)}\), so the infinite tail $\{x_{n_j}:j\ge J\}$ is contained in the compact set \(\overline{B(x_\infty,\varepsilon)}\Subset D\). 
This contradicts the standing assumption that for every compact \(K\Subset D\) only finitely many terms of $(x_n)$ belong to $K$. Hence \(x_\infty\in\partial D\), which implies \(\operatorname{dist}(x_{n_j},\partial D)\to 0\).
\end{proof}

\subsection{Complete proof of Theorem~\ref{thm:maximal_no_interior_explosion_paperA}}
\label{appC:proof_theorem_maximal}

\begin{proof}
The proof is divided into five steps.

\medskip
\noindent\textbf{Step 1: Local strong solutions on each localization domain.}
Fix \(n\ge1\). By assumption \textbf{(A1)}, the coefficients \(b,\Sigma\) are Lipschitz and bounded on \(K_n\Subset D\). Extend \(b,\Sigma\) to globally Lipschitz, bounded coefficients \(b^{(n)},\Sigma^{(n)}\) on \(\mathbb R^2\) such that
\[
b^{(n)}(x)=b(x),\qquad \Sigma^{(n)}(x)=\Sigma(x)\qquad \text{for }x\in K_n.
\]
(For example, use a cutoff and McShane-type extension componentwise.)
Then the globally defined SDE
\begin{equation}
\label{eq:appC_sde_extended}
dX_t^{(n)}=b^{(n)}(X_t^{(n)})\,dt+\Sigma^{(n)}(X_t^{(n)})\,dW_t,\qquad X_0^{(n)}=x_0,
\end{equation}
admits a unique global strong solution by standard It\^o SDE theory.

Define the exit time from \(K_n\) for this solution:
\[
\tau_n^{(n)}:=\inf\{t\ge0:\ X_t^{(n)}\notin K_n\}.
\]
On \([0,\tau_n^{(n)})\), \(X_t^{(n)}\in K_n\), so \(b^{(n)}=b\) and \(\Sigma^{(n)}=\Sigma\). Therefore \(X^{(n)}\) solves the original SDE~\eqref{eq:appC_sde} up to \(\tau_n^{(n)}\).

\medskip
\noindent\textbf{Step 2: Consistency and pathwise uniqueness patching.}
Let \(m>n\). Consider the extended solutions \(X^{(n)}\) and \(X^{(m)}\) driven by the same Brownian motion and the same initial condition. On \(K_n\), both sets of extended coefficients agree with the original coefficients \((b,\Sigma)\), and \textbf{(A1)} yields local Lipschitz continuity on \(K_n\). By pathwise uniqueness for locally Lipschitz SDEs up to the exit time from \(K_n\), we have
\[
X_t^{(n)}=X_t^{(m)}\qquad \text{for all }t<\tau_n^{(n)}\wedge \tau_n^{(m)}\quad\text{a.s.}
\]
Hence the stopped processes are consistent across \(n\), and we may define a process \(x_t\) on \(t<\tau_*\) by patching:
\[
x_t:=X_t^{(n)}\qquad\text{whenever }t<\tau_n,
\]
where \(\tau_n\) is the (common) exit time from \(K_n\) under the patched process. This yields a strong solution of~\eqref{eq:appC_sde} on \([0,\tau_*)\), with
\[
\tau_*=\lim_{n\to\infty}\tau_n.
\]
By the same local uniqueness argument, this maximal solution is pathwise unique up to \(\tau_*\).

\medskip
\noindent\textbf{Step 3: A Lyapunov estimate up to bounded-radius localization.}
Fix \(R>0\) and define
\begin{equation}
\label{eq:appC_sigma_R}
\sigma_R:=\inf\{t\ge0:\ \|x_t\|\ge R\}\wedge \tau_{\partial D}.
\end{equation}
For each \(R\), the stopped process \(x_{t\wedge \sigma_R}\) remains in the bounded set \(D\cap \overline{B(0,R)}\), so It\^o's formula applies to \(V_\infty(x_{t\wedge \sigma_R})\). Using \eqref{eq:appC_generator} and assumption \textbf{(A2)},
\begin{align}
V_\infty(x_{t\wedge \sigma_R})
&=V_\infty(x_0)+\int_0^{t\wedge \sigma_R}\mathcal LV_\infty(x_s)\,ds + M_{t\wedge \sigma_R}\notag\\
&\le V_\infty(x_0)+ C\int_0^{t\wedge \sigma_R}(1+V_\infty(x_s))\,ds + M_{t\wedge \sigma_R},
\label{eq:appC_ito_Vinf}
\end{align}
where \(M_{t\wedge \sigma_R}\) is a martingale with zero expectation. Taking expectations gives
\begin{equation}
\label{eq:appC_lyap_expectation_pre_gronwall}
\mathbb E[V_\infty(x_{t\wedge \sigma_R})]
\le
V_\infty(x_0)+C t + C\int_0^t \mathbb E[V_\infty(x_{s\wedge \sigma_R})]\,ds.
\end{equation}
By Gr\"onwall's inequality,
\begin{equation}
\label{eq:appC_lyap_expectation_gronwall}
\mathbb E[V_\infty(x_{t\wedge \sigma_R})]
\le e^{Ct}\bigl(V_\infty(x_0)+Ct\bigr),\qquad t\ge0.
\end{equation}
The right-hand side is independent of \(R\).

\medskip
\noindent\textbf{Step 4: No explosion to infinity before boundary hitting.}
We show that the process cannot lose well-posedness in the interior by escaping to infinity before hitting \(\partial D\). Suppose, for contradiction, that there exists \(T>0\) such that
\[
\mathbb P\bigl(\tau_* \le T,\ \tau_*<\tau_{\partial D},\ \sup_{t<\tau_*}\|x_t\|=\infty \bigr)>0.
\]
On the event inside the probability, \(\sigma_R\le \tau_*\wedge T\) for all sufficiently large \(R\), and \(\|x_{\sigma_R}\|=R\). Since \(\tau_*<\tau_{\partial D}\), the process remains in \(D\) up to \(\tau_*\), so \(\|x_{\sigma_R}\|\to\infty\) within \(D\). By assumption \textbf{(A2)},
\[
V_\infty(x_{\sigma_R})\to\infty
\quad\text{on that event.}
\]
Thus \(V_\infty(x_{T\wedge \sigma_R})\to\infty\) on a set of positive probability. By Fatou's lemma, \(\sup_R \mathbb E[V_\infty(x_{T\wedge \sigma_R})]=\infty\), contradicting the uniform estimate \eqref{eq:appC_lyap_expectation_gronwall}. Therefore,
\begin{equation}
\label{eq:appC_no_infinity_escape}
\mathbb P\bigl(\tau_*<\infty,\ \tau_*<\tau_{\partial D},\ \sup_{t<\tau_*}\|x_t\|=\infty\bigr)=0.
\end{equation}

\medskip
\noindent\textbf{Step 5: If every compact set is eventually left while the norm remains bounded, the path approaches $\partial D$.}

It remains to rule out the possibility that
\[
\tau_*<\infty,\qquad \tau_*<\tau_{\partial D},\qquad 
\sup_{t<\tau_*}\|x_t\|<\infty .
\]
On this event, since $\tau_*=\lim_{n\to\infty}\tau_n$ with
$\tau_n=\inf\{t:\,x_t\notin K_n\}$ and $K_n\Subset D$ increasing to $D$,
the trajectory leaves each compact $K_n$ before time $\tau_*$. Hence we may
choose times $t_n\uparrow\tau_*$ such that
\[
x_{t_n}\notin K_n .
\]

Because $\sup_{t<\tau_*}\|x_t\|<\infty$, the sequence $(x_{t_n})\subset D$
is norm-bounded. Moreover, for any compact $K\Subset D$ there exists $N$
such that $K\subset K_N$; therefore $x_{t_n}\notin K$ for all $n\ge N$.
Thus $(x_{t_n})$ is eventually outside every compact subset of $D$.

By Lemma~\ref{lem:appC_boundary_approach}, any norm-bounded sequence in $D$
that eventually leaves every compact subset of $D$ admits a subsequence whose
distance to $\partial D$ tends to zero. Consequently, some subsequence of
$(x_{t_n})$ approaches $\partial D$.

Since $x_t$ is continuous on $[0,\tau_*)$, it follows that
\[
\inf_{0\le t<\tau_*}\operatorname{dist}(x_t,\partial D)=0,
\]
that is, the path approaches the boundary as $t\uparrow\tau_*$.

This contradicts the event $\tau_*<\tau_{\partial D}$, which asserts that no
boundary hit or exit occurs before time $\tau_*$. Hence any loss of compactness
in $D$ while the norm remains bounded must correspond to approach to the
boundary rather than interior breakdown.

Combining this with \eqref{eq:appC_no_infinity_escape}, we conclude
\[
\mathbb P\bigl(\tau_*<\infty,\ \tau_*<\tau_{\partial D}\bigr)=0.
\]
This completes the proof.
\end{proof}

\subsection{Complete proof of Proposition~\ref{prop:barrier_invariance_paperA}}
\label{appC:proof_proposition_barrier}

\begin{proof}
Let \((K_n)\) and \((\tau_n)\) be the exhaustion and exit times defined in \eqref{eq:appC_exhaustion}--\eqref{eq:appC_tau_n}. By Theorem~\ref{thm:maximal_no_interior_explosion_paperA}, a unique maximal strong solution exists up to \(\tau_*\), and any finite lifetime obstruction can only occur through boundary hitting/approach.

\medskip
\noindent\textbf{Step 1: Localized It\^o estimate for the barrier Lyapunov function.}
Fix \(n\ge1\) and \(t\ge0\). Since \(x_{s\wedge \tau_n}\in K_n\) for all \(s\), It\^o's formula for \(V_b(x_{t\wedge \tau_n})\) yields
\begin{align}
V_b(x_{t\wedge \tau_n})
&=
V_b(x_0)+\int_0^{t\wedge \tau_n}\mathcal LV_b(x_s)\,ds + M_{t\wedge \tau_n}^{(b)} \notag\\
&\le
V_b(x_0)+C_b\int_0^{t\wedge \tau_n}\bigl(1+V_b(x_s)\bigr)\,ds + M_{t\wedge \tau_n}^{(b)},
\label{eq:appC_barrier_ito}
\end{align}
where \(M_{t\wedge \tau_n}^{(b)}\) is a true martingale.

Taking expectations,
\begin{equation}
\label{eq:appC_barrier_expect_pre_gronwall}
\mathbb E[V_b(x_{t\wedge \tau_n})]
\le
V_b(x_0)+C_b t + C_b\int_0^t \mathbb E[V_b(x_{s\wedge \tau_n})]\,ds.
\end{equation}
Applying Gr\"onwall,
\begin{equation}
\label{eq:appC_barrier_expect_bound}
\mathbb E[V_b(x_{t\wedge \tau_n})]
\le e^{C_b t}\bigl(V_b(x_0)+C_b t\bigr),\qquad \forall n\ge1,\ \forall t\ge0.
\end{equation}
Hence
\begin{equation}
\label{eq:appC_barrier_uniform_bound}
\sup_{n\ge1}\mathbb E[V_b(x_{t\wedge \tau_n})]<\infty\qquad \text{for each fixed }t\ge0.
\end{equation}

\medskip
\noindent\textbf{Step 2: Contradiction argument if boundary is hit with positive probability.}
Assume for contradiction that \(\mathbb P(\tau_{\partial D}<\infty)>0\). Then there exists \(T>0\) such that
\[
\mathbb P(\tau_{\partial D}\le T)>0.
\]
On the event \(\{\tau_{\partial D}\le T\}\), continuity of sample paths implies \(x_t\to x_{\tau_{\partial D}}\in\partial D\) as \(t\uparrow \tau_{\partial D}\) from below (or equivalently \(x_t\to \partial D\) within \(D\)). By the barrier blow-up property \textbf{(B1)},
\[
V_b(x_t)\to\infty\qquad \text{as }t\uparrow \tau_{\partial D}\ \text{on }\{\tau_{\partial D}\le T\}.
\]
Because \(\tau_n\uparrow \tau_*\) and Theorem~\ref{thm:maximal_no_interior_explosion_paperA} implies \(\tau_*\ge \tau_{\partial D}\) a.s., the stopping sequence \(T\wedge \tau_n\) approaches \(T\wedge \tau_{\partial D}\) from below on \(\{\tau_{\partial D}\le T\}\). In particular,
\begin{equation}
\label{eq:appC_barrier_divergence}
V_b\bigl(x_{T\wedge \tau_n}\bigr)\to\infty
\qquad\text{on }\{\tau_{\partial D}\le T\}.
\end{equation}
Applying Fatou's lemma to the nonnegative random variables \(V_b(x_{T\wedge \tau_n})\mathbf 1_{\{\tau_{\partial D}\le T\}}\), we obtain
\begin{align}
\infty
&=
\mathbb E\!\left[\liminf_{n\to\infty}V_b(x_{T\wedge \tau_n})\mathbf 1_{\{\tau_{\partial D}\le T\}}\right] \notag\\
&\le
\liminf_{n\to\infty}\mathbb E\!\left[V_b(x_{T\wedge \tau_n})\mathbf 1_{\{\tau_{\partial D}\le T\}}\right]
\le
\liminf_{n\to\infty}\mathbb E[V_b(x_{T\wedge \tau_n})],
\label{eq:appC_fatou_contradiction}
\end{align}
contradicting the uniform bound \eqref{eq:appC_barrier_uniform_bound} at time \(T\).

Therefore \(\mathbb P(\tau_{\partial D}\le T)=0\) for every \(T>0\), hence
\[
\mathbb P(\tau_{\partial D}<\infty)=0.
\]

\medskip
\noindent\textbf{Step 3: Globality and positivity invariance.}
Since boundary hitting is a.s.\ impossible and Theorem~\ref{thm:maximal_no_interior_explosion_paperA} excludes interior breakdown before boundary hitting, the maximal solution extends for all times:
\[
\tau_*=\infty\qquad\text{a.s.}
\]
Moreover,
\[
x_t\in D\qquad \text{for all }t\ge0\quad\text{a.s.}
\]
This proves positivity invariance.
\end{proof}

\subsection{Remarks on verifiability of the assumptions in the present model}
\label{appC:assumption_verification_remarks}

\begin{remark}[Local regularity of \(b\) and \(a\) on \(D\)]
\label{rmk:appC_ba_smooth}
For the R--M demographic diffusion constructed in Sections~\ref{sec:ctmc_cme_paperA}--\ref{sec:diffusion_covariance_paperA}, the drift \(b(x)\) and covariance \(a(x)\) are polynomial/rational combinations of \(N,P\) with denominator \(1+N\). Since \(N>0\) on \(D=(0,\infty)^2\), the denominator never vanishes on \(D\), and therefore \(b\) and \(a\) are smooth on \(D\) (in particular, locally Lipschitz on every \(K\Subset D\)).
\end{remark}

\begin{remark}[Local regularity of common factorizations \(\Sigma\)]
\label{rmk:appC_sigma_local_lipschitz}
Two coefficient factorizations used in the paper fit assumption \textbf{(A1)} on precompact subsets \(K\Subset D\):
\begin{enumerate}[label=(\roman*)]
\item Event-based factorization.
If
\[
\Sigma_K(x)=\frac{1}{\sqrt\Omega}\,S\,\sqrt{\mathrm{diag}(f(x))},
\]
then on \(K\Subset D\), each intensity \(f_j(x)\) is smooth and bounded away from singularities; in the present model the relevant \(f_j\) are nonnegative smooth functions on \(D\). Hence \(\sqrt{f_j(x)}\) is locally Lipschitz on \(K\) (in particular when \(f_j\) is bounded away from \(0\), and more generally for the explicit polynomial/rational intensities used here on compacts).
\item Matrix square-root factorization.
If \(\Sigma(x)\) is chosen as a measurable square root of \(a(x)\) (e.g., Cholesky where \(a(x)\) is positive definite), then one typically works locally on compact subsets on which rank/signature does not change. On such sets, \(\Sigma\) can be chosen locally Lipschitz. For the well-posedness statements in Section~\ref{sec:wellposed_paperA}, only local Lipschitz regularity on each \(K\Subset D\) is required.
\end{enumerate}
\end{remark}

\begin{remark}[Why boundary degeneracy does not conflict with the open-domain theorem]
\label{rmk:appC_boundary_degeneracy}
Demographic diffusions are often degenerate at \(\partial D\) because reaction intensities vanish as \(N\downarrow0\) and/or \(P\downarrow0\). This does not contradict assumption \textbf{(A1)}, which is imposed only on precompact subsets \(K\Subset D\), i.e.\ strictly away from the boundary. The boundary behavior is handled separately through the maximal-solution formulation and, when needed, the barrier-Lyapunov condition of Proposition~\ref{prop:barrier_invariance_paperA}.
\end{remark}

\subsection{Abstract lemma: two-stage Lyapunov architecture on open domains}
\label{appC:abstract_template}

The arguments above can be summarized in a reusable form for boundary-degenerate ecological diffusions on open domains.

\begin{lemma}[Two-stage Lyapunov template on an open domain]
\label{lem:appC_two_stage_template}
Let \(D\subset\mathbb R^d\) be open, and consider
\[
dX_t=b(X_t)\,dt+\Sigma(X_t)\,dW_t,\qquad X_0\in D,
\]
with coefficients locally Lipschitz and locally bounded on every precompact \(K\Subset D\). Assume:
\begin{enumerate}[label=(\alph*)]
\item there exists \(V_\infty\in C^2(D;[0,\infty))\) with \(V_\infty(x)\to\infty\) as \(\|x\|\to\infty\) in \(D\), and
\[
\mathcal LV_\infty\le C_\infty(1+V_\infty);
\]
\item optionally, there exists \(V_b\in C^2(D;[0,\infty))\) with \(V_b(x)\to\infty\) as \(x\to\partial D\) within \(D\), and (locally on an exhaustion) 
\[
\mathcal LV_b\le C_b(1+V_b).
\]
\end{enumerate}
Then:
\begin{enumerate}[label=(\roman*),leftmargin=1.8em]
\item under (a), there exists a unique maximal strong solution, and no loss of well-posedness occurs in the interior before boundary hitting;
\item under (a)+(b), the boundary is a.s.\ not hit in finite time, hence the solution is global and remains in \(D\) for all \(t\ge0\).
\end{enumerate}
\end{lemma}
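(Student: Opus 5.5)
The plan is to reduce Lemma~\ref{lem:appC_two_stage_template} to the arguments already given for Theorem~\ref{thm:maximal_no_interior_explosion_paperA} and Proposition~\ref{prop:barrier_invariance_paperA}. The only genuinely two-dimensional ingredient in those proofs was the concrete exhaustion and Lemma~\ref{lem:appC_boundary_approach}. Both transfer verbatim to an arbitrary open $D\subset\mathbb R^d$. For the exhaustion we take
\[
K_n:=\{x\in D:\ \operatorname{dist}(x,\partial D)\ge 1/n,\ \|x\|\le n\},
\]
which is compact, satisfies $K_n\subset\operatorname{int}K_{n+1}$, and has union $D$. In the Bolzano--Weierstrass argument, a limit point of a bounded sequence that eventually leaves every compact subset of $D$ must lie in $\overline D\setminus D=\partial D$; this uses only that $D$ is open.

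For part (i) I would follow Steps~1--5 of the proof of Theorem~\ref{thm:maximal_no_interior_explosion_paperA} in order.

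\begin{enumerate}
\item On each $K_n$, extend $(b,\Sigma)$ to globally Lipschitz, bounded coefficients. McShane extension componentwise works in any dimension.
\item Solve the extended SDE and stop it at the exit time $\tau_n$ from $K_n$.
\item Patch these solutions by pathwise uniqueness. Two solutions agree up to the first exit from $K_n$, by a Gr\"onwall estimate on $\mathbb E\|X^{(n)}_{t\wedge\tau}-X^{(m)}_{t\wedge\tau}\|^2$. This gives a maximal strong solution on $[0,\tau_*)$ with $\tau_*=\lim_n\tau_n$.
\item Obtain the Gr\"onwall bound $\mathbb E[V_\infty(x_{t\wedge\sigma_R})]\le e^{C_\infty t}(V_\infty(x_0)+C_\infty t)$. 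Here I will localize with $\sigma_R\wedge\tau_n$ rather than $\sigma_R$ alone, so that the stochastic integral is a true martingale on a compact set. Letting $n\to\infty$ by Fatou then removes $\tau_n$.
\item Use Fatou and $V_\infty\to\infty$ at infinity to exclude escape to infinity before $\tau_*$ with $\tau_*<\tau_{\partial D}$.
\item In the norm-bounded case, the generalized geometric lemma forces $\inf_{t<\tau_*}\operatorname{dist}(x_t,\partial D)=0$. This identifies $\tau_*$ with $\tau_{\partial D}$, understood as the boundary-approach time.
\end{enumerate}

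For part (ii) I would repeat the barrier argument of Proposition~\ref{prop:barrier_invariance_paperA}. Apply It\^o to $V_b(x_{t\wedge\tau_n})$; the process stays in $K_n$, so the local generator bound applies and the martingale term vanishes in expectation. Gr\"onwall then gives
\[
\sup_n\mathbb E[V_b(x_{T\wedge\tau_n})]<\infty .
\]
On the event $\{\tau_*\le T\}$, part (i) together with the bound for $V_\infty$ shows that the path approaches $\partial D$ along $\tau_n\uparrow\tau_*$. I would sharpen this to $\operatorname{dist}(x_{\tau_n},\partial D)\to0$ by choosing the exhaustion so that exit from $K_n$ with bounded norm means reaching distance $1/n$. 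The barrier property then forces $V_b(x_{T\wedge\tau_n})\to\infty$ on that event, and Fatou gives a contradiction unless $\mathbb P(\tau_*\le T)=0$. Letting $T\to\infty$ gives $\tau_*=\infty$ and $x_t\in D$ for all $t$ almost surely.

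The main obstacle is the precise meaning of $\tau_{\partial D}$ for a solution defined only on $[0,\tau_*)$: the path may approach the boundary without a limit, and it may only accumulate there. I would handle this by working throughout with $\tau_*$ and the exhaustion times $\tau_n$ instead of literal hitting times. The key step is to show that on $\{\tau_*<\infty\}$ one has, almost surely, either $\|x_{\tau_n}\|\to\infty$ or $\operatorname{dist}(x_{\tau_n},\partial D)\to0$ along the full sequence, not merely a subsequence. With the $K_n$ above this is immediate, since $x_{\tau_n}\in\partial K_n$, so $\|x_{\tau_n}\|=n$ or $\operatorname{dist}(x_{\tau_n},\partial D)=1/n$. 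The first alternative is excluded on a full-measure set by (a), and the second is exactly what (b) needs.
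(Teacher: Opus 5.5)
Your proposal is correct and follows the same route as the paper: for (i) the paper's proof of this lemma just defers to the localization/patching and $V_\infty$--Gr\"onwall argument of Theorem~\ref{thm:maximal_no_interior_explosion_paperA}, and for (ii) to the barrier--Fatou contradiction of Proposition~\ref{prop:barrier_invariance_paperA}. Your additions are sound tightenings of details the paper leaves implicit rather than a different argument: the exhaustion $K_n=\{x\in D:\operatorname{dist}(x,\partial D)\ge 1/n,\ \|x\|\le n\}$ for general open $D$, the extra stopping at $\tau_n$ so the stochastic integral is a true martingale, and working with $\tau_*$ and $x_{\tau_n}\in\partial K_n$ instead of a literal hitting time.
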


\begin{proof}
Part (i) follows the localization/patching and \(V_\infty\)-Lyapunov estimate used in the proof of Theorem~\ref{thm:maximal_no_interior_explosion_paperA}; part (ii) follows by the barrier contradiction argument used in the proof of Proposition~\ref{prop:barrier_invariance_paperA}.
\end{proof}

Lemma~\ref{lem:appC_two_stage_template} is not needed for any statement in the main text, but it clarifies the general structure of the two-stage Lyapunov architecture and may be useful for extensions to other demographic diffusions and food-web models.

\end{appendices}

\bibliography{reference}

@article{grunert_evolutionarily_2021,
	title = {Evolutionarily stable strategies in stable and periodically fluctuating populations: the rosenzweig{\textendash}{MacArthur} predator{\textendash}prey model},
	volume = {118},
	issn = {0027-8424, 1091-6490},
	shorttitle = {Evolutionarily stable strategies in stable and periodically fluctuating populations},
	doi = {10.1073/pnas.2017463118},
	language = {en},
	number = {4},
	journal = {Proceedings of the National Academy of Sciences},
	author = {Grunert, K. and Holden, H. and Jakobsen, E. R. and Stenseth, N. C.},
	year = {2021},
	pages = {e2017463118},
}

@article{beay_stability_2019,
	title = {Stability of a stage-structure rosenzweig-{MacArthur} model incorporating holling type-{II} functional response},
	volume = {546},
	issn = {1757-8981, 1757-899X},
	doi = {10.1088/1757-899X/546/5/052017},
	number = {5},
	journal = {IOP Conference Series: Materials Science and Engineering},
	author = {Beay, L. K. and Suryanto, A. and Darti, I. and {Trisilowati}},
	year = {2019},
	pages = {052017},
}

@article{gard_persistence_1984,
	title = {Persistence in stochastic food web models},
	volume = {46},
	copyright = {http://www.springer.com/tdm},
	issn = {0092-8240, 1522-9602},
	doi = {10.1007/BF02462011},
	language = {en},
	number = {3},
	journal = {Bulletin of Mathematical Biology},
	author = {Gard, T. C.},
	year = {1984},
	pages = {357--370},
}

@article{braumann_ito_2007,
	title = {It{\^o} versus {Stratonovich} calculus in random population growth},
	volume = {206},
	copyright = {https://www.elsevier.com/tdm/userlicense/1.0/},
	issn = {00255564},
	doi = {10.1016/j.mbs.2004.09.002},
	language = {en},
	number = {1},
	journal = {Mathematical Biosciences},
	author = {Braumann, C. A.},
	year = {2007},
	pages = {81--107},
}

@article{yuan_bifurcation_2023,
	title = {Bifurcation and chaotic behavior in stochastic {Rosenzweig}{\textendash}{MacArthur} prey{\textendash}predator model with non-{Gaussian} stable {L{\'e}vy} noise},
	volume = {150},
	issn = {00207462},
	doi = {10.1016/j.ijnonlinmec.2022.104339},
	language = {en},
	journal = {International Journal of Non-Linear Mechanics},
	author = {Yuan, S. and Wang, Z.},
	year = {2023},
	pages = {104339},
}

@article{barraquand_no_2023,
	title = {No sensitivity to functional forms in the {Rosenzweig}{\textendash}{MacArthur} model with strong environmental stochasticity},
	volume = {572},
	issn = {00225193},
	doi = {10.1016/j.jtbi.2023.111566},
	language = {en},
	journal = {Journal of Theoretical Biology},
	author = {Barraquand, F.},
	year = {2023},
	pages = {111566},
}

@article{huang_stochastic_2021,
	title = {A stochastic predator{\textendash}prey model with {Holling} {II} increasing function in the predator},
	volume = {15},
	issn = {1751-3758, 1751-3766},
	doi = {10.1080/17513758.2020.1859146},
	language = {en},
	number = {1},
	journal = {Journal of Biological Dynamics},
	author = {Huang, Y. and Shi, W. and Wei, C. and Zhang, S.},
	year = {2021},
	pages = {1--18},
}

@article{du_conditions_2016,
	title = {Conditions for permanence and ergodicity of certain stochastic predator{\textendash}prey models},
	volume = {53},
	copyright = {https://www.cambridge.org/core/terms},
	issn = {0021-9002, 1475-6072},
	doi = {10.1017/jpr.2015.18},
	language = {en},
	number = {1},
	journal = {Journal of Applied Probability},
	author = {Du, N. H. and Nguyen, D. H. and Yin, G. G.},
	year = {2016},
	pages = {187--202},
}

@article{qi_threshold_2021,
	title = {Threshold behavior of a stochastic predator{\textendash}prey system with prey refuge and fear effect},
	volume = {113},
	issn = {08939659},
	doi = {10.1016/j.aml.2020.106846},
	language = {en},
	journal = {Applied Mathematics Letters},
	author = {Qi, H. and Meng, X.},
	year = {2021},
	pages = {106846},
}

@article{yasin_spatio-temporal_2023,
	title = {Spatio-temporal numerical modeling of stochastic predator-prey model},
	volume = {13},
	issn = {2045-2322},
	doi = {10.1038/s41598-023-28324-6},
	language = {en},
	number = {1},
	journal = {Scientific Reports},
	author = {Yasin, M. W. and Ahmed, N. and Iqbal, M.S. and Raza, A. and Rafiq, M. and Eldin, E. M. T.},
	year = {2023},
	pages = {1990},
}

@article{liu_dynamics_2018,
	title = {Dynamics of a stochastic predator{\textendash}prey model with stage structure for predator and holling type {II} functional response},
	volume = {28},
	issn = {0938-8974, 1432-1467},
	doi = {10.1007/s00332-018-9444-3},
	language = {en},
	number = {3},
	journal = {Journal of Nonlinear Science},
	author = {Liu, Q. and Jiang, D. and Hayat, T. and Alsaedi, A.},
	year = {2018},
	pages = {1151--1187},
}

@article{wu_stochastic_2019,
	title = {Stochastic sensitivity analysis of noise-induced transitions in a predator-prey model with environmental toxins},
	volume = {16},
	issn = {1547-1063},
	doi = {10.3934/mbe.2019104},
	language = {en},
	number = {4},
	journal = {Mathematical Biosciences and Engineering},
	author = {Wu, D. and Wang, H. and Yuan, S.},
	year = {2019},
	pages = {2141--2153},
}

@article{chatterjee_predatorprey_2024,
	title = {A predator{\textendash}prey model with prey refuge: under a stochastic and deterministic environment},
	volume = {112},
	issn = {0924-090X, 1573-269X},
	shorttitle = {A predator{\textendash}prey model with prey refuge},
	doi = {10.1007/s11071-024-09756-9},
	language = {en},
	number = {15},
	journal = {Nonlinear Dynamics},
	author = {Chatterjee, A. and Abbasi, M. A. and Venturino, E. and Zhen, J. and Haque, M.},
	year = {2024},
	pages = {13667--13693},
}

@article{das_modelling_2021,
	title = {Modelling the effect of resource subsidy on a two-species predator-prey system under the influence of environmental noises},
	volume = {9},
	issn = {2195-268X, 2195-2698},
	doi = {10.1007/s40435-020-00750-8},
	language = {en},
	number = {4},
	journal = {International Journal of Dynamics and Control},
	author = {Das, A. and Samanta, G. P.},
	year = {2021},
	pages = {1800--1817},
}

@article{dobramysl_environmental_2013,
	title = {Environmental versus demographic variability in stochastic predator{\textendash}prey models},
	volume = {2013},
	copyright = {http://iopscience.iop.org/info/page/text-and-data-mining},
	issn = {1742-5468},
	doi = {10.1088/1742-5468/2013/10/P10001},
	number = {10},
	journal = {Journal of Statistical Mechanics: Theory and Experiment},
	author = {Dobramysl, U. and T{\"a}uber, U. C.},
	year = {2013},
	pages = {P10001},
}

@inproceedings{naghnaeian_robust_2017,
	address = {Mauna Lani Resort, HI, USA},
	title = {Robust moment closure method for the chemical master equation},
	isbn = {978-1-5090-2182-6},
	doi = {10.1109/CCTA.2017.8062585},
	booktitle = {2017 {IEEE} {Conference} on {Control} {Technology} and {Applications} ({CCTA})},
	publisher = {IEEE},
	author = {Naghnaeian, M. and Del Vecchio, D.},
	year = {2017},
	pages = {967--972},
}

@article{wang_analysis_2025,
	title = {Analysis and mean-field limit of a hybrid {PDE}-{ABM} modeling angiogenesis-regulated resistance evolution},
	volume = {13},
	issn = {2227-7390},
	doi = {10.3390/math13172898},
	language = {en},
	number = {17},
	journal = {Mathematics},
	author = {Wang, L. S. and Yu, J. and Li, S. and Liu, Z.},
	year = {2025},
	pages = {2898},
}

@article{butler_fluctuation-driven_2011,
	title = {Fluctuation-driven turing patterns},
	volume = {84},
	copyright = {http://link.aps.org/licenses/aps-default-license},
	issn = {1539-3755, 1550-2376},
	doi = {10.1103/PhysRevE.84.011112},
	language = {en},
	number = {1},
	journal = {Physical Review E},
	author = {Butler, T. and Goldenfeld, N.},
	year = {2011},
	pages = {011112},
}

@article{bashkirtseva_stochastic_2018,
	title = {Stochastic sensitivity analysis of noise-induced extinction in the ricker model with delay and {Allee} effect},
	volume = {80},
	issn = {0092-8240, 1522-9602},
	doi = {10.1007/s11538-018-0422-6},
	language = {en},
	number = {6},
	journal = {Bulletin of Mathematical Biology},
	author = {Bashkirtseva, I. and Ryashko, L.},
	year = {2018},
	pages = {1596--1614},
}

@article{bashkirtseva_stochastic_2004,
	title = {Stochastic sensitivity of {3D}-cycles},
	volume = {66},
	copyright = {https://www.elsevier.com/tdm/userlicense/1.0/},
	issn = {03784754},
	doi = {10.1016/j.matcom.2004.02.021},
	language = {en},
	number = {1},
	journal = {Mathematics and Computers in Simulation},
	author = {Bashkirtseva, I. A. and Ryashko, L. B.},
	year = {2004},
	pages = {55--67},
}

@article{bashkirtseva_stochastic_2014,
	title = {Stochastic sensitivity of the closed invariant curves for discrete-time systems},
	volume = {410},
	issn = {03784371},
	doi = {10.1016/j.physa.2014.05.037},
	language = {en},
	journal = {Physica A: Statistical Mechanics and its Applications},
	author = {Bashkirtseva, I. and Ryashko, L.},
	year = {2014},
	pages = {236--243},
}

@article{carpenter_early_2011,
	title = {Early warnings of regime shifts: a whole-ecosystem experiment},
	volume = {332},
	copyright = {http://www.sciencemag.org/site/feature/contribinfo/prep/license.xhtml},
	issn = {0036-8075, 1095-9203},
	shorttitle = {Early warnings of regime shifts},
	doi = {10.1126/science.1203672},
	language = {en},
	number = {6033},
	journal = {Science},
	author = {Carpenter, S. R. and Cole, J. J. and Pace, M. L. and Batt, R. and Brock, W. A. and Cline, T.},
	year = {2011},
	pages = {1079--1082},
}

@article{lenton_early_2011,
	title = {Early warning of climate tipping points},
	volume = {1},
	copyright = {http://www.springer.com/tdm},
	issn = {1758-678X, 1758-6798},
	doi = {10.1038/nclimate1143},
	language = {en},
	number = {4},
	journal = {Nature Climate Change},
	author = {Lenton, T. M.},
	year = {2011},
	pages = {201--209},
}

@article{boettiger_no_2013,
	title = {No early warning signals for stochastic transitions: insights from large deviation theory},
	volume = {280},
	issn = {0962-8452, 1471-2954},
	shorttitle = {No early warning signals for stochastic transitions},
	doi = {10.1098/rspb.2013.1372},
	language = {en},
	number = {1766},
	urldate = {2026-02-23},
	journal = {Proceedings of the Royal Society B: Biological Sciences},
	author = {Boettiger, C. and Hastings, A.},
	year = {2013},
	pages = {20131372},
}

@article{feller_diffusion_1954,
	title = {Diffusion processes in one dimension},
	volume = {77},
	issn = {0002-9947, 1088-6850},
	doi = {10.1090/S0002-9947-1954-0063607-6},
	language = {en},
	number = {1},
	journal = {Transactions of the American Mathematical Society},
	author = {Feller, W.},
	year = {1954},
	pages = {1--31},
}

@incollection{peskir_boundary_2015,
	address = {Cham},
	title = {On boundary behaviour of one-dimensional diffusions: from brown to feller and beyond},
	isbn = {978-3-319-16855-5 978-3-319-16856-2},
	shorttitle = {On boundary behaviour of one-dimensional diffusions},
	url = {https://personalpages.manchester.ac.uk/staff/goran.peskir/diffusions.pdf},
	language = {en},
	urldate = {2026-02-23},
	booktitle = {Selected {Papers} {II}},
	publisher = {Springer International Publishing},
	author = {Peskir, G.},
	year = {2015},
	pages = {77--93},
}

@article{pineda-krch_gillespiessa_2009,
	title = {{GillespieSSA}: {A} user-friendly stochastic simulation package for {R}},
	issn = {1756-0357},
	shorttitle = {{GillespieSSA}},
	doi = {10.1038/npre.2009.3673.1},
	language = {en},
	journal = {Nature Precedings},
	author = {Pineda-Krch, M.},
	year = {2009},
}

@article{meyn_stability_1993,
	title = {Stability of markovian processes {III}: foster{\textendash}lyapunov criteria for continuous-time processes},
	volume = {25},
	copyright = {https://www.cambridge.org/core/terms},
	issn = {0001-8678, 1475-6064},
	shorttitle = {Stability of markovian processes {III}},
	doi = {10.2307/1427522},
	language = {en},
	number = {3},
	journal = {Advances in Applied Probability},
	author = {Meyn, S. P. and Tweedie, R. L.},
	year = {1993},
	pages = {518--548},
}

@article{belabbas_rich_2021,
	title = {Rich dynamics in a stochastic predator-prey model with protection zone for the prey and multiplicative noise applied on both species},
	volume = {106},
	issn = {0924-090X, 1573-269X},
	doi = {10.1007/s11071-021-06903-4},
	language = {en},
	number = {3},
	journal = {Nonlinear Dynamics},
	author = {Belabbas, M. and Ouahab, A. and Souna, F.},
	year = {2021},
	pages = {2761--2780},
}

@article{cui_long-term_2026,
	title = {The long-term dynamic behaviors of a stochastic predator-prey model with fear and group defense},
	volume = {34},
	issn = {1531-3492, 1553-524X},
	doi = {10.3934/dcdsb.2025166},
	number = {0},
	journal = {Discrete and Continuous Dynamical Systems - B},
	author = {Cui, Y. and Huang, S. and Li, X. and Shi, S.},
	year = {2026},
	pages = {106--140},
}

@article{zhao_stochastic_2024,
	title = {Stochastic dynamics of coral reef system with stage-structure for crown-of-thorns starfish},
	volume = {181},
	issn = {09600779},
	doi = {10.1016/j.chaos.2024.114629},
	language = {en},
	journal = {Chaos, Solitons \& Fractals},
	author = {Zhao, X. and Liu, L. and Liu, M. and Fan, M.},
	year = {2024},
	pages = {114629},
}

@article{zhan_dynamical_2024,
	title = {Dynamical behavior of a stochastic non-autonomous distributed delay heroin epidemic model with regime-switching},
	volume = {184},
	issn = {09600779},
	doi = {10.1016/j.chaos.2024.115024},
	language = {en},
	journal = {Chaos, Solitons \& Fractals},
	author = {Zhan, J. and Wei, Y.},
	year = {2024},
	pages = {115024},
}

@article{liu_asymptotic_2020,
	title = {Asymptotic stability of a stochastic {May} mutualism system},
	volume = {79},
	issn = {08981221},
	doi = {10.1016/j.camwa.2019.07.022},
	language = {en},
	number = {3},
	journal = {Computers \& Mathematics with Applications},
	author = {Liu, G. and Qi, H. and Chang, Z. and Meng, X.},
	year = {2020},
	pages = {735--745},
}

@article{arditi_coupling_1989,
	title = {Coupling in predator-prey dynamics: {Ratio}-dependence},
	volume = {139},
	copyright = {https://www.elsevier.com/tdm/userlicense/1.0/},
	issn = {00225193},
	shorttitle = {Coupling in predator-prey dynamics},
	doi = {10.1016/S0022-5193(89)80211-5},
	language = {en},
	number = {3},
	journal = {Journal of Theoretical Biology},
	author = {Arditi, R. and Ginzburg, L. R.},
	year = {1989},
	pages = {311--326},
}

@book{allen_introduction_2010,
	address = {Boca Raton, FL},
	edition = {2},
	title = {An introduction to stochastic processes with applications to biology},
	isbn = {978-0-429-18460-4},
	doi = {10.1201/b12537},
	language = {en},
	publisher = {Chapman and Hall/CRC},
	author = {Allen, L. J. S.},
	year = {2010},
}

@article{alonso_stochastic_2007,
	title = {Stochastic amplification in epidemics},
	volume = {4},
	copyright = {https://royalsociety.org/journals/ethics-policies/data-sharing-mining/},
	issn = {1742-5689, 1742-5662},
	doi = {10.1098/rsif.2006.0192},
	language = {en},
	number = {14},
	journal = {Journal of The Royal Society Interface},
	author = {Alonso, D. and McKane, A. J. and Pascual, M.},
	year = {2007},
	pages = {575--582},
}

@article{butler_robust_2009,
	title = {Robust ecological pattern formation induced by demographic noise},
	volume = {80},
	copyright = {http://link.aps.org/licenses/aps-default-license},
	issn = {1539-3755, 1550-2376},
	doi = {10.1103/PhysRevE.80.030902},
	language = {en},
	number = {3},
	journal = {Physical Review E},
	author = {Butler, T. and Goldenfeld, N.},
	year = {2009},
	pages = {030902},
}

@article{boettiger_quantifying_2012,
	title = {Quantifying limits to detection of early warning for critical transitions},
	volume = {9},
	issn = {1742-5689, 1742-5662},
	doi = {10.1098/rsif.2012.0125},
	language = {en},
	number = {75},
	journal = {Journal of The Royal Society Interface},
	author = {Boettiger, C. and Hastings, A.},
	year = {2012},
	pages = {2527--2539},
}

@article{black_stochastic_2012,
	title = {Stochastic formulation of ecological models and their applications},
	volume = {27},
	copyright = {https://www.elsevier.com/tdm/userlicense/1.0/},
	issn = {01695347},
	doi = {10.1016/j.tree.2012.01.014},
	language = {en},
	number = {6},
	journal = {Trends in Ecology \& Evolution},
	author = {Black, A. J. and McKane, A. J.},
	year = {2012},
	pages = {337--345},
}

@article{braumann_variable_2002,
	title = {Variable effort harvesting models in random environments: generalization to density-dependent noise intensities},
	volume = {177-178},
	copyright = {https://www.elsevier.com/tdm/userlicense/1.0/},
	issn = {00255564},
	shorttitle = {Variable effort harvesting models in random environments},
	doi = {10.1016/S0025-5564(01)00110-9},
	language = {en},
	journal = {Mathematical Biosciences},
	author = {Braumann, C. A.},
	year = {2002},
	pages = {229--245},
}

@article{biancalani_noise-induced_2012,
	title = {Noise-induced metastability in biochemical networks},
	volume = {86},
	copyright = {http://link.aps.org/licenses/aps-default-license},
	issn = {1539-3755, 1550-2376},
	doi = {10.1103/PhysRevE.86.010106},
	language = {en},
	number = {1},
	journal = {Physical Review E},
	author = {Biancalani, T. and Rogers, T. and McKane, A. J.},
	year = {2012},
	pages = {010106},
}

@article{cattiaux_quasi-stationary_2009,
	title = {Quasi-stationary distributions and diffusion models in population dynamics},
	volume = {37},
	issn = {0091-1798},
	doi = {10.1214/09-AOP451},
	number = {5},
	journal = {The Annals of Probability},
	author = {Cattiaux, P. and Collet, P. and Lambert, A. and Mart{\'i}nez, S. and M{\'e}l{\'e}ard, S. and San Mart{\'i}n, J.},
	year = {2009},
}

@article{cox_theory_1985,
	title = {A theory of the term structure of interest rates},
	volume = {53},
	issn = {00129682},
	doi = {10.2307/1911242},
	number = {2},
	journal = {Econometrica},
	author = {Cox, J. C. and Ingersoll, J. E. and Ross, S. A.},
	year = {1985},
	pages = {385},
}

@article{dakos_slowing_2008,
	title = {Slowing down as an early warning signal for abrupt climate change},
	volume = {105},
	issn = {0027-8424, 1091-6490},
	doi = {10.1073/pnas.0802430105},
	language = {en},
	number = {38},
	journal = {Proceedings of the National Academy of Sciences},
	author = {Dakos, V. and Scheffer, M. and Van Nes, E. H. and Brovkin, V. and Petoukhov, V. and Held, H.},
	year = {2008},
	pages = {14308--14312},
}

@article{elf_fast_2003,
	title = {Fast evaluation of fluctuations in biochemical networks with the linear noise approximation},
	volume = {13},
	issn = {1088-9051},
	doi = {10.1101/gr.1196503},
	language = {en},
	number = {11},
	journal = {Genome Research},
	author = {Elf, J. and Ehrenberg, M.},
	month = nov,
	year = {2003},
	pages = {2475--2484},
}

@book{ethier_markov_1986,
	address = {New York},
	edition = {1},
	series = {Wiley {Series} in {Probability} and {Statistics}},
	title = {Markov {Processes}: {Characterization} and {Convergence}},
	copyright = {http://doi.wiley.com/10.1002/tdm\_license\_1.1},
	isbn = {978-0-471-08186-9 978-0-470-31665-8},
	shorttitle = {Markov {Processes}},
	doi = {10.1002/9780470316658},
	language = {en},
	publisher = {Wiley},
	author = {Ethier, S. N. and Kurtz, T. G.},
	month = mar,
	year = {1986},
}

@book{ewens_mathematical_2004,
	address = {New York, NY},
	series = {Interdisciplinary {Applied} {Mathematics}},
	title = {Mathematical population genetics},
	volume = {27},
	copyright = {http://www.springer.com/tdm},
	isbn = {978-1-4419-1898-7 978-0-387-21822-9},
	doi = {10.1007/978-0-387-21822-9},
	publisher = {Springer New York},
	author = {Ewens, W. J.},
	year = {2004},
}

@article{feller_parabolic_1952,
	title = {The parabolic differential equations and the associated semi-groups of transformations},
	volume = {55},
	issn = {0003486X},
	doi = {10.2307/1969644},
	number = {3},
	journal = {The Annals of Mathematics},
	author = {Feller, W.},
	year = {1952},
	pages = {468},
}

@book{gardiner_stochastic_2009,
	address = {Berlin Heidelberg},
	edition = {4th ed},
	series = {Springer series in synergetics},
	title = {Stochastic methods: a handbook for the natural and social sciences},
	isbn = {978-3-642-08962-6 978-3-540-70712-7},
	shorttitle = {Stochastic methods},
	language = {eng},
	number = {13},
	publisher = {Springer},
	author = {Gardiner, C. W.},
	year = {2009},
}

@article{gilpin_enriched_1972,
	title = {Enriched predator-prey systems: theoretical stability},
	volume = {177},
	issn = {0036-8075, 1095-9203},
	shorttitle = {Enriched predator-prey systems},
	doi = {10.1126/science.177.4052.902},
	language = {en},
	number = {4052},
	journal = {Science},
	author = {Gilpin, M. E.},
	year = {1972},
	pages = {902--904},
}

@article{gillespie_chemical_2000,
	title = {The chemical {Langevin} equation},
	volume = {113},
	issn = {0021-9606, 1089-7690},
	doi = {10.1063/1.481811},
	language = {en},
	number = {1},
	journal = {The Journal of Chemical Physics},
	author = {Gillespie, D. T.},
	year = {2000},
	pages = {297--306},
}

@inproceedings{hofrichter_diffusion_2014,
	title = {On the diffusion approximation of wright{\textendash}fisher models with several alleles and loci and its geometry},
	url = {https://api.semanticscholar.org/CorpusID:2240497},
	urldate = {2026-02-25},
	author = {Hofrichter, J.},
	year = {2014},
}

@book{khasminskii_stochastic_2012,
	address = {Berlin, Heidelberg},
	series = {Stochastic {Modelling} and {Applied} {Probability}},
	title = {Stochastic stability of differential equations},
	volume = {66},
	copyright = {https://www.springernature.com/gp/researchers/text-and-data-mining},
	isbn = {978-3-642-23279-4 978-3-642-23280-0},
	doi = {10.1007/978-3-642-23280-0},
	language = {en},
	publisher = {Springer Berlin Heidelberg},
	author = {Khasminskii, R.},
	year = {2012},
}

@article{liu_survival_2011,
	title = {Survival analysis of stochastic competitive models in a polluted environment and stochastic competitive exclusion principle},
	volume = {73},
	copyright = {http://www.springer.com/tdm},
	issn = {0092-8240, 1522-9602},
	doi = {10.1007/s11538-010-9569-5},
	language = {en},
	number = {9},
	journal = {Bulletin of Mathematical Biology},
	author = {Liu, M. and Wang, K. and Wu, Q.},
	year = {2011},
	pages = {1969--2012},
}

@book{mao_stochastic_2008,
	address = {Chichester},
	edition = {2nd ed},
	title = {Stochastic differential equations and applications},
	isbn = {978-1-904275-34-3},
	publisher = {Horwood Pub},
	author = {Mao, X.},
	year = {2008},
}

@book{may_stability_2001,
	address = {Princeton},
	edition = {1st Princeton landmarks in biology ed},
	series = {Princeton landmarks in biology},
	title = {Stability and complexity in model ecosystems},
	isbn = {978-0-691-08861-7},
	publisher = {Princeton University Press},
	author = {May, R. M.},
	year = {2001},
}

@article{mckane_predator-prey_2005,
	title = {Predator-prey cycles from resonant amplification of demographic stochasticity},
	volume = {94},
	copyright = {http://link.aps.org/licenses/aps-default-license},
	issn = {0031-9007, 1079-7114},
	doi = {10.1103/PhysRevLett.94.218102},
	language = {en},
	number = {21},
	journal = {Physical Review Letters},
	author = {McKane, A. J. and Newman, T. J.},
	month = jun,
	year = {2005},
	pages = {218102},
}

@article{newman_extinction_2004,
	title = {Extinction times and moment closure in the stochastic logistic process},
	volume = {65},
	copyright = {https://www.elsevier.com/tdm/userlicense/1.0/},
	issn = {00405809},
	doi = {10.1016/j.tpb.2003.10.003},
	language = {en},
	number = {2},
	journal = {Theoretical Population Biology},
	author = {Newman, T. J. and Ferdy, J.-B. and Quince, C.},
	year = {2004},
	pages = {115--126},
}

@article{nasell_extinction_2001,
	title = {Extinction and quasi-stationarity in the verhulst logistic model},
	volume = {211},
	copyright = {https://www.elsevier.com/tdm/userlicense/1.0/},
	issn = {00225193},
	doi = {10.1006/jtbi.2001.2328},
	language = {en},
	number = {1},
	journal = {Journal of Theoretical Biology},
	author = {N{\r a}sell, I.},
	year = {2001},
	pages = {11--27},
}

@article{ovaskainen_stochastic_2010,
	title = {Stochastic models of population extinction},
	volume = {25},
	copyright = {https://www.elsevier.com/tdm/userlicense/1.0/},
	issn = {01695347},
	doi = {10.1016/j.tree.2010.07.009},
	language = {en},
	number = {11},
	journal = {Trends in Ecology \& Evolution},
	author = {Ovaskainen, O. and Meerson, B.},
	year = {2010},
	pages = {643--652},
}

@article{pinedakrch_tale_2007,
	title = {A tale of two cycles {\textendash} distinguishing quasi-cycles and limit cycles in finite predator{\textendash}prey populations},
	volume = {116},
	issn = {0030-1299, 1600-0706},
	doi = {10.1111/j.2006.0030-1299.14940.x},
	language = {en},
	number = {1},
	journal = {Oikos},
	author = {Pineda-Krch, M. and J. Blok, H. and Dieckmann, U. and Doebeli, M.},
	year = {2007},
	pages = {53--64},
}

@article{roy_stability_2007,
	title = {The stability of ecosystems: {A} brief overview of the paradox of enrichment},
	volume = {32},
	copyright = {http://www.springer.com/tdm},
	issn = {0250-5991, 0973-7138},
	shorttitle = {The stability of ecosystems},
	doi = {10.1007/s12038-007-0040-1},
	language = {en},
	number = {2},
	journal = {Journal of Biosciences},
	author = {Roy, S. and Chattopadhyay, J.},
	year = {2007},
	pages = {421--428},
}

@book{renshaw_modelling_1995,
	address = {Cambridge},
	edition = {1. paperback ed., reprint},
	series = {Cambridge studies in mathematical biology},
	title = {Modelling biological populations in space and time},
	isbn = {978-0-521-44855-0 978-0-521-30388-0},
	number = {11},
	publisher = {Cambridge Univ. Press},
	author = {Renshaw, E.},
	year = {1995},
}

@article{rosenzweig_graphical_1963,
	title = {Graphical representation and stability conditions of predator-prey interactions},
	volume = {97},
	issn = {00030147, 15375323},
	url = {http://www.jstor.org.utk.idm.oclc.org/stable/2458702},
	number = {895},
	urldate = {2026-02-25},
	journal = {The American Naturalist},
	publisher = {[The University of Chicago Press, The American Society of Naturalists]},
	author = {Rosenzweig, M. L. and MacArthur, R. H.},
	year = {1963},
	pages = {209--223},
}

@article{rogers_demographic_2012,
	title = {Demographic noise can lead to the spontaneous formation of species},
	volume = {97},
	issn = {0295-5075, 1286-4854},
	doi = {10.1209/0295-5075/97/40008},
	number = {4},
	journal = {EPL (Europhysics Letters)},
	author = {Rogers, T. and McKane, A. J. and Rossberg, A. G.},
	year = {2012},
	pages = {40008},
}

@article{rosenzweig_paradox_1971,
	title = {Paradox of enrichment: destabilization of exploitation ecosystems in ecological time},
	volume = {171},
	issn = {0036-8075, 1095-9203},
	shorttitle = {Paradox of enrichment},
	doi = {10.1126/science.171.3969.385},
	language = {en},
	number = {3969},
	journal = {Science},
	author = {Rosenzweig, M. L.},
	year = {1971},
	pages = {385--387},
}

@article{scheffer_early-warning_2009,
	title = {Early-warning signals for critical transitions},
	volume = {461},
	copyright = {https://www.springer.com/tdm},
	issn = {0028-0836, 1476-4687},
	doi = {10.1038/nature08227},
	language = {en},
	number = {7260},
	journal = {Nature},
	author = {Scheffer, M. and Bascompte, J. and Brock, W. A. and Brovkin, V. and Carpenter, S. R. and Dakos, V.},
	year = {2009},
	pages = {53--59},
}

@book{van_kampen_stochastic_2007,
	address = {Amsterdam ; Boston},
	edition = {3rd ed},
	series = {North-{Holland} personal library},
	title = {Stochastic processes in physics and chemistry},
	isbn = {978-0-444-52965-7},
	publisher = {Elsevier},
	author = {van Kampen, N. G.},
	year = {2007},
}

@article{wang_analysis_2025-1,
	title = {Analysis framework for stochastic predator{\textendash}prey model with demographic noise},
	volume = {27},
	issn = {25900374},
	doi = {10.1016/j.rinam.2025.100621},
	language = {en},
	journal = {Results in Applied Mathematics},
	author = {Wang, L. S. and Yu, J.},
	year = {2025},
	pages = {100621},
}

@article{zhao_survival_2015,
	title = {Survival and stationary distribution analysis of a stochastic competitive model of three species in a polluted environment},
	volume = {77},
	issn = {0092-8240, 1522-9602},
	doi = {10.1007/s11538-015-0086-4},
	language = {en},
	number = {7},
	journal = {Bulletin of Mathematical Biology},
	author = {Zhao, Y. and Yuan, S. and Ma, J.},
	year = {2015},
	pages = {1285--1326},
}

@article{liu_bidirectional_2025,
	title = {Bidirectional endothelial feedback drives turing-vascular patterning and drug-resistance niches: a hybrid {PDE}-agent-based study},
	volume = {12},
	issn = {2306-5354},
	shorttitle = {Bidirectional endothelial feedback drives turing-vascular patterning and drug-resistance niches},
	doi = {10.3390/bioengineering12101097},
	language = {en},
	number = {10},
	journal = {Bioengineering},
	author = {Liu, Z. and Wang, L. S. and Yu, J. and Zhang, J. and Martel, E. and Li, S.},
	year = {2025},
	pages = {1097},
}

@article{liang_global_2025,
	title = {Global well-posedness and stability of nonlocal damage-structured lineage model with feedback and dedifferentiation},
	volume = {13},
	issn = {2227-7390},
	doi = {10.3390/math13223583},
	language = {en},
	number = {22},
	journal = {Mathematics},
	author = {Liang, Y. and Wang, L. S. and Yu, J. and Liu, Z.},
	year = {2025},
	pages = {3583},
}

@article{wang_algebraicspectral_2026,
	title = {Algebraic{\textendash}spectral thresholds and discrete{\textendash}continuous stability transfer in {Leslie}{\textendash}{Gower} systems},
	volume = {34},
	issn = {2688-1594},
	doi = {10.3934/era.2026013},
	number = {1},
	journal = {Electronic Research Archive},
	author = {Wang, L. S. and Yu, J.},
	year = {2026},
	pages = {251--290},
}

@article{wang_damage-structured_2026,
	title = {A damage-structured {PDE} model of stem cell hierarchies: {The} dual role of dedifferentiation in tissue homeostasis and aging},
	volume = {21},
	issn = {1932-6203},
	shorttitle = {A damage-structured {PDE} model of stem cell hierarchies},
	doi = {10.1371/journal.pone.0335163},
	language = {en},
	number = {2},
	journal = {PLOS One},
	author = {Wang, L. S. and Yu, J. and Liu, Z.},
	editor = {Ghazimoradi, Mohammad H.},
	year = {2026},
	pages = {e0335163},
}

\end{document}